\newcommand{\A}{\mathcal A}
\newcommand{\B}{\mathcal B}
\newcommand{\C}{\mathcal C}
\newcommand{\unfold}{\mathrm{unfold}}
\newcommand{\E}{\mathcal E}
\newcommand{\F}{\mathcal F}
\newcommand{\first}{\mathrm{first}}
\newcommand{\FO}{\mathsf{FO}}
\newcommand{\FOTh}{\mathsf{FOTh}}
\newcommand{\Good}{\mathrm{Good}}
\newcommand{\Img}{\mathrm{Img}}
\renewcommand{\L}{\mathcal L}
\newcommand{\lex}{\mathrm{lex}}
\newcommand{\Loop}{\mathrm{loop}}
\newcommand{\N}{\mathbb N}
\newcommand{\M}{\mathcal M}
\renewcommand{\L}{\mathcal L}
\newcommand{\K}{\mathcal K}
\renewcommand{\P}{{\mathbb P}}
\newcommand{\Q}{\mathbb Q}
\newcommand{\Run}{\mathrm{Run}}
\renewcommand{\S}{\mathcal S}
\newcommand{\Shuf}{\mathrm{Shuf}}
\newcommand{\T}{\mathcal T}
\newcommand{\Z}{\mathbb Z}
\begin{document}
\title{The Isomorphism Problem On Classes of Automatic Structures}
\author{Dietrich Kuske\inst{1} \and Jiamou Liu\inst{2} \and  Markus Lohrey\inst{2,}
\thanks{The second and third author are supported by the DFG  research project GELO.}
\institute{
Centre national de la recherche scientifique (CNRS)
  and Laboratoire Bordelais de Recherche en Informatique
  (LaBRI), Bordeaux, France
\and
Universit\"at
Leipzig, Institut f\"ur Informatik, Germany
\\
\email{kuske@labri.fr, liujiamou@gmail.com,
lohrey@informatik.uni-leipzig.de}}}

\maketitle
\pagestyle{plain}

\begin{abstract}
Automatic structures are finitely presented structures
where the universe and all relations can be recognized by finite automata.
It is known that
the isomorphism problem for automatic structures is complete for $\Sigma^1_1$;
the first existential level of the analytical hierarchy.
Several new results on isomorphism problems for automatic structures are shown in this paper:
(i) The isomorphism problem for automatic equivalence relations is complete for $\Pi^0_1$
(first universal level of the arithmetical hierarchy).
(ii) The isomorphism problem for automatic trees of height $n \geq 2$ is
 $\Pi^0_{2n-3}$-complete.
(iii) The isomorphism problem for automatic linear orders is not arithmetical.
This solves some open questions of Khoussainov, Rubin, and Stephan. 
\end{abstract}

\section{Introduction}

The idea of an automatic structure goes back to B\"uchi and Elgot who
used finite automata to decide, e.g., Presburger
arithmetic~\cite{Elg61}. Automaton decidable theories~\cite{Hod82}
and automatic groups~\cite{EpsCHLPT92} are similar concepts. A
systematic study was initiated by Khoussainov and Nerode~\cite{KhoN95}
who also coined the name ``{\em automatic structure}''.
In essence, a structure is automatic if the elements of the universe can be
represented as strings from a regular language and every relation of the structure
can be recognized by a finite state automaton with several heads that
proceed synchronously.  Automatic structures received
increasing interest over the last
years~\cite{BarKR08,BluG04,IsKhRu02,KhoNRS07,KhoRS05,KusLo09JSL,Rub08}.
One of the main motivations for investigating automatic structures is that
their first-order theories can be decided uniformly (i.e., the input
is an automatic presentation and a first-order sentence).

Automatic structures form a subclass of recursive (or computable) structures.
A structure is recursive, if its domain as well as all relations are recursive
sets of finite words (or naturals). A well-studied problem for
recursive structures is the isomorphism problem, where it is asked whether
two given recursive structures over the same signature
(encoded by Turing-machines for the domain
and all relations) are isomorphic. It is well known that the isomorphism
problem for recursive structures is complete for the first level of the analytical hierarchy
$\Sigma^1_1$. In fact, $\Sigma^1_1$-completeness holds for many subclasses
of recursive structures, e.g., for linear orders, trees, undirected graphs, Boolean
algebras, Abelian $p$-groups, see \cite{CaKni06,GonKn02}.
$\Sigma_1^1$-completeness of the isomorphism problem for a class of recursive
structures implies non-existence of a good classification
(in the sense of \cite{CaKni06}) for that class \cite{CaKni06}.

In \cite{KhoNRS07}, it was shown that also for automatic structures the
isomorphism problem is $\Sigma^1_1$-complete. By a direct interpretation,
it follows that for the following classes the isomorphism problem is
still $\Sigma^1_1$-complete \cite{Nie07}: automatic successor trees, automatic undirected
graphs, automatic commutative monoids, automatic partial orders, automatic
lattices of height 4, and automatic 1-ary functions.
On the other hand, the isomorphism problem is decidable for automatic ordinals
\cite{KhoRS05} and automatic Boolean algebras \cite{KhoNRS07}.
An intermediate class is the class of all locally-finite automatic graphs, for
which the isomorphism problem is complete for $\Pi^0_3$ (third level
of the arithmetical hierarchy\footnote{For background on the arithmetical 
hierarchy see, e.g., \cite{Rogers}.}) \cite{Rub04}.

For many interesting classes of automatic structures, the exact status of the isomorphism problem
is open. In the recent survey \cite{Rub08} it was asked for instance, whether
the isomorphism problem is decidable for automatic equivalence relations and
automatic linear orders.
For the latter class, this question was already asked in \cite{KhoRS05}.
In this paper, we answer these questions.
Our main results are:
\begin{itemize}
\item The isomorphism problem for  automatic equivalence relations
is $\Pi^0_1$-complete.
\item The isomorphism problem for automatic successor trees of finite height
$k \geq 2$ (where the height of a tree is the maximal number of edges along
a maximal path) is $\Pi^0_{2k-3}$-complete.
\item The isomorphism problem for automatic linear orders is
hard for every level of the arithmetical hierarchy.
\end{itemize}
Most hardness proofs for automatic structures, in particular the
$\Sigma^1_1$-hardness proof for the isomorphism problem of automatic
structures from \cite{KhoNRS07}, use transition graphs of
Turing-machines (these graphs are easily seen to be automatic). This
technique seems to fail for inherent reasons, when trying to prove our
new results. The reason is most obvious for equivalence relations and
linear orders. These structures are transitive but the transitive
closure of the transition graph of a Turing-machine cannot be
automatic in general (it's first-order theory is undecidable in
general). Hence, we have to use a new strategy. Our proofs are based
on the undecidability of Hilbert's $10^{th}$ problem. Recall that
Matiyasevich proved that every recursively enumerable set of natural
numbers is Diophantine \cite{Mat93}. This fact was used by Honkala to
show that it is undecidable whether the range of a rational power
series is $\mathbb{N}$ \cite{Hon06}.  Using a similar encoding, we
show that the isomorphism problem for automatic equivalence relations
is $\Pi^0_1$-complete. Next, we extend our technique in order to show
that the isomorphism problem for automatic successor trees of height
$k \geq 2$ is $\Pi^0_{2k-3}$-complete. In some sense, our result
for equivalence relations makes up the induction base $k=2$.
Finally, using a similar but
technically more involved reduction, we can show that the isomorphism
problem for automatic linear orders is hard for every level of the
arithmetical hierarchy. In fact, since our proof is uniform on the
levels in the arithmetical hierarchy, it follows that
the isomorphism problem for automatic linear orders is at least
as hard as true arithmetic (the first-order theory of
$(\mathbb{N};+,\times)$).  
At the moment it remains open whether the isomorphism problem for
automatic linear orders is $\Sigma^1_1$-complete.

\section{Preliminaries} \label{sec:prelim}

Let $\N_+ = \{1,2,3,\ldots\}$. Let $p(x_1,\ldots,x_n) \in \N[x_1,\ldots,x_n]$
be a polynomial with non-negative integer coefficients.
We define
$$
\Img_+(p) = \{ p(y_1,\ldots,y_n) \mid y_1,\ldots,y_n \in \N_+\}.
$$
If $p$ is not the zero-polynomial, then $\Img_+(p) \subseteq \N_+$.

Details on the arithmetical hierarchy can be found for instance
in \cite{Rogers}. With $\Sigma^0_n$ we denote the $n^{th}$
(existential) level of the arithmetical hierarchy; it is the class
of all subsets $A \subseteq \mathbb{N}$ such that
there exists a recursive predicate $P \subseteq \mathbb{N}^{n+1}$
with
$$
A = \{ a \in \mathbb{N}\mid \exists x_1 \forall x_2 \cdots Q x_n : (a,x_1,\ldots,x_n) \in P \},
$$
where $Q = \exists$ ($Q = \forall$) for $n$ odd (even).
The set of complements of $\Sigma^0_n$-sets is denoted
by $\Pi^0_n$. By fixing some effective encoding of strings
by natural numbers, we can talk about $\Sigma^0_n$-sets and 
$\Pi^0_n$-sets of strings over an arbitrary alphabet.
A typical example of a set, which does not belong
to the arithmetical hierarchy is {\em true arithmetic}, i.e., 
the first-order theory of $(\mathbb{N};+,\times)$, which we denote 
by $\FOTh(\mathbb{N};+,\times)$. 

We assume basic terminologies and notations in automata theory 
(see, for example, \cite{HoUl79}).
For a fixed alphabet $\Sigma$, a {\em non-deterministic finite automaton} is a
tuple $\A=(S, \Delta, I, F)$ where $S$ is the set of states,
$\Delta \subseteq S \times \Sigma \times S$
is the transition relation, $I\subseteq S$
is a set of initial states, and $F\subseteq S$ is the set of accepting states.
A {\em run} of $\A$ on a word $u = a_1 a_2 \cdots a_n$ ($a_1, a_2 \ldots,a_n \in \Sigma$)
is a word over $\Delta$ of the form $r = (q_0, a_1, q_1) (q_1, a_2, q_2) \cdots (q_{n-1}, a_n, q_n)$,
where $q_0 \in I$. If moreover $q_n \in F$, then  $r$ is an {\em accepting run} of $\A$ on $u$.
We will only apply these definitions in case $n > 0$, i.e., we will only speak of (accepting)
runs on non-empty words.

Given two automata $\A_1 = (S_1, \Delta_1, I_1, F_1)$ and $\A_2 = (S_1, \Delta_2, I_1, F_1)$ over the same alphabet 
$\Sigma$, we use $\A_1\uplus \A_2$ to denote the automaton obtained 
by taking the disjoint union of $\A_1$ and $\A_2$. Note that for any word $u \in \Sigma^+$, the number of 
accepting runs of $\A_1\uplus \A_2$ on $u$ is equal to the sum of the numbers of accepting runs of $\A_1$ and  $\A_2$ on $u$.
We use $\A_1\times \A_2$ to denote the Cartesian product of $\A_1$ and $\A_2$.
It is the automaton $(S_1 \times S_2, \Delta, I_1 \times I_2, F_1 \times F_2)$, where
$$
\Delta = \{ ((p_1,p_2),\sigma, (q_1,q_2)) \mid (p_1,\sigma,q_1) \in \Delta_1, (p_2,\sigma,q_2) \in \Delta_2\}.
$$
Then, clearly, the number of accepting runs of $\A_1 \times \A_2$ on a word 
$u \in L(A_1) \cap L(A_2)$ is the product of the numbers of accepting runs 
of $\A_1$ and  $\A_2$ on $u$.
In particular, if $\A_1$ is deterministic, then 
the number of accepting runs of 
$\A_1\times \A_2$ on $u\in L(\A_1) \cap L(A_2)$ is the same as the number of accepting runs of $\A_2$ on $u$.
In the following, if $\A$ is a non-deterministic automaton 
and $D$ is a regular language, we write
$D \uplus \A$ (resp. $D \cap \A$)
for the automaton $\A_D \uplus \A$ (resp. $\A_D \times \A$), where 
$\A_D$ is some deterministic automaton for the language $D$.

We use {\em synchronous $n$-tape automata} to recognize 
$n$-ary relations. Such automata have $n$ input tapes, each of which
contains one of the input words. The $n$ tapes are read in parallel until all input words are processed.
Formally, let $\Sigma_\diamond = \Sigma \cup \{\diamond\}$ where $\diamond
\notin \Sigma$. For words $w_1,w_2,\ldots,w_n \in \Sigma^{*}$, their {\em convolution} is a word
$w_1\otimes \cdots \otimes w_n \in (\Sigma_\diamond^n)^*$ with length $\max\{|w_1|,\ldots,|w_n|\}$, and the $k^{th}$ symbol of
$w_1\otimes\cdots \otimes w_n$ is $(\sigma_1,\ldots,\sigma_n)$ where
$\sigma_i$ is the $k^{th}$ symbol of $w_i$ if $k \leq |w_i|$, and
$\sigma_i=\diamond$ otherwise. An $n$-ary relation $R$ is
\emph{FA recognizable} if the set of all convolutions of tuples
$(w_1,\ldots,w_n)\in R$ is a regular
language.

A {\em relational structure} $\S$ consists of a {\em domain} $D$ and
atomic relations on the set~$D$. We will only consider structures with
countable domain. If $\S_1$ and $\S_2$ are two structures over the
same signature and with disjoint domains, then we write $\S_1 \uplus
\S_2$ for the union of the two structures. Hence, when writing $\S_1
\uplus \S_2$, we implicitly express that the domains of $\S_1$ and
$\S_2$ are disjoint. More generally, if $\{ \S_i \mid i \in I \}$ is a
class of pairwise disjoint structures over the same signature, then we
denote with $\uplus\{ \S_i \mid i \in I \}$ the union of these
structures. A structure $\S$ is called {\em automatic} over $\Sigma$
if its domain is a regular subset of $\Sigma^*$ and each of its atomic
relations is FA recognizable; any tuple~$\P$ of automata that accept
the domain and the relations of $\S$ is called an \emph{automatic
  presentation of $\S$}; in this case, we write $\S(\P)$ for $\S$. If
an automatic structure $\S$ is isomorphic to a structure $\S'$, then
$\S$ is called an {\em automatic copy} of $\S'$ and $\S'$ is {\em
  automatically presentable}. In this paper we sometimes abuse the
terminology referring to $\S'$ as simply automatic and calling an
automatic presentation of $\S$ also automatic presentation
of~$\S'$. We also simplify our statements by saying ``given/compute an
automatic structure $\S$'' for ``given/compute an automatic
presentation $\P$ of a structure $\S(\P)$''. The structures $(\N; \leq,
+)$ and $(\Q; \leq)$ are both automatic structures. On the other hand,
$(\N; \times)$ and $(\Q; +)$ have no automatic copies (see
\cite{KhMi08,Rub08} and \cite{Tsankov}).

Consider $\FO+\exists^\infty+\exists^{n,m}$, the first-order
logic extended by the quantifiers $\exists^{\infty}$ (there exist infinitely
many) and $\exists^{n,m} x$ (there exist finitely many and the exact number
is congruent $n$ modulo $m$, where $m,n\in \N$). The
following theorem from \cite{BluGrae00,Hod82,KhoN95,Rub04}
lays out the main motivation for investigating automatic structures.

\begin{theorem}\label{thm:extFOaut} 
  From an automatic presentation $\P$ and a formula $\varphi(\bar
  x)\in \FO+\exists ^{\infty} +\exists^{n,m}$ in the signature of
  $\S(\P)$, one can compute an automaton whose language consists of
  those tuples $\bar{a}$ from $\S(\P)$ that make $\varphi$ true. In
  particular, the $\FO+\exists^\infty+\exists^{n,m}$ theory of any
  automatic structure $\S$ is (uniformly) decidable. 
\end{theorem}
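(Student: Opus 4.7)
The proof proceeds by structural induction on the formula $\varphi$. For atomic formulas $R(\bar{x})$, an automaton is supplied directly by the presentation $\P$, and equality is handled by the standard diagonal automaton on convolutions. Boolean connectives are treated through closure of the class of FA recognizable relations under union, intersection, and complementation relative to the set of convolutions of tuples of domain elements.

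For the standard existential quantifier $\exists x\, \varphi(\bar{y},x)$, I would start from an automaton $\A$ for $\varphi(\bar{y},x)$ reading convolutions $\bar{y}\otimes x$, and project out the $x$-component. The subtlety is that when $|x|$ exceeds the length of the longest entry of $\bar{y}$, the convolution carries trailing positions in which all $\bar{y}$-entries are $\diamond$; after projection these must be stripped, which preserves regularity. Universal quantification then reduces to the existential case via negation.

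The technically interesting cases are the extended quantifiers. For $\exists^{\infty} x\, \varphi(\bar{y},x)$, the key observation is that infinitely many $x$ satisfy $\varphi(\bar{y},x)$ iff $\A$ admits accepting runs on $\bar{y}\otimes x$ for arbitrarily long $x$, which, by the pigeonhole principle on states, is equivalent to the existence of an accepting run containing a productive loop in the $\diamond$-tail of the $\bar{y}$-tape (i.e., a loop reading only $\diamond$ on the $\bar{y}$-coordinates but non-trivial letters on the $x$-coordinate). Such a looping condition can be expressed as reachability on a product of $\A$ with itself and is therefore FA recognizable. For $\exists^{n,m} x\, \varphi(\bar{y},x)$, I would first apply the $\exists^{\infty}$ construction to restrict to those $\bar{y}$ with only finitely many witnesses; then, after determinizing $\A$ on the $x$-coordinate to remove the mismatch between runs and witnesses, build a counting automaton whose state records the number of accepted continuations modulo $m$, computed inductively by a backward sweep over the state graph.

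The hard step, I expect, is the $\exists^{n,m}$ construction: one must carefully separate accepting runs from distinct witnesses $x$ (hence the preliminary determinization) and propagate residues through states in a way compatible with the convolution padding. Once all inductive cases are in place, decidability of the $\FO+\exists^{\infty}+\exists^{n,m}$ theory of $\S(\P)$ follows by applying the construction to a sentence, obtaining an automaton with no free variables, and checking non-emptiness of its language.
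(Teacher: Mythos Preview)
The paper does not supply its own proof of this theorem; it is stated as a known result with citations to \cite{BluGrae00,Hod82,KhoN95,Rub04}, so there is nothing in the paper to compare your argument against directly.

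That said, your sketch follows the standard route taken in those references: structural induction on formulas, with the base and Boolean cases handled by the usual closure properties of FA recognizable relations, the first-order existential quantifier by projection (with the trailing-$\diamond$ cleanup you mention), $\exists^\infty$ via a pumping/loop argument in the padded tail, and $\exists^{n,m}$ by first isolating the finite-witness case and then performing a modular count of accepting extensions. Your treatment of $\exists^{n,m}$ is somewhat informal---the phrase ``backward sweep over the state graph'' hides real work: one must argue that, after determinizing, each state can be labeled with the residue modulo $m$ of the number of accepted completions (or with $\infty$), and that these labels can themselves be computed and carried along by a synchronous automaton reading the $\bar y$-part. This is exactly what is done in the cited sources (see in particular \cite{KhoRS05} or \cite{Rub04}), so your outline is correct in spirit and matches the literature, but the $\exists^{n,m}$ step would need to be fleshed out to stand as a self-contained proof.
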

Let $\K$ be a class of automatic structures closed under
isomorphism. The {\em isomorphism problem} for~$\K$ is the set of
pairs $(\P_1,\P_2)$ of automatic presentations with
$\S(\P_1)\cong\S(\P_2)\in\K$. The isomorphism problem for the class 
of all automatic structures is complete for $\Sigma^1_1$ --- the first level
of the analytical hierarchy \cite{KhoNRS07} (this holds already for
automatic successor trees). However, if one restricts to special subclasses of
automatic structures, this complexity bound can be reduced. For
example, for the class of automatic ordinals and also the class of
automatic Boolean algebras, the isomorphism problem is decidable.
Another interesting result is that the isomorphism problem for locally
finite automatic graphs is $\Pi^0_3$-complete~\cite{Rub04}.
All these classes of
automatic structures have the nice property that one can decide
whether a given automatic presentation describes a structure from
this class. Theorem~\ref{thm:extFOaut} implies that this property
also holds for the classes of equivalence relations, trees of
height at most $k$, and linear orders, i.e., the classes considered
in this paper.

\section{Automatic Equivalence Structures}\label{sec:equiv}

An equivalence structure is of the form $\E=(D; E)$ where $E$ is an
equivalence relation on $D$.
In this section, we prove that the isomorphism problem for automatic
equivalence structures is $\Pi^0_1$-complete. This result can
be also deduced from our result for automatic trees (Section~\ref{sec:tree}).
But the case of equivalence structures is a good starting point for
introducing our techniques.

Let $\E$ be an automatic equivalence structure. Define the function
$h_\E:\N\cup\{\aleph_0\} \rightarrow \N\cup \{\aleph_0\}$ such that
for all $n\in \N\cup \{\aleph_0\}$, $h_\E(n)$ equals the number of
equivalence classes (possibly infinite) in $\E$ of size $n$. Note
that for given $n\in \N\cup \{\aleph_0\}$, the value $h_\E(n)$ can be
computed effectively: one can define in $\FO+\exists^\infty$ the set
of all $\leq_{\text{llex}}$-least elements\footnote{$\leq_{\text{llex}}$ denotes the
  length-lexicographical order on words.} that belong to an
equivalence class of size $n$.

Given two automatic equivalence structures $\E_1=(D_1; E_1)$ and
$\E_2=(D_2;E_2)$, deciding if $\E_1\cong \E_2$ amounts to checking if
$h_{\E_1}=h_{\E_2}$. Therefore, the isomorphism problem for automatic
equivalence structures is in $\Pi^0_1$.

For the $\Pi^0_1$ lower bound, we use a reduction from
Hilbert's $10^{th}$ problem: Given a polynomial
$p(x_1,\ldots,x_k)\in \Z[x_1,\ldots,x_k]$, decide whether the equation
$p(x_1,\ldots,x_k)=0$ has a solution in $\N_+$ (for technical reasons,
it is useful to exclude $0$ in solutions).
This problem is well-known to be
undecidable, see e.g. \cite{Mat93}. In fact, Matiyasevich constructed
from a given (index of a) recursively enumerable set
$X\subseteq \N_+$ a polynomial $p(x_1,\ldots,x_k)\in \Z[x_1,\ldots,x_k]$
such that for all $n \in \N_+$:
$n\in X$ if and only if $\exists y_2,\ldots,y_k \in \N_+ : p(n,y_2,\ldots,y_k)=0$.
Hence, the following set is $\Pi^0_1$-complete:
$$
\{(p_1(\overline x),p_2(\overline x))\in\N[x_1,\dots,x_k]^2 \mid \forall\overline
c\in\N_+^k : p_1(\overline c)\neq p_2(\overline c)\} .
$$
For a symbol $a$, let $\Sigma_k^a$ denote the alphabet
$$
\Sigma_k^a = \{a,\diamond\}^k \setminus \{(\diamond,\ldots,\diamond)\}
$$
and let $\sigma_i$ denote the $i^{th}$ component of $\sigma \in \Sigma_k^a$.
For $\overline e = (e_1,\ldots,e_k) \in\N_+^k$, write
$a^{\overline e}$ for the word
\[
  a^{e_1} \otimes a^{e_2} \otimes \cdots \otimes a^{e_k}\ .
\]
For a language $L$, we write $\otimes_k(L)$ for the language
$$
\{ u_1 \otimes u_2 \otimes \cdots \otimes u_k \mid u_1,\ldots,u_k \in L\}.
$$

\begin{lemma}\label{lm:equiv-runs}
  There exists an algorithm that, given a non-zero polynomial
  $p(\overline{x})\in\N[\overline{x}]$ in $k$ variables, constructs a
  non-deterministic automaton $\A[p(\overline{x})]$ on the alphabet
  $\Sigma_k^a$ with $L(\A[p(\overline x)])=\otimes_k(a^+)$ such that
  for all $\overline c \in \N^k_+$: $\A[p(\overline{x})]$ has exactly
  $p(\overline c)$ accepting runs on input $a^{\overline c}$.
\end{lemma}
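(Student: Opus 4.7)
The plan is to construct $\A[p]$ by induction on the structure of the polynomial $p$. Since $p$ has non-negative coefficients and is non-zero, I would view it as being built from positive integer constants and the variables $x_1,\ldots,x_k$ via the operations $+$ and $\cdot$; concretely, expand $p$ as a sum of monomials $c\cdot x_{i_1}\cdots x_{i_m}$ with $c\in\N_+$, and each monomial as the product of a positive constant and some variables. Throughout the induction I would maintain the invariant $L(\A[q]) = \otimes_k(a^+)$ for every subexpression $q$ that arises, so that the additive and multiplicative run-counting properties of $\uplus$ and $\times$ from Section~\ref{sec:prelim} compose cleanly.

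Fix once and for all a deterministic automaton $\A_D$ accepting $D := \otimes_k(a^+)$; it has exactly one accepting run on every word in $D$. For the base case of a positive constant $p = n$, take the $n$-fold disjoint union $\A_D \uplus \cdots \uplus \A_D$: its language is $D$ and it has exactly $n = p(\overline c)$ accepting runs on every input $a^{\overline c}$. For the base case $p = x_i$, I would first build a two-state non-deterministic automaton $\B_i$ over $\Sigma_k^a$ with states $\{p_0,p_1\}$, initial $p_0$ and accepting $p_1$, whose transitions are: from $p_0$ on a letter $\sigma$ with $\sigma_i = a$, non-deterministically stay in $p_0$ or move to $p_1$; from $p_0$ on $\sigma$ with $\sigma_i = \diamond$, stay in $p_0$; and from $p_1$, loop on every letter. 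An accepting run of $\B_i$ on a non-empty input corresponds exactly to the choice of one position whose $i$-th track component is $a$ at which to ``commit'' to $p_1$; on input $a^{\overline c}$ there are exactly $c_i$ such positions, so $\B_i$ has $c_i$ accepting runs. To restrict the language to $D$ without altering this count, set $\A[x_i] := \A_D \times \B_i$; since $\A_D$ is deterministic and $a^{\overline c} \in L(\A_D)\cap L(\B_i)$ (using $c_i \geq 1$), the number of accepting runs of $\A[x_i]$ on $a^{\overline c}$ is still $c_i$, and its language is $D \cap L(\B_i) = D$.

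The inductive steps are then immediate. For a sum $p = q_1 + q_2$, set $\A[p] := \A[q_1]\uplus\A[q_2]$; its language is still $D$, and by the additivity of runs under $\uplus$ it has $q_1(\overline c) + q_2(\overline c) = p(\overline c)$ accepting runs on $a^{\overline c}$. For a product $p = q_1\cdot q_2$, set $\A[p] := \A[q_1]\times\A[q_2]$; by the multiplicativity of runs under $\times$ on inputs in the common language $D$, it has $q_1(\overline c)\cdot q_2(\overline c) = p(\overline c)$ accepting runs on $a^{\overline c}$, and its language remains $D$.

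The delicate point I anticipate is precisely preserving the language invariant $L(\A[q]) = \otimes_k(a^+)$: without it, the multiplicative counting for $\times$ could fail to give the product of run counts on inputs in $L(\A[q_1])\cap L(\A[q_2])$ that lie outside $D$, and the union for $\uplus$ need not produce a language of $\otimes_k(a^+)$. The intersection with $\A_D$ in the variable case is the one non-trivial device used to re-establish this invariant while, thanks to the determinism of $\A_D$, leaving the run count of $\B_i$ unchanged.
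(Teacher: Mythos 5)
Your proof is essentially the paper's proof: build $\A[p]$ by structural induction, handling $+$ with $\uplus$ and $\cdot$ with $\times$, with the run-counting facts from Section~\ref{sec:prelim} doing the work. The minor differences (treating an arbitrary positive constant $n$ directly as an $n$-fold disjoint union of $\A_D$, rather than as $1+\cdots+1$; a slightly different two-state automaton for the variable case) do not change the substance of the argument.

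One thing you do that is genuinely better: you maintain the language invariant $L(\A[q])=\otimes_k(a^+)$ explicitly, and in the variable case you enforce it by taking $\A[x_i]=\A_D\times\B_i$, observing that $\A_D$ being deterministic means the product does not change the run count. The paper's automaton for $x_i$ (for $k\ge2$) actually does \emph{not} have language $\otimes_k(a^+)$: it accepts every nonempty word over $\Sigma_k^a$ whose first letter has $a$ in the $i$-th track, which is a strict superset of $\otimes_k(a^+)$ (e.g.\ $(a,\diamond)$ or $(a,a)(\diamond,a)(a,a)$ when $k=2,i=1$). Without the extra intersection, the language equality claimed by the lemma can fail for the composite $\A[p]$, and later uses that rely on every $u\in L(\A[p])$ lying in $\otimes_k(a^+)$ (so that its run count is of the form $p(\overline c)$) would need a separate patch. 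So your intersection with $\A_D$, together with the determinism argument, is exactly the right repair, and it is not present in the paper's proof as written.
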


\begin{proof}
  The automaton $\A[p(\overline x)]$ is build by induction on the
  construction of the polynomial $p$, the base case is provided by the
  polynomials $1$ and $x_i$.

  Let $\A[1]$ be a deterministic automaton accepting $\otimes_k(a^+)$.
  Next, suppose $p(x_1,\ldots,x_k)=x_i$ for some $i\in \{1,\ldots,k\}$. Let
  $S=\{q_1,q_2\}$, $I=\{q_1\}$ and $F=\{q_2\}$. Define $\Delta$ as
  $$
  \Delta = \{ (q_1,\sigma,q_j) \mid j \in \{1,2\}, \sigma \in \Sigma_k^a,
  \sigma_i = a \} \cup \{ (q_2,\sigma,q_2) \mid  \sigma\in \Sigma_k^a\}.
  $$
  When the automaton $\A[p(\overline x)]=(S,I,\Delta,F)$ runs on an
  input word $a^{\overline c}$, it has exactly $c_i$ many times the
  chance to move from state $q_1$ to the final state $q_2$.  Therefore
  there are exactly $c_i=p(\overline c)$ many accepting runs on
  $a^{\overline c}$.

  Let $p_1(\overline x)$ and $p_2(\overline x)$ be polynomials in
  $\N[\overline{x}]$. Assume as inductive hypothesis that there are
  two automata $\A[p_1(\overline{x})]$ and
  $\A[p_2(\overline{x})]$ such that for $i\in
  \{1,2\}$ the number of accepting runs of $\A[p_i(\overline{x})]$ on
  $a^{\overline c}$ equals $p_i(\overline c)$.

  For $p(\overline x)=p_1(\overline x)+p_2(\overline x)$, set
  $\A[p(\overline{x})] = \A[p_1(\overline{x})] \uplus
  \A[p_2(\overline{x})]$.  Then, the number of accepting runs of
  $\A[p(\overline{x})]$ on $a^{\overline c}$ is $p_1(\overline
  c)+p_2(\overline c)$.

  For $p(\overline x)=p_1(\overline x)\cdot p_2(\overline x)$, let
  $\A[p(\overline{x})] = \A[p_1(\overline{x})] \times
  \A[p_2(\overline{x})]$.  Then, the number of accepting runs of
  $\A[p(\overline{x})]$ on $a^{\overline c}$ is $p_1(\overline c)
  \cdot p_2(\overline c)$.  \qed
\end{proof}

\noindent
Let $\A=(S,I,\Delta,F)$ be a non-deterministic finite automaton with
alphabet $\Sigma$. We define an automaton $\Run_\A=(S,I,\Delta',F)$
with alphabet $\Delta$ and
$$
\Delta' =  \{ (p, (p,a,q), q) \mid (p,a,q) \in \Delta \}.
$$
Let $\pi : \Delta^* \to \Sigma^*$ be the projection morphism with
$\pi(p,a,q) = a$.
The following lemma is immediate from the definition.

\begin{lemma}
  For $u \in \Delta^+$ we have: $u \in L(\Run_\A)$ if and only if $u$
  forms an accepting run of $\A$ on $\pi(u)$ (which in particular
  implies $\pi(u) \in L(\A)$).
\end{lemma}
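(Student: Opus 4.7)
The plan is to unwind the definitions and observe that the construction of $\Run_\A$ makes the correspondence essentially tautological. Write $u = t_1 t_2 \cdots t_n \in \Delta^+$ where each letter is a transition $t_i = (p_i, a_i, q_i) \in \Delta$ of $\A$. The key observation about $\Run_\A$ is that its transition relation $\Delta'$ contains exactly one transition per letter: namely, on reading the symbol $(p,a,q) \in \Delta$, the only applicable transition in $\Run_\A$ goes from state $p$ to state $q$. In particular, reading the letter $t_i = (p_i,a_i,q_i)$ is possible only from state $p_i$, and then forces the next state to be $q_i$.

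For the forward direction, I suppose $u \in L(\Run_\A)$ and let $s_0, s_1, \ldots, s_n$ be the state sequence of some accepting run, so $s_0 \in I$, $s_n \in F$, and $(s_{i-1}, t_i, s_i) \in \Delta'$ for each $i$. By the observation above, this forces $s_{i-1} = p_i$ and $s_i = q_i$. Combining these gives $p_1 \in I$, $q_n \in F$, and $q_i = p_{i+1}$ for $1 \le i < n$, which is precisely the definition of $u$ being an accepting run of $\A$ on $\pi(u) = a_1 a_2 \cdots a_n$, and in particular $\pi(u) \in L(\A)$. For the converse, if $u$ is an accepting run of $\A$ on $\pi(u)$, then the conditions $p_1 \in I$, $q_n \in F$, $q_i = p_{i+1}$ hold by definition, and then the sequence $p_1, q_1, q_2, \ldots, q_n$ (equivalently $p_1, p_2, \ldots, p_n, q_n$) is an accepting run of $\Run_\A$ on $u$ using the transitions $(p_i, t_i, q_i) \in \Delta'$.

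There is no real obstacle here; the proof is a straightforward verification of definitions. The only point worth being precise about is the restriction to $u \in \Delta^+$ (non-empty $u$), which matches the paper's convention of speaking of runs only on non-empty words, and the fact that $\pi(u) \in L(\A)$ is automatic once we have an accepting run of $\A$ on $\pi(u)$, so it does not need a separate argument.
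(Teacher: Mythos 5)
Your proof is correct and is exactly the straightforward unwinding of the definitions of $\Run_\A$, $\Delta'$, and $\pi$; the paper itself gives no proof, stating the lemma is immediate from the definition, and your verification is the one that implicit claim refers to.
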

This lemma implies that for all words $w\in \Sigma^+$, $|\pi^{-1}(w)
\cap L(\Run_\A)|$ equals the number of accepting runs of $\A$ on $w$.
Note that this does not hold for $w = \varepsilon$.

Consider a non-zero polynomial $p(\overline x)\in
\N[x_1,\ldots,x_k]$. Let the automaton $\A=\A[p(\overline{x})]$
satisfy the properties guaranteed by Lemma~\ref{lm:equiv-runs} and let
$\Run_{\A}$ be as defined above. Define an automatic equivalence
structure $\E(p)$ whose domain is $L(\Run_\A)
\setminus \{\varepsilon\}$.  Moreover,
two words $u,v \in L(\Run_\A)\setminus\{\varepsilon\}$ are equivalent
if and only if $\pi(u) = \pi(v)$. By definition and
Lemma~\ref{lm:equiv-runs}, a natural number $y \in \N_+$ belongs to
$\Img_+(p)$ if and only if there exists a word $u\in L(\A)$ with
precisely $y$ accepting runs, if and only if $\E(p)$ contains an
equivalence class of size $y$.

It is well known that the function $C: \N \times \N \to \N$ with
\begin{equation} \label{function-C}
    C(x,y) = (x+y)^2+3x+y
\end{equation}
is injective ($C(x,y)/2$ defines a pairing function, see e.g. \cite{Hon06}).  In the
following, let $\E_\Good$ denote the countably infinite equivalence
structure with
\[
h_{\E_\Good}(n)=
  \begin{cases}
    \infty & \text{if }n\in\{C(y,z)\mid y,z\in\N_+, y\neq z\}\\
    0      & \text{otherwise.}
  \end{cases}
\]

\begin{proposition}\label{P-equiv}
  The set of automatic presentations $\P$ with $\S(\P)\cong\E_\Good$ is
  hard for $\Pi^0_1$.
\end{proposition}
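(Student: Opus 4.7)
The plan is to reduce from the $\Pi^0_1$-complete set of pairs $(p_1,p_2)\in\N[\overline x]^2$ with $p_1(\overline c)\neq p_2(\overline c)$ for every $\overline c\in\N_+^k$, as stated just before Lemma~\ref{lm:equiv-runs}. Given such a pair, I would first replace each $p_i$ by $p_i+1$ so that $p_i(\overline c)\geq 1$ on all of $\N_+^k$; this preserves the condition $p_1(\overline c)\neq p_2(\overline c)$. Then I would form the polynomial $q(\overline x)=C(p_1(\overline x),p_2(\overline x))\in\N[\overline x]$, with $C$ the injective function from~\eqref{function-C}. Since $C(y,z)\geq 1$ for $y,z\in\N_+$, the polynomial $q$ is non-zero, so Lemma~\ref{lm:equiv-runs} together with the $\E(\cdot)$-construction given immediately after it produces an automatic equivalence structure $\E(q)$ in which the class corresponding to $\overline c\in\N_+^k$ has size $C(p_1(\overline c),p_2(\overline c))$. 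Crucially, $\E(q)$ contains a class of ``bad'' size $C(y,y)$ (a size not appearing in $\E_\Good$) if and only if $p_1(\overline c)=p_2(\overline c)$ for some $\overline c$.

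To complete the reduction, I would combine $\E(q)$ with an automatic copy $\F$ of $\E_\Good$ and output $\E=\F\uplus\E(q)$, made disjoint by tagging. To obtain $\F$, view $\tilde q(x_1,x_2,x_3)=C(x_1,x_2)$ as a non-zero polynomial in three variables, apply Lemma~\ref{lm:equiv-runs} to build $\A[\tilde q]$ over $\Sigma_3^a$, and then use the observation preceding the lemma that Cartesian product with a deterministic automaton preserves the number of accepting runs on accepted words: intersect $\A[\tilde q]$ with a deterministic automaton recognising $\{a^{c_1}\otimes a^{c_2}\otimes a^{c_3}\mid c_1\neq c_2\}$, and apply the $\E(\cdot)$-construction to the result. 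The ensuing structure $\F$ then has $h_\F(n)=\aleph_0$ for every ``good'' size $n=C(y,z)$ with $y\neq z\in\N_+$ (infinitely many copies are produced by the free coordinate $c_3$) and $h_\F(n)=0$ otherwise, where injectivity of $C$ ensures no spurious sizes enter. Hence $\F\cong\E_\Good$, and $\E=\F\uplus\E(q)$ is effectively automatic from $(p_1,p_2)$.

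Verification is then routine: since $\aleph_0+\kappa=\aleph_0$ for any countable cardinal $\kappa$, we have $h_\E(n)=\aleph_0$ for every good size $n$, while for every other size $n$, $h_\E(n)=h_{\E(q)}(n)$, which is positive precisely when $n=C(y,y)$ is realised as $C(p_1(\overline c),p_2(\overline c))$ for some $\overline c$. Thus $\E\cong\E_\Good$ iff no bad size occurs in $\E(q)$, iff $(p_1,p_2)$ belongs to the $\Pi^0_1$-complete set above. I expect the main obstacle to be the construction of $\F$ with the correct infinite multiplicities while staying inside the framework of Lemma~\ref{lm:equiv-runs}; the dummy third coordinate together with the deterministic-intersection remark resolves this without disturbing the run count.
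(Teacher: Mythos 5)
Your proof is correct and follows essentially the same route as the paper: reduce from the $\Pi^0_1$-complete Diophantine problem, encode via the injective pairing $C$, and invoke Lemma~\ref{lm:equiv-runs} to turn polynomial values into equivalence-class sizes. The only notable implementation difference is how the background copy of $\E_\Good$ is assembled --- the paper takes $\aleph_0$ disjoint copies of $\E(S_1)\uplus\E(S_2)\uplus\E(S_3)$ with $S_2=C(x_1+x_2,x_1)$ and $S_3=C(x_1,x_1+x_2)$, whereas you build $\F$ once via a dummy coordinate plus a deterministic intersection and then need only a single copy of $\E(q)$ --- and your $p_i\mapsto p_i+1$ shift is a sensible safeguard that the paper handles implicitly by restricting attention to non-zero polynomials.
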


\begin{proof}
  For non-zero polynomials $p_1(\overline x),p_2(\overline x)\in
  \N[x_1,\ldots,x_k]$, define the following three (non-zero)
  polynomials from $\N[x_1,\ldots,x_k]$ (with $k\ge2$):
  \begin{align*}
    S_1(\overline{x})& = C(p_1(\overline{x}), p_2(\overline{x})), &
    S_2(\overline x) &= C(x_1+x_2,x_1), &
    S_3(\overline x) &= C(x_1, x_1+x_2) .
  \end{align*}
  Let $\E(S_1)$, $\E(S_2)$, and $\E(S_3)$ be the automatic equivalence
  structures corresponding to these polynomials according to the above
  definition. Finally, let $\E$ be the disjoint union of
  $\aleph_0$ many copies of these three equivalence structures.

  If $p_1(\overline c)=p_2(\overline c)$ for some $\overline c\in
  \N_+^k$, then there is $y\in \N_+$ such that $C(y,y) \in \Img_+(S_1)$.
  Therefore in $\E$ there is an equivalence class of size $C(y,y)$ and
  no such equivalence class exists in $\E_\Good$. Hence $\E
  \ncong \E_\Good$.

  Conversely, suppose that $p_1(\overline c)\neq p_2(\overline c)$ for
  all $\overline c\in \N_+^k$.  For all $y,z\in \N_+$, $\E$ contains
  an equivalence class of size $C(y,z)$ if and only if $C(y,z)$
  belongs to $\Img_+(S_1) \cup \Img_+(S_2) \cup \Img_+(S_3)$, if and
  only if $y\neq z$, if and only if $\E_\Good$ contains an equivalence
  class of size $C(y,z)$. Therefore, for any $s\in \N_+$, $\E$ contains
  an equivalence class of size $s$ if and only if $\E_\Good$ contains
  an equivalence class of size $s$. Hence $\E\cong \E_\Good$.

  In summary, we have reduced the $\Pi^0_1$-hard problem 
  \[
     \{(p_1(\overline x),p_2(\overline x))\in\N[x_1,\dots,x_k]^2 
      \mid k\ge2,\forall\overline c\in\N_+^k:p_1(\overline c)\neq p_2(\overline c)\}
  \]
  to the set of automatic presentations of $\E_\Good$.
  Hence the proposition is proved.
\qed
\end{proof} 

\begin{theorem}\label{thm:equiv}
  The isomorphism problem for automatic equivalence structures is
  $\Pi^0_1$-complete.
\end{theorem}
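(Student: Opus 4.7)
The plan is to simply combine the two bounds that have already been established in the preceding discussion. The theorem is a containment statement plus a hardness statement, so I would organize the proof as two short paragraphs.

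For the upper bound, I would invoke the observation made just before Proposition~\ref{P-equiv}: given an automatic equivalence structure $\E$, the function $h_\E : \N \cup \{\aleph_0\} \to \N \cup \{\aleph_0\}$ is effectively computable because, for each $n$, the property ``$x$ is the $\leq_{\text{llex}}$-least element of an equivalence class of size $n$'' is expressible in $\FO+\exists^\infty$, so Theorem~\ref{thm:extFOaut} lets us count such elements. Two automatic equivalence structures are isomorphic if and only if their $h$-functions agree on every argument. Thus, from automatic presentations $\P_1,\P_2$, non-isomorphism can be certified by producing a single $n \in \N \cup \{\aleph_0\}$ with $h_{\S(\P_1)}(n) \neq h_{\S(\P_2)}(n)$, a computable condition on $n$. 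This places the isomorphism problem in $\Pi^0_1$.

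For the lower bound, I would simply cite Proposition~\ref{P-equiv}: the set of automatic presentations $\P$ with $\S(\P) \cong \E_\Good$ is already $\Pi^0_1$-hard, and since $\E_\Good$ is itself automatically presentable (it is a countable disjoint union of copies of the specific structures $\E(S_2)$ and $\E(S_3)$, together with pieces absent in the hardness construction only when non-good classes appear), hardness of the general isomorphism problem follows by pairing a constructed automatic presentation with a fixed automatic presentation of $\E_\Good$.

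The only delicate point, which is not a real obstacle since it was handled in Proposition~\ref{P-equiv}, is ensuring that the reduction produces an input where one side of the isomorphism comparison is literally an automatic copy of $\E_\Good$; in particular, one has to produce an automatic presentation of $\E_\Good$ itself to serve as the fixed second component of the reduction. This is straightforward: the polynomials $S_2(\overline{x}) = C(x_1+x_2,x_1)$ and $S_3(\overline{x}) = C(x_1,x_1+x_2)$ from Proposition~\ref{P-equiv} already yield, via the construction preceding Proposition~\ref{P-equiv}, an automatic structure whose equivalence-class sizes are exactly $\{C(y,z) \mid y,z \in \N_+, y \neq z\}$, and taking $\aleph_0$ disjoint automatic copies (which is a routine automatic-structure operation) realizes $\E_\Good$. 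Combining the two bounds yields $\Pi^0_1$-completeness, concluding the proof.
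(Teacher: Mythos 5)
Your proof is correct and follows essentially the same route as the paper: the upper bound via the effective computability of $h_\E$ (making isomorphism a $\Pi^0_1$ condition), and the lower bound by citing Proposition~\ref{P-equiv} and fixing an automatic presentation of $\E_\Good$ as the second component. The only difference is cosmetic: the paper dispatches the automaticity of $\E_\Good$ with the remark that it is ``necessarily automatic'' (which can be read either as a pointer to the explicit construction, or as the indirect observation that a $\Pi^0_1$-hard set must be nonempty, so the reduction in Proposition~\ref{P-equiv} must produce at least one automatic presentation of $\E_\Good$), whereas you spell out the direct construction via $\aleph_0$ copies of $\E(S_2)\uplus\E(S_3)$ --- which is a valid and slightly more self-contained justification.
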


\begin{proof}
  At the beginning of this section, we already argued that the
  isomorphism problem is in~$\Pi^0_1$; hardness follows immediately
  from Proposition~\ref{P-equiv}, since $\E_\Good$ is necessarily automatic.
\qed
\end{proof}

\section{Automatic Trees}\label{sec:tree}

A {\em tree} is a structure $T=(V;\leq)$, where $\leq$ is a
partial order with a least element, called the {\em root}, and such
that for every $x \in V$, the order $\leq$ restricted to the set
$\{y\mid y\leq x\}$ of ancestors of $x$ is a finite linear order.
The {\em level} of a node $x\in V$ is
$|\{y \mid y < x \}| \in \mathbb{N}$. The {\em height} of
$T$ is the supremum of the levels of all nodes in $V$; it may be
infinite, but this paper deals with trees of finite height only.
One may also view a tree as a directed graph $(V,E)$, where there is
an edge $(u,v)\in E$ if and only if $u$ is the largest element in $\{
x \mid x < v \}$.  The edge relation $E$ is FO-definable in
$(V;\leq)$. In this paper, we assume the partial order definition for
trees, but will quite often refer to them as graphs for
convenience.  We use $\T_n$ to denote the class of automatic trees with height at most
$n$. Let $n$ be fixed. Then the tree order $\leq$ is FO-definable in
$T$ and this holds even uniformly for all trees from $\T_n$. Moreover,
it is decidable whether a given automatic graph belongs to $\T_n$
(since the class of trees of height $n$ can be axiomatized in
first-order logic). 

As a corollary to Proposition~\ref{P-equiv}, we get immediately that the
isomorphism problem for automatic trees of height at most $2$ is
undecidable:

\begin{corollary}\label{crl:tree2}
  There exists an automatic tree $T_\Good$ of height $2$ such that the set of
  automatic presentations~$\P$ with $\S(\P)\cong T_\Good$ is
  $\Pi^0_1$-hard. Hence, the isomorphism problem for the class $\T_2$
  of automatic trees of height at most $2$ is $\Pi^0_1$-hard.
\end{corollary}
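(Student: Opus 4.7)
The plan is to encode automatic equivalence structures as automatic height-$2$ trees and pull back the hardness of Proposition~\ref{P-equiv}. Given an automatic equivalence structure $\E=(D;E)$ over an alphabet $\Sigma$, I would pick three fresh letters $\alpha,\beta,\gamma\notin\Sigma$ and build the tree $T_\E$ with domain
\[
  \{\alpha\}\;\cup\;\{\beta w \mid w\in R\}\;\cup\;\{\gamma w \mid w\in D\},
\]
where $R\subseteq D$ consists of the $\leq_{\text{llex}}$-least element of each $E$-class, and with partial order generated by the covers $\alpha\prec\beta r$ for $r\in R$ and $\beta r\prec\gamma d$ whenever $r\in R$, $d\in D$, and $r\,E\,d$. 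Intuitively, the root has one child per equivalence class, and each such child collects as its own children the elements of the class. Theorem~\ref{thm:extFOaut} ensures that $R$ and both cover relations are FA recognizable, so $T_\E$ is an automatic tree of height~$2$, and an automatic presentation of it can be computed uniformly from any automatic presentation of~$\E$.

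Next, I would verify that $\E_1\cong\E_2$ if and only if $T_{\E_1}\cong T_{\E_2}$. The key observation is that a height-$2$ tree in which every level-$1$ node has at least one child is determined up to isomorphism by the function sending $n\in\N\cup\{\aleph_0\}$ to the number of level-$1$ nodes having exactly $n$ children. For the tree $T_\E$ this function coincides with $h_\E$, and the ``no level-$1$ leaves'' hypothesis is automatic because each representative $r\in R$ lies in its own $E$-class, so $\beta r$ has at least the child $\gamma r$. The forward direction of the equivalence is obvious, and the backward direction follows from this combinatorial characterisation.

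Finally, I would set $T_\Good := T_{\E_\Good}$, which is an automatic tree of height~$2$. Composing the reduction of Proposition~\ref{P-equiv} with the map $\E\mapsto T_\E$ yields a computable reduction from a $\Pi^0_1$-hard set to $\{\P \mid \S(\P)\cong T_\Good\}$, giving the claimed hardness and, as a consequence, the hardness of the isomorphism problem for $\T_2$. I do not anticipate any genuine obstacle: the only point requiring some care is that $R$ and the two cover relations are FA recognizable, which is immediate from Theorem~\ref{thm:extFOaut} since they are first-order definable in $\E$ together with $\leq_{\text{llex}}$.
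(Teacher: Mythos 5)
Your construction is essentially the same as the paper's: you add a fresh root, one level-$1$ node per equivalence class (indexed by the $\leq_{\text{llex}}$-least representative), attach the class members as its children, observe that this gives a computable map preserving and reflecting isomorphism, and then invoke Proposition~\ref{P-equiv}. The only superficial difference is that you prefix the original domain elements with an extra letter $\gamma$ rather than reusing them verbatim as leaves; the argument is otherwise identical.
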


\begin{proof}
  Let $\E=(V;\equiv)$ be an automatic equivalence structure.
  Now build the tree $T(\E)$ as follows:
  \begin{itemize}
  \item the set of nodes is $V\cup \{r\} \cup \{au \mid u \in V, u\text{
      is $\leq_{\text{llex}}$-minimal in }[u]_\equiv\}$ where $r$ and
    $a$ are two new letters
  \item $r$ is the root, its children are the words starting with $a$,
    and the children of $au$ are the words from $[u]_\equiv$.
  \end{itemize}
  Then it is clear that $T(\E)$ is a tree of height at most $2$ and
  that an automatic presentation for $T(\E)$ can be computed from one
  for~$\E$. Furthermore, $\E\cong \E_\Good$ if and only if $T(\E)\cong
  T(\E_\Good)$. Hence, indeed, the statement follows from
  Proposition~\ref{P-equiv}.\qed
\end{proof}
The hardness statement of Theorem~\ref{thm:tree} below is a
generalization of this corollary to all the classes $\T_n$ for
$n\ge2$. But first, we prove an upper bound for the isomorphism
problem for $\T_n$:

\begin{proposition}
\label{prop:tree_membership}
  The isomorphism problem for the class $\T_n$ of automatic trees of
  height at most $n$ is
  \begin{itemize}
  \item decidable for $n=1$ and
  \item in $\Pi^0_{2n-3}$ for all $n \geq 2$.
  \end{itemize}
\end{proposition}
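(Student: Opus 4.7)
The plan is to reduce the isomorphism problem for $\T_n$ to an inductively defined node equivalence on an automatic disjoint union, whose arithmetical complexity climbs by exactly two levels per additional tree level.

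For $n=1$, a tree in $\T_1$ is determined up to isomorphism by the cardinality of the children of its root, an element of $\N\cup\{\aleph_0\}$. By Theorem~\ref{thm:extFOaut} this cardinality is effectively computable from any automatic presentation (first decide ``infinitely many children'' via $\exists^\infty$; otherwise enumerate $m$ and decide ``exactly $m$ children'' in $\FO$), so isomorphism on $\T_1$ is decidable.

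For $n\ge2$, given automatic $T_1,T_2\in\T_n$, I would compute an automatic presentation of the disjoint union $T:=T_1\uplus T_2$ and inductively define equivalence relations $\equiv_k$ on its vertex set by letting $\equiv_0$ be the universal relation and, for $k\ge1$,
\[
 u\equiv_k v \;\Longleftrightarrow\; \forall w:\, |S_u^{k-1}(w)|=|S_v^{k-1}(w)|,
\]
where $S_u^j(w)=\{u'\mid u'\text{ is a child of }u,\, u'\equiv_j w\}$. A routine induction on $k$ yields that whenever the subtrees $T_u$ and $T_v$ both have height at most $k$, the relation $u\equiv_k v$ coincides with $T_u\cong T_v$; in particular $T_1\cong T_2$ iff the roots of $T_1$ and $T_2$ are related by $\equiv_n$.

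Finally I would prove by induction on $k\ge2$ that $\equiv_k\in\Pi^0_{2k-3}$. The base case $k=2$ recycles the argument for equivalence structures from Section~\ref{sec:equiv}: $\equiv_1$ is decidable (equality of children counts, computable via Theorem~\ref{thm:extFOaut}), so $\equiv_2$ asserts the $\Pi^0_1$ statement that for every $m\in\N\cup\{\aleph_0\}$ the children-of-children histograms of $u$ and $v$ agree at $m$, a universal quantification over decidable slices. For the inductive step, I rewrite the cardinality equality as $\forall n\in\N\,(|S_u^{k-1}(w)|\ge n \Leftrightarrow |S_v^{k-1}(w)|\ge n)$, which handles the possibly infinite case uniformly. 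The predicate ``$|S_u^{k-1}(w)|\ge n$'' asserts the existence of $n$ distinct children of $u$ each $\equiv_{k-1}$-related to $w$ and thus lies in $\Sigma^0_{2k-4}$ by induction; its biconditional is a boolean combination of two $\Sigma^0_{2k-4}$-predicates, placing it in $\Sigma^0_{2k-3}\cap\Pi^0_{2k-3}$, and prefixing $\forall w\forall n$ keeps the whole formula in $\Pi^0_{2k-3}$. The main obstacle I anticipate is precisely this reformulation: splitting the cardinality equality into separate finite-equality and ``both infinite'' clauses would introduce an extra $\exists^\infty$-style alternation and blow the $\Pi^0_{2k-3}$ budget, whereas the unified ``$\forall n:\,\ge n$'' formulation costs only two alternations per level and yields the desired bound at the roots.
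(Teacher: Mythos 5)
Your proof is correct and follows essentially the same approach as the paper: both define a hierarchy of node-comparison predicates (your $\equiv_k$, the paper's $\text{iso}_k$, indexed complementarily from the bottom rather than from the top), both express cardinality equality via the $\forall\ell\,(\ge\ell)$-trick rather than splitting into finite and infinite cases, and both obtain $\Pi^0_{2n-3}$ by the same two-levels-per-height inductive count. The only cosmetic differences are that you work explicitly on the disjoint union with a universal base relation $\equiv_0$, and you quantify the ``test vertex'' $w$ over the whole domain instead of over children of $u_1$ and $u_2$; both choices are harmless.
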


\begin{proof}
  We first show that $T_1\cong T_2$ is decidable for automatic trees
  $T_1, T_2 \in \T_1$ of height at most~$1$: It suffices to compute
  the cardinality of $T_i$ ($i \in \{1,2\}$) which is possible since
  the universes of $T_1$ and $T_2$ are regular languages.

  Now let $n\ge2$ and consider $T_1,T_2\in \T_n$.
  Let $T_i = (V_i,E_i)$, w.l.o.g. $V_1 \cap V_2 = \emptyset$, and $V =
  V_1 \cup V_2$, $E = E_1 \cup E_2$.  For any node $u$ in $V$, let
  $T(u)$ denote the subtree (of either $T_1$ or $T_2$) rooted at $u$
  and let $E(u)$ be the set of children of $u$.  For
  $k=n-2,n-3,\dots,0$, we will define inductively a
  $\Pi^0_{2n-2k-3}$-predicate $\text{iso}_k(u_1,u_2)$ for $u_1,u_2\in
  V$. This predicate expresses that $T(u_1) \cong T(u_2)$ provided
  $u_1$ and $u_2$ belong to level at least $k$.  The result will follow since
  $T_1\cong T_2$ if and only if $\text{iso}_0(r_1,r_2)$ holds, where $r_\sigma$ is
  the root of $T_\sigma$.

  For $k=n-2$, the trees $T(u_1)$ and $T(u_2)$ have height at most $2$
  and we can define $\text{iso}_{n-2}(u_1,u_2)$ as follows:
  \[
    \forall \kappa \in \N \cup \{ \aleph_0 \} \
    \forall \ell \geq 1 \
    \left(
    \begin{array}{l}
      \phantom{\iff} \displaystyle \exists x_1,\ldots,x_\ell \in E(u_1) :
      \bigwedge_{1\leq i<j\leq \ell} x_i \neq x_j \wedge
      \bigwedge_{i=1}^\ell |E(x_i)|= \kappa \\
       \iff  \displaystyle\exists y_1,\ldots,y_\ell \in E(u_2) :
      \bigwedge_{1\leq i<j\leq \ell}y_i \neq y_j \wedge
      \bigwedge_{i=1}^\ell |E(y_i)|=\kappa
    \end{array}
    \right)
  \]
  In other words: for every $\kappa \in \N \cup \{ \aleph_0 \}$, $u_1$
  and $u_2$ have the same number of children with exactly $\kappa$
  children. Since $\FO+\exists^\infty$ is uniformly decidable for
  automatic structures, this is indeed a $\Pi^0_1$-sentence (note that
  $2n-2k-3 = 1$ for $k=n-2$). For $0 \leq k < n-2$, we define
  $\text{iso}_k(u_1,u_2)$ inductively as follows:
  \[
    \forall v \in E(u_1) \cup E(u_2) \ \forall \ell \geq 1
    \left(
    \begin{array}{l}
      \displaystyle \phantom{\iff} \exists x_1,\ldots,x_\ell \in E(u_1) :
      \bigwedge_{1\leq i<j\leq \ell} x_i \neq x_j \wedge
      \bigwedge_{i=1}^\ell \text{iso}_{k+1}(v,x_i)  \\
       \iff  \displaystyle\exists y_1,\ldots,y_\ell \in E(u_2) :
      \bigwedge_{1\leq i<j\leq \ell}y_i \neq y_j \wedge
      \bigwedge_{i=1}^\ell \text{iso}_{k+1}(v,y_i)
    \end{array}
    \right)
  \]
  By quantifying over all $v \in E(u_1) \cup E(u_2)$, we quantify over
  all isomorphism types of trees that occur as a subtree rooted at a
  child of $u_1$ or $u_2$. For each of these isomorphism types $\tau$,
  we express that $u_1$ and $u_2$ have the same number of children~$x$
  with $T(x)$ of type $\tau$.  Since by induction,
  $\text{iso}_{k+1}(v,x_i)$ and $\text{iso}_{k+1}(v,y_i)$ are
  $\Pi^0_{2n-2k-5}$-statements, $\text{iso}_k(u_1,u_2)$ is a
  $\Pi^0_{2n-2k-3}$-statement.  \qed
\end{proof}
The rest of this section is devoted to proving that the isomorphism
problem for the class $\T_n$ of automatic trees of height at most
$n\ge2$ is also $\Pi^0_{2n-3}$-hard (and therefore complete). So let
$P_n(x_0)$ be a $\Pi^0_{2n-3}$-predicate.
In the following lemma and its proof, all quantifiers with 
unspecified range run over $\mathbb{N}_+$.
\begin{lemma}\label{lem:normalform}
  For $2\leq i\leq n$, there are $\Pi^0_{2i-3}$-predicates
  $P_i(x_0,x_1,y_1,x_2,y_2,\ldots,x_{n-i},y_{n-i})$ such that
  \begin{enumerate}
  \item[(i)] $P_{i+1}(\overline x)$ is logically equivalent to
    $\forall x_{n-i}\exists y_{n-i}: P_i(\overline x,x_{n-i},y_{n-i})$
    for $2 \leq i < n$ and
  \item[(ii)] $\forall y_{n-i} : \neg P_i(\overline{x},x_{n-i},y_{n-i})$
    implies $\forall x_{n-i}'\geq x_{n-i}\ \forall y_{n-i} : \neg
    P_i(\overline{x},x_{n-i}',y_{n-i})$,
  \end{enumerate}
  where $\overline x=(x_0,x_1,y_1,\dots,x_{n-i-1},y_{n-i-1})$.
\end{lemma}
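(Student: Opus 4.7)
The plan is to build $P_{n-1},\dots,P_2$ by peeling the quantifiers of $P_n$ off two at a time and, once and for all in the $\Pi^0_1$-bottom predicate, replacing each existential witness by a bounded-index decoding. The bounded-index substitution is exactly what forces the monotonicity condition~(ii) to hold uniformly at every level. Write $P_n$ in canonical prenex form
\[
P_n(x_0)\equiv \forall X_1\exists Y_1\cdots\forall X_{n-2}\exists Y_{n-2}\,\forall z\ R(x_0,X_1,Y_1,\dots,X_{n-2},Y_{n-2},z),
\]
with $R$ recursive, and fix a computable encoding of finite tuples over $\N_+$ by elements of $\N_+$; write $[y]_{\bar k}$ for the $\bar k$-th component of the tuple coded by $y$ (with a default value if the index is out of range). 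Define
\[
P_2(x_0,x_1,y_1,\dots,x_{n-2},y_{n-2})\ :=\ \forall k_1\le x_1\cdots\forall k_{n-2}\le x_{n-2}\,\forall z\ R\bigl(x_0,k_1,[y_1]_{k_1},k_2,[y_2]_{k_1,k_2},\dots,k_{n-2},[y_{n-2}]_{k_1,\dots,k_{n-2}},z\bigr)
\]
and, for $3\le i\le n$, set $P_i(\overline x):=\forall x_{n-i+1}\exists y_{n-i+1}\,P_{i-1}(\overline x,x_{n-i+1},y_{n-i+1})$. Since all $k_j$-quantifiers are bounded, $P_2\in\Pi^0_1$, and a straightforward induction gives $P_i\in\Pi^0_{2i-3}$; condition~(i) is built into the definition.

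To confirm that the constructed $P_n$ is logically equivalent to the original, I would argue via Skolemization in both directions. Forward: Skolem functions $f_j(x_1,\dots,x_j)$ for the constructed $P_n$ yield Skolem functions $F_j(X_1,\dots,X_j):=[f_j(X_1,\dots,X_j)]_{X_1,\dots,X_j}$ for the canonical form, by taking $x_j:=k_j:=X_j$ inside $P_2$. Backward: given Skolem functions $F_j$ for the canonical form and any $x_1,\dots,x_{n-2}$, I can package $F_j(k_1,\dots,k_j)$ for $k_l\le x_l$ into a single tuple code $y_j$ so that $[y_j]_{k_1,\dots,k_j}=F_j(k_1,\dots,k_j)$; substituting these codes into $P_2$ and using the assumption on the $F_j$'s produces the constructed formula. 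This is where the multi-indexing $[y_j]_{k_1,\dots,k_j}$ is essential: a single-index $[y_j]_{k_j}$ would fail the backward direction because $F_j$ genuinely depends on all of $X_1,\dots,X_j$, and that dependency is the main obstacle to any simpler encoding.

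For condition~(ii) the key observation is that $P_2$ is downward-monotone in each $x_j$: shrinking $x_j$ to some $x_j'\le x_j$ only restricts the range of $\forall k_j\le x_j$, weakening the universal statement and preserving truth with the same $y_j$. Because $P_i$ arises from $P_2$ by prepending $\forall\exists$-blocks that do not involve $x_{n-i}$, the downward monotonicity in $x_{n-i}$ survives: $P_i(\overline x,x_{n-i},y_{n-i})$ implies $P_i(\overline x,x_{n-i}',y_{n-i})$ whenever $x_{n-i}'\le x_{n-i}$, with the same $y_{n-i}$. The contrapositive of this statement, after swapping the roles of $x_{n-i}$ and $x_{n-i}'$, is precisely condition~(ii). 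The one design point that drives the construction is that monotonicity must be inherited by every $P_i$, which is why the substitution $Y_j\mapsto[y_j]_{k_1,\dots,k_j}$ is performed simultaneously in all coordinates rather than only in the last one.
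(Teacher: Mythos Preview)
Your proof is correct but takes a genuinely different route from the paper's. The paper proceeds level by level, from $i=n-1$ down to $i=2$: given $P_{i+1}\in\Pi^0_{2i-1}$, it picks any $\Pi^0_{2i-3}$-predicate $P$ with $P_{i+1}(\overline x)\Leftrightarrow\forall x_{n-i}\exists y_{n-i}\,P(\overline x,x_{n-i},y_{n-i})$, then inserts a bounded quantifier, rewriting the right-hand side as $\forall x_{n-i}\,\varphi(\overline x,x_{n-i})$ with $\varphi(\overline x,x_{n-i})=\forall x'_{n-i}\le x_{n-i}\,\exists y_{n-i}\,P(\overline x,x'_{n-i},y_{n-i})$. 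The bounded quantifier does not raise the complexity (the paper cites Soare), so $\varphi\in\Sigma^0_{2i-2}$ and one can take $P_i\in\Pi^0_{2i-3}$ with $\varphi\Leftrightarrow\exists y_{n-i}\,P_i$. Monotonicity~(ii) is then immediate from the shape of $\varphi$, and the argument is repeated at each level.

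Your approach instead pushes all the bounded quantifiers into $P_2$ at once via tuple coding, and defines the higher $P_i$ purely by prefixing $\forall\exists$-blocks. This buys you a stronger monotonicity (the same witness $y_{n-i}$ works after decreasing $x_{n-i}$) and avoids repeating the bounded-quantifier argument at every level, at the cost of the global Skolemization argument needed to check that your reconstructed $P_n$ coincides with the given one. The paper's approach is more local and citation-driven; yours is more explicit and self-contained. Both are valid.
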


\begin{proof}
  The predicates $P_i$ are constructed by induction, starting with
  $i=n-1$ down to $i=2$ where the construction of $P_i$ does not
  assume that (i) or (ii) hold true for $P_{i+1}$.

  So let $2\le i<n$ such that $P_{i+1}(\overline x)$ is a
  $\Pi^0_{2(i+1)-3}$-predicate. Then there exists a
  $\Pi^0_{2i-3}$-predicate $P(\overline x,x_{n-i},y_{n-i})$ such that
  $P_{i+1}(\overline x)$ is logically equivalent to
  \[
     \forall x_{n-i}\exists y_{n-i}: P(\overline x,x_{n-i},y_{n-i})\ .
  \]
  But this is logically equivalent to
  \begin{equation}\label{eqt:tree_k}
     \forall x_{n-i}\ \forall x_{n-i}'\leq x_{n-i} \ \exists y_{n-i} :
         P(\overline{x},x_{n-i}',y_{n-i})\ .
  \end{equation}
  Let $\varphi(\overline{x},x_{n-i})$ be
  \[
     \forall x_{n-i}'\leq x_{n-i} \ \exists y_{n-i} :
         P(\overline{x},x_{n-i}',y_{n-i})\ .
  \]
  Then for any $x_{n-i}\in \N$,
  \begin{equation}\label{eqt:tree_neg}
    \neg \varphi(\overline{x},x_{n-i}) \ \Longrightarrow \
       \forall x \geq x_{n-i} : \neg \varphi(\overline{x},x)\ .
  \end{equation}
  Since $\forall x_{n-i}'\leq x_{n-i}$ is a bounded quantifier, the
  formula $\varphi(\overline{x},x_{n-i})$ belongs to $\Sigma^0_{2i-2}$
  (see for example \cite[p. 61]{Soa87}).  Thus there is a
  $\Pi^0_{2i-3}$-predicate $P_i(\overline{x},x_{n-i},y_{n-i})$ such that
  \begin{equation} \label{equiv-varphi-P_i}
    \varphi(\overline{x},x_{n-i})  \ \Longleftrightarrow \ \exists y_{n-i} : P_i(\overline{x},x_{n-i},y_{n-i})\ .
  \end{equation}
  Therefore (\ref{eqt:tree_k}) (and therefore $P_{i+1}(\overline{x})$)
  is logically equivalent to
  $\forall x_{n-i} \ \exists y_{n-i} : P_i(\overline{x},x_{n-i},y_{n-i})$.
  Moreover,
  \begin{eqnarray*}
  \forall y_{n-i} : \neg P_i(\overline{x},x_{n-i},y_{n-i}) 
  \ & \stackrel{\text{(\ref{equiv-varphi-P_i})}}{\Longleftrightarrow} & \
  \neg \varphi(\overline{x},x_{n-i}) \\
  &  \stackrel{\text{(\ref{eqt:tree_neg})}}{\Longrightarrow} & \
  \forall x \geq x_{n-i} : \neg \varphi(\overline{x},x) \\
  & \stackrel{\text{(\ref{equiv-varphi-P_i})}}{\Longleftrightarrow} & \
     \forall x \geq x_{n-i}\ \forall y_{n-i} : \neg
    P_i(\overline{x},x,y_{n-i})
 \end{eqnarray*} 
 This shows (ii).  \qed
\end{proof}
Let us fix the predicates $P_i$ for the rest of Section~\ref{sec:tree}.
By induction on $2\le i\le n$, we will construct the following trees:
\begin{itemize}
\item test trees $T^i_{\overline c} \in \T_i$ for $\overline
  c\in\N_+^{1+2(n-i)}$ (which depend on $P_i$) and
\item trees $U^i_\kappa \in \T_i$ for $\kappa \in \N_+ \cup
  \{\omega\}$ (we assume the standard order on $\N_+ \cup \{\omega\}$).
\end{itemize}
The idea is that $T^i_{\overline c} \cong U^i_\kappa$ if and only if
$\kappa=1+\inf(\{x_{n-i}\mid \forall y_{n-i}\in\N_+:\neg P_i(\overline
c,x_{n-i},y_{n-i})\}\cup\{\omega\})$. We will not prove this equivalence, but
the following simpler consequences for any $\overline
c\in\N_+^{1+2(n-i)}$:
\begin{description}
\item[(P1)] $P_i(\overline c)$ holds if and only if
  $T^i_{\overline c} \cong U^i_\omega$.
\item[(P2)] $P_i(\overline c)$ does not hold if and only if
  $T^i_{\overline c} \cong U^i_m$ for some $m \in \N_+$.
\end{description}
The first property is certainly sufficient for proving
$\Pi^0_{2n-3}$-hardness (with $i=n$), the second property and
therefore the trees $U^i_m$ for $m<\omega$ are used in the inductive
step. We also need the following property for the construction.
\begin{description}
\item[(P3)] No leaf of any of the trees $T^i_{\overline{c}}$ or
  $U^i_{\kappa}$ is a child of the root.
\end{description}
In the following section, we will describe the trees
$T^i_{\overline{c}}$ and $U^i_\kappa$ of height at
most $i$ and prove~(P1) and (P2). 
Condition (P3) will be obvious from the construction.
The subsequent section is then devoted to
prove the effective automaticity of these trees.

\subsection{Construction of trees}
\label{Construction of trees}

We start with a few definitions: 
A forest is a disjoint union of trees. 
Let $H_1$ and $H_2$ be two forests.
The forest $H_1^\omega$ is the disjoint union of countably many copies
of $H_1$. Formally, if $H_1=(V,E)$, then $H_1^\omega=(V\times\N,E')$
with $((v,i),(w,j))\in E'$ if and only if $(v,w)\in E$ and $i=j$. We write
$H_1\sim H_2$ for $H_1^\omega\cong H_2^\omega$.
Then $H_1 \sim H_2$ if they are formed, up to isomorphism, by
the same set of trees (i.e., any tree is isomorphic to some connected
component of $H_1$ if and only if it is isomorphic to some connected component of
$H_2$). 
If $H$ is a forest and $r$ does not belong to the domain of
$H$, then we denote with $r \circ H$ the tree that
results from adding $r$ to $H$ as new least element.

\subsubsection{Induction base: construction of $T^2_{\overline{c}}$ and
  $U^2_\kappa$}\label{sss:base}

For notational simplicity, we write $k$ for $1+2(n-2)$. 
Hence, $P_2$ is a $k$-ary predicate. By
Matiyasevich's theorem, we find two non-zero polynomials
$p_1(x_1,\ldots,x_\ell)$, $p_2(x_1,\ldots,x_\ell)\in
\N[\overline{x}]$, $\ell> k$, such that for any $\overline{c} \in
\N_+^k$:
\[
 P_2(\overline{c}) \text{ holds  } \ \iff \ \forall \overline{x} \in \N_+^{\ell-k} :
  p_1(\overline{c},\overline{x})\neq p_2(\overline{c},\overline{x})\ .
\]
For two numbers $m,n\in\N_+$, let $T[m,n]$ denote the tree of height
$1$ with exactly $C(m,n)$ leaves, where $C$ is the injective polynomial function 
from (\ref{function-C}).  Then define the following forests:
\begin{align*}
  H^2 &= \biguplus\{T[m,n]\mid m,n\in\N_+,m\neq n\}\\
  H^2_{\overline c} &=  H^2 \uplus\biguplus
\{T[p_1(\overline c,\overline x)+x_{\ell+1},
    p_2(\overline c,\overline x)+x_{\ell+1}]\mid \overline x\in\N_+^{\ell-k},x_{\ell+1}\in\N_+\}\\
  J^2_\kappa &= H^2 \uplus\biguplus\{T[x,x]\mid x\in\N_+, x > \kappa\}  \quad
  \text{ for }\kappa\in\N_+   \cup \{\omega\}
\end{align*}
Note that $J^2_\omega = H^2$. Moreover, the forests $J^2_\kappa$
($\kappa\in\N_+   \cup \{\omega\}$) are pairwise non-isomorphic, since $C$
is injective.

The trees $T^2_{\overline{c}}$ and $U^2_\kappa$, resp.,
are obtained from $H^2_{\overline c}$ and $J^2_\kappa$, resp., by
taking countably many copies and adding a root:
\begin{equation}\label{def-tree-from-forest-2}
  T^2_{\overline c} = r\circ (H^2_{\overline c})^\omega \qquad
  U^2_\kappa        = r\circ (J^2_\kappa)^\omega,
\end{equation}
see Figure~\ref{fig:tree1}.

\begin{figure}[t]
\begin{center}
\setlength{\unitlength}{1mm}
\begin{picture}(20,35)(0,5)
\gasset{Nadjustdist=0.8,Nadjust=wh,Nframe=n,Nfill=n,AHnb=0,ELdist=0.5,linewidth=.2}
   \put(-28,-2){The tree $T^2_{\overline{c}}$}
   \node(root)(-20,40){$r$}
 \gasset{Nw=1,Nh=1,Nframe=y,Nfill=y,Nadjust=n}
   \node(a)(-35,23){}
    \node(a'')(-5,23){}
   \drawedge[ELside=r](root,a){}
   \put(-41,30){\text{\scriptsize $\forall \overline{x}\in \N_+^{\ell-k}$}}
   \put(-41,33){\text{\scriptsize $\forall x_{\ell+1}\in \N_+$}}
    \drawedge[ELside=l,ELpos=60,ELdist=0](root,a''){}
    \put(-13,33){\text{\scriptsize $\forall m,n$}}
    \put(-11,30){\text{\scriptsize $m\neq n$}}

   \drawpolygon[Nframe=y,Nfill=n](-35,23)(-45,8)(-25,8)
   \put(-45,5){\text{\scriptsize $T[p_1(\overline{c},\overline{x})+x_{\ell+1},$}}
   \put(-42.1,2){\text{\scriptsize $p_2(\overline{c},\overline{x})+x_{\ell+1}]$}}
   \drawpolygon[Nframe=y,Nfill=n](-5,23)(-15,8)(5,8)
   \put(-10,5){\text{\scriptsize $T[m,n]$}}


\gasset{Nadjustdist=0.8,Nadjust=wh,Nframe=n,Nfill=n,AHnb=0,ELdist=0.5,linewidth=.2}
   \put(32,-2){The tree $U^2_{\kappa}$}
   \node(root)(40,40){$r$}
\gasset{Nw=1,Nh=1,Nframe=y,Nfill=y,Nadjust=n}
   \node(a)(25,23){}
    \node(a'')(55,23){}
   \drawedge[ELside=r](root,a){}
   \put(23,33){\text{\scriptsize $\forall x>\kappa$}}
    \drawedge[ELside=l,ELpos=60,ELdist=0](root,a''){}
    \put(47,33){\text{\scriptsize $\forall m,n$}}
    \put(49,30){\text{\scriptsize $m\neq n$}}
   \drawpolygon[Nframe=y,Nfill=n](25,23)(15,8)(35,8)
   \put(20,5){\text{\scriptsize $T[x,x]$}}
   \drawpolygon[Nframe=y,Nfill=n](55,23)(45,8)(65,8)
   \put(50,5){\text{\scriptsize $T[m,n]$}}

\end{picture}
\end{center}
\caption{\label{fig:tree1} The tree $T^2_{\overline{c}}$ and $U^2_{\kappa}$}
\end{figure}

The following lemma (stating (P1) for the $\Pi^0_1$-predicate~$P_2$ ,
i.e., for $i=2$) can be proved in a similar way as
Theorem~\ref{thm:equiv}.

\begin{lemma}\label{lem:tree_2_good}
  For any $\overline c\in \N_+^k$, we have:
  \[
  P_2(\overline c) \text{ holds } \iff
      H^2_{\overline{c}}\sim J^2_\omega \iff
      T^2_{\overline{c}}\cong U^2_\omega\ .
  \]
\end{lemma}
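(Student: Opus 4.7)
The plan is to reduce both equivalences to comparisons of iso-types of forest components. The second equivalence $H^2_{\overline c} \sim J^2_\omega \iff T^2_{\overline c} \cong U^2_\omega$ is essentially a definition chase: both trees are obtained by adjoining a fresh root to an $\omega$-fold forest, so any tree isomorphism must map root to root and hence induces an isomorphism of the child-forests, giving $T^2_{\overline c} \cong U^2_\omega$ iff $(H^2_{\overline c})^\omega \cong (J^2_\omega)^\omega$, which is exactly $H^2_{\overline c} \sim J^2_\omega$ by the definition of $\sim$.

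For the first equivalence, I would substitute $J^2_\omega = H^2$ and rewrite $H^2_{\overline c} = H^2 \uplus G_{\overline c}$, where $G_{\overline c}$ collects the trees $T[p_1(\overline c,\overline x)+x_{\ell+1},\, p_2(\overline c,\overline x)+x_{\ell+1}]$ for $\overline x\in\N_+^{\ell-k}$ and $x_{\ell+1}\in\N_+$. The structural fact to invoke is that for a countable forest $F$, every iso-type of a component of $F$ appears with multiplicity $\aleph_0$ in $F^\omega$; hence $F_1^\omega \cong F_2^\omega$ iff $F_1$ and $F_2$ share the same \emph{set} of component iso-types. Applying this to $F_1 = H^2_{\overline c}$ and $F_2 = H^2$ reduces $H^2_{\overline c} \sim H^2$ to the assertion that every component of $G_{\overline c}$ is isomorphic to some component of $H^2$.

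The final step uses the injectivity of $C$: since $T[a,b]$ has exactly $C(a,b)$ leaves and $C$ is injective on $\N_+^2$, we have $T[a,b] \cong T[a',b']$ iff $(a,b)=(a',b')$. Consequently $T[a,b]$ is a component of $H^2$ iff $a\neq b$, so the inclusion of $G_{\overline c}$-components in $H^2$ amounts to $p_1(\overline c,\overline x)+x_{\ell+1} \neq p_2(\overline c,\overline x)+x_{\ell+1}$ for all admissible $\overline x, x_{\ell+1}$; cancelling $x_{\ell+1}$ collapses this to $p_1(\overline c,\overline x)\neq p_2(\overline c,\overline x)$ for every $\overline x \in \N_+^{\ell-k}$, which is precisely $P_2(\overline c)$ by the choice of $p_1, p_2$. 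The main subtle point to spell out is the characterization $F_1^\omega \cong F_2^\omega$ iff iso-type supports agree, where countability of $F_i$ and the uniform $\aleph_0$ multiplicities in $F_i^\omega$ make only the support of iso-types matter; with this in hand, the remainder is bookkeeping.
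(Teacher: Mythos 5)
Your proof is correct and follows essentially the same route as the paper: reduce $T^2_{\overline c}\cong U^2_\omega$ to $H^2_{\overline c}\sim J^2_\omega$ via the root (the paper treats this step as immediate from the definitions), interpret $\sim$ as agreement of the sets of component iso-types, note $J^2_\omega = H^2 \subseteq H^2_{\overline c}$, and check that the extra components $T[p_1(\overline c,\overline x)+x_{\ell+1},\,p_2(\overline c,\overline x)+x_{\ell+1}]$ lie in $H^2$ exactly when $P_2(\overline c)$ holds, using injectivity of $C$ (which the paper leaves implicit) and cancellation of $x_{\ell+1}$.
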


\begin{proof}
  By (\ref{def-tree-from-forest-2}),
  it suffices to show the first equivalence. So first
  assume $P_2(\overline c)$ holds. We have to prove that the forests
  $H^2_{\overline c}$ and $J^2_\omega = H^2$ contain the same trees (up to
  isomorphism). Clearly, every tree from $H^2$ is contained in $H^2_{\overline c}$.
  For the other direction, let $\overline x\in\N_+^{\ell-k}$ and
  $x_{\ell+1}\in\N_+$. Then the tree $T[p_1(\overline c,\overline
  x)+x_{\ell+1},p_2(\overline c,\overline x)+x_{\ell+1}]$ occurs in
  $H^2_{\overline c}$. Since $P_2(\overline c)$ holds, we have
  $p_1(\overline c,\overline x)\neq p_2(\overline c,\overline x)$ and
  therefore $p_1(\overline c,\overline x)+x_{\ell+1}\neq p_2(\overline
  c,\overline x)+x_{\ell+1}$. Hence this tree also occurs in
  $H^2$.

  Conversely suppose $H^2_{\overline c}\sim H^2$ and let
  $\overline x\in\N_+^{\ell-k}$. Then the tree $T[p_1(\overline
  c,\overline x)+1,p_2(\overline c,\overline x)+1]$ occurs in
  $H^2_{\overline c}$ and therefore in $H^2$. Hence
  $p_1(\overline c,\overline x)\neq p_2(\overline c,\overline
  x)$. Since $\overline x$ was chosen arbitrarily, this
  implies~$P_2(\overline c)$.\qed
\end{proof}
Now consider the forest $H^2_{\overline c}$ once more.
If it contains a tree of the form $T[m,m]$ for some $m$ 
(necessarily $m \geq 2$), then it contains all 
trees $T[x,x]$ for $x\geq m$. Hence, $H^2_{\overline{c}} \sim 
J^2_\kappa$ for some $\kappa \in
\mathbb{N}_+ \cup \{\omega\}$, which implies
$T^2_{\overline{c}} \cong  U^2_\kappa$ for some $\kappa \in
\mathbb{N}_+ \cup \{\omega\}$.
Thus, with Lemma~\ref{lem:tree_2_good} we get:
$$
  P_2(\overline c)\text{ does not hold} \ \iff \
  T^2_{\overline{c}} \not\cong U^2_\omega \ \iff \
  \exists m \in \mathbb{N}_+ :  T^2_{\overline{c}} \cong U^2_m
$$
Hence we proved the following lemma, which states
(P2) for the $\Pi^0_1$-predicate~$P_2$, i.e., for $i=2$.

\begin{lemma}\label{lem:tree_2_bad}
  For any $\overline c\in \N_+^k$, we have:
  $$
  P_2(\overline c) \text{ does not hold } \  \iff  \
     \exists m \in \N_+ : T^2_{\overline{c}} \cong U^2_m \ .
  $$
\end{lemma}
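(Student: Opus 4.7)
The plan is to combine Lemma~\ref{lem:tree_2_good} with a simple structural analysis of the diagonal trees in $H^2_{\overline c}$. Lemma~\ref{lem:tree_2_good} already handles the case $T^2_{\overline c} \cong U^2_\omega$, equating it with $P_2(\overline c)$, so all that remains is to show that whenever $P_2(\overline c)$ fails, $T^2_{\overline c}$ is already isomorphic to some $U^2_m$ with $m \in \N_+$. Both $H^2_{\overline c}$ and $J^2_\kappa$ contain the common off-diagonal summand $H^2$, and every off-diagonal contribution coming from the $p_1,p_2$-summand of $H^2_{\overline c}$ already lies in $H^2$. Hence $\sim$-equivalence reduces to matching the sets of \emph{diagonal} trees $T[x,x]$ appearing on both sides.

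For the forward direction, suppose $\neg P_2(\overline c)$, so there exists $\overline{x}^\ast \in \N_+^{\ell-k}$ with $p_1(\overline c, \overline{x}^\ast) = p_2(\overline c, \overline{x}^\ast) \geq 1$ (the lower bound using that $p_1$ is a non-zero polynomial with non-negative coefficients). Set
\[
  m_0 \;=\; 1 + \min\{\,p_1(\overline c, \overline x) \mid \overline x \in \N_+^{\ell-k},\ p_1(\overline c, \overline x) = p_2(\overline c, \overline x)\,\},
\]
so $m_0 \geq 2$. The key observation is that the diagonals $T[x,x]$ appearing in $H^2_{\overline c}$ form exactly the final segment $\{T[x,x] \mid x \geq m_0\}$: for any $x \geq m_0$, choose a minimising $\overline x$ and take $x_{\ell+1} = x - p_1(\overline c, \overline x) \in \N_+$ to produce $T[x,x]$; conversely, any $T[x,x]$ in $H^2_{\overline c}$ comes with $x = p_1(\overline c, \overline x) + x_{\ell+1}$ for a valid $\overline x$ and $x_{\ell+1} \geq 1$, which forces $x \geq m_0$. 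Together with injectivity of $C$ (which ensures that the $T[m,n]$-trees are pairwise non-isomorphic), this yields $H^2_{\overline c} \sim J^2_{m_0-1}$ and hence $T^2_{\overline c} \cong U^2_{m_0-1}$ with $m_0 - 1 \in \N_+$.

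For the reverse direction, if $T^2_{\overline c} \cong U^2_m$ for some $m \in \N_+$, then $U^2_m \not\cong U^2_\omega$: the root of $U^2_m$ has a child whose subtree is $T[m+1,m+1]$ and therefore has $C(m+1,m+1)$ leaves, while in $U^2_\omega = r \circ H^2$ every child of the root roots a tree with $C(y,z)$ leaves for some $y \neq z$, and by injectivity of $C$ this number is never $C(m+1,m+1)$. Thus $T^2_{\overline c} \not\cong U^2_\omega$, and Lemma~\ref{lem:tree_2_good} gives $\neg P_2(\overline c)$. The only non-routine point is the closure property of the diagonal set in $H^2_{\overline c}$; it rests on the elementary fact that varying $x_{\ell+1}$ does not affect the constraint $p_1(\overline c, \overline x) = p_2(\overline c, \overline x)$, so the set of realised diagonal indices is upward closed in $\N_+$. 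The rest is bookkeeping.
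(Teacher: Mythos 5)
Your proof is correct and follows essentially the same route as the paper: you establish that the diagonal trees $T[x,x]$ realized in $H^2_{\overline c}$ form an upward-closed tail of $\N_+$ (so $H^2_{\overline c}\sim J^2_\kappa$ for some $\kappa$), and you combine this with Lemma~\ref{lem:tree_2_good} and the pairwise non-isomorphism of the $J^2_\kappa$; the only cosmetic difference is that you make the cutoff $m_0$ explicit, whereas the paper just notes that one $T[m,m]$ forces all $T[x,x]$ for $x\ge m$ and leaves $\kappa$ implicit.
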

This finishes the construction of the trees $T^2_{\overline c}$ and
$U^2_\kappa$ for $\kappa\in\N_+\cup\{\omega\}$, and the verification
of properties~(P1) and (P2).
Clearly, also (P3) holds for $T^2_{\overline c}$ and
$U^2_\kappa$ (all maximal paths have length 2).

\subsubsection{Induction step: construction of $T^{i+1}_{\overline{c}}$
and  $U^{i+1}_\kappa$} \label{sec-induction-trees}

For notational simplicity, we write again $k$ for $1+2(n-i-1)$ such
that $P_{i+1}$ is a $k$-ary predicate and $P_i$ a $(k+2)$-ary one.

We now apply the induction hypothesis. For any
$\overline{c} \in \N_+^k$, $x,y \in \N_+$, $\kappa \in \N_+ \cup \{\omega\}$
let $T^{i}_{\overline{c}xy}$ and
$U^i_\kappa$ be trees of height at most~$i$
such that:
\begin{itemize}
\item $P_i(\overline{c},x,y)$ holds if and only if
  $T^i_{\overline{c}xy} \cong U^i_\omega$.
\item $P_i(\overline{c},x,y)$ does not hold if and only if
  $T^i_{\overline{c}xy} \cong U^i_m$ for some $m \in \N_+$.
\end{itemize}
In a first step, we build the trees $T'_{\overline{c}xy}$ and
$U'_{\kappa,x}$ ($x \in \N_+$) from $T^i_{\overline{c}xy}$
and $U^i_\kappa$, resp., by adding $x$ leaves as
children of the root. This ensures
\begin{eqnarray}
    T'_{\overline{c}xy}\cong T'_{\overline{c}x'y'} &\iff&
      x=x' \wedge T^i_{\overline{c}xy}\cong T^i_{\overline{c}x'y'}\text{ and }
       \label{eqt:tree_T'(P)}\\
    T'_{\overline{c}xy} \cong U'_{\kappa,x'} &\iff&
    x=x' \wedge T^i_{\overline{c}xy}\cong U^i_\kappa\ ,
       \label{eqt:tree_U'}
\end{eqnarray}
since, by property~(P3), no leaf of any of the trees $T^i_{\overline{c}xy}$ or $U^i_\kappa$ is a child of the
root. Next, we collect these trees into forests as follows:
\begin{align*}
  H^{i+1} & = \biguplus\{ U'_{m,x} \mid x,m\in \N_+\}\ ,\\
  H^{i+1}_{\overline c}
    & = H^{i+1}\uplus\biguplus \{ T'_{\overline{c}xy} \mid x,y\in \N_+\}\ ,\text{ and }\\
  J^{i+1}_\kappa & = H^{i+1}\uplus
               \biguplus\{ U'_{\omega,x} \mid 1 \leq x < \kappa \}\text{ for } \kappa \in
               \N_+ \cup \{\omega\} .
\end{align*}
The trees $T^{i+1}_{\overline c}$ and $U^{i+1}_\kappa$, resp., are then obtained from
the forests $H^{i+1}_{\overline c}$ and $J^{i+1}_\kappa$, resp., by
taking countably many copies and adding a root:
\begin{equation}\label{def-tree-from-forest-i+1}
   T^{i+1}_{\overline c} = r\circ (H^{i+1}_{\overline c})^\omega \quad\text{and}\quad
   U^{i+1}_\kappa = r\circ (J^{i+1}_\kappa)^\omega,
\end{equation}
see Figure~\ref{fig:tree2}.

\begin{figure}[t]
\begin{center}
\setlength{\unitlength}{1mm}
\begin{picture}(20,35)(0,5)
\gasset{Nadjustdist=0.8,Nadjust=wh,Nframe=n,Nfill=n,AHnb=0,ELdist=0.5,linewidth=.2}
   \put(-33,-2){The tree $T^{i+1}_{\overline{c}}$}
   \node(root)(-25,40){$r$}
\gasset{Nw=1,Nh=1,Nframe=y,Nfill=y,Nadjust=n}
   \node(a)(-40,23){}
    \node(a'')(-10,23){}
   \drawedge[ELside=r](root,a){}
   \put(-48,30){\text{\scriptsize $\forall x,m\in\N_+$}}
    \drawedge[ELside=l,ELpos=60,ELdist=0](root,a''){}
    \put(-16,30){\text{\scriptsize $\forall x,y\in \N_+$}}
   \drawpolygon[Nframe=y,Nfill=n](-40,23)(-50,8)(-30,8)

   \node(c)(-57,16){}
   \node(c')(-50,16){}
   \drawedge(a,c){}
   \drawedge(a,c'){}
   \put(-55,16){$\ldots$}
   \put(-57,16){$\underbrace{\makebox(7,0){}}_{x}$}
   \put(-43,5){\text{\scriptsize $U^i_m$}}

   \drawpolygon[Nframe=y,Nfill=n](-10,23)(-20,8)(0,8)

   \node(d)(0,16){}
   \node(d')(7,16){}
   \drawedge(a'',d){}
   \drawedge(a'',d'){}
   \put(2,16){$\ldots$}
   \put(0,16){$\underbrace{\makebox(7,0){}}_{x}$}
   \put(-12,5){\text{\scriptsize $T^i_{\overline{c}xy}$}}


\gasset{Nadjustdist=0.8,Nadjust=wh,Nframe=n,Nfill=n,AHnb=0,ELdist=0.5,linewidth=.2}
   \put(37,-2){The tree $U^{i+1}_{\kappa}$}
   \node(root)(45,40){$r$}
\gasset{Nw=1,Nh=1,Nframe=y,Nfill=y,Nadjust=n}
   \node(a)(30,23){}
    \node(a'')(60,23){}
   \drawedge[ELside=r](root,a){}
   \put(21,30){\text{\scriptsize $\forall x,m\in \N_+$}}
    \drawedge[ELside=l,ELpos=60,ELdist=0](root,a''){}
    \put(54,30){\text{\scriptsize $\forall 1\leq x<\kappa$}}

   \drawpolygon[Nframe=y,Nfill=n](30,23)(20,8)(40,8)
   \node(c)(13,16){}
   \node(c')(20,16){}
   \drawedge(a,c){}
   \drawedge(a,c'){}
   \put(15,16){$\ldots$}
   \put(13,16){$\underbrace{\makebox(7,0){}}_{x}$}
   \put(28,5){\text{\scriptsize $U^i_m$}}
   \drawpolygon[Nframe=y,Nfill=n](60,23)(50,8)(70,8)
   \node(d)(70,16){}
   \node(d')(77,16){}
   \drawedge(a'',d){}
   \drawedge(a'',d'){}
   \put(72,16){$\ldots$}
   \put(70,16){$\underbrace{\makebox(7,0){}}_{x}$}
   \put(58,5){\text{\scriptsize $U^i_\omega$}}

\end{picture}
\end{center}
\caption{\label{fig:tree2} The tree $T^{i+1}_{\overline{c}}$ and $U^{i+1}_{\kappa}$}
\end{figure}

Note that the height of any of these trees is one more than the height
of the forests defining them and therefore at most $i+1$. Since none
of the connected components of the forests $H^{i+1}_{\overline c}$ and $J^{i+1}_\kappa$ is a
singleton, none of the trees in (\ref{def-tree-from-forest-i+1}) has a leaf that is a
child of the root and therefore (P3) holds.

\begin{lemma}\label{lem:tree_main1}
  For all $\overline{c} \in \N_+^k$ we have
  \[
    P_{i+1}(\overline{c}) \text{ holds } \iff
     H^{i+1}_{\overline{c}}\sim J^{i+1}_\omega \iff
     T^{i+1}_{\overline{c}}\cong U^{i+1}_\omega\ .
  \]
\end{lemma}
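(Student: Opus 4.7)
The second equivalence $H^{i+1}_{\overline c}\sim J^{i+1}_\omega\iff T^{i+1}_{\overline c}\cong U^{i+1}_\omega$ is immediate from the definitions in~(\ref{def-tree-from-forest-i+1}) together with the definition of $\sim$, so the real content is the first equivalence. My plan is to prove it by matching connected components of the two forests, exploiting the inductive properties~(P1) and~(P2) of the subtrees $T^i_{\overline c xy}$ and $U^i_\kappa$ together with~(\ref{eqt:tree_U'}).

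For the forward direction I would assume $P_{i+1}(\overline c)$, which Lemma~\ref{lem:normalform}(i) rewrites as $\forall x\,\exists y:P_i(\overline c,x,y)$. Both $H^{i+1}_{\overline c}$ and $J^{i+1}_\omega$ contain $H^{i+1}$, so only the extra components need to be matched. Any $T'_{\overline cxy}$ in $H^{i+1}_{\overline c}$ matches something on the other side by a case analysis on $P_i(\overline c,x,y)$: if it holds, (P1) gives $T^i_{\overline cxy}\cong U^i_\omega$, hence by~(\ref{eqt:tree_U'}) $T'_{\overline cxy}\cong U'_{\omega,x}\in J^{i+1}_\omega$; if it fails, (P2) gives $T^i_{\overline cxy}\cong U^i_m$ for some $m\in\N_+$, so $T'_{\overline cxy}\cong U'_{m,x}\in H^{i+1}$. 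Conversely, for every $U'_{\omega,x}\in J^{i+1}_\omega$ the hypothesis produces a $y$ with $P_i(\overline c,x,y)$, and again~(P1) combined with~(\ref{eqt:tree_U'}) delivers $U'_{\omega,x}\cong T'_{\overline cxy}\in H^{i+1}_{\overline c}$.

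The reverse direction is where the main obstacle lies. Assuming $H^{i+1}_{\overline c}\sim J^{i+1}_\omega$, I fix any $x\in\N_+$ and use that $U'_{\omega,x}\in J^{i+1}_\omega$ appears, up to isomorphism, as a component of $H^{i+1}_{\overline c}$. I want to conclude that this component is some $T'_{\overline cx'y}$, for then~(\ref{eqt:tree_U'}) forces $x'=x$ and $T^i_{\overline cxy}\cong U^i_\omega$, whereupon~(P1) yields $P_i(\overline c,x,y)$ and the arbitrariness of $x$ gives $P_{i+1}(\overline c)$. The difficulty is to rule out that $U'_{\omega,x}$ is instead isomorphic to some $U'_{m,x'}\in H^{i+1}$; by the analogue of~(\ref{eqt:tree_U'}) for two $U'$-trees (which follows from~(P3) in exactly the same way), this reduces to excluding $U^i_\omega\cong U^i_m$ for every $m\in\N_+$. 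I would therefore strengthen the inductive construction with the auxiliary invariant that the trees $U^i_\kappa$ are pairwise non-isomorphic as $\kappa$ ranges over $\N_+\cup\{\omega\}$: at the base this is clear from the distinguishing sets $\{T[x,x]\mid x>\kappa\}$ contained in the forests $J^2_\kappa$, and in the inductive step it is preserved because $J^{i+1}_\omega$ contains components $U'_{\omega,x}$ with $x$ arbitrarily large, whereas $J^{i+1}_m$ contains no such $U'_{\omega,x}$ for $x\geq m$, the previous invariant precluding any accidental reappearance of such a tree inside $H^{i+1}$.
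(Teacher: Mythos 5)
Your proof follows the paper's own argument for both directions: forward by the case split on $P_i(\overline c,x,y)$ combined with (P1)/(P2) and~(\ref{eqt:tree_U'}), and backward by using $U'_{\omega,x}\not\cong U'_{m,x'}$ to force the $H^{i+1}_{\overline c}$-component isomorphic to $U'_{\omega,x}$ to be some $T'_{\overline cx'y'}$. Where you genuinely improve on the paper is in noticing that the key non-isomorphism $U'_{\omega,x}\not\cong U'_{m,x'}$ (equivalently $U^i_\omega\not\cong U^i_m$ for $m\in\N_+$) is used but never established beyond the base case: the paper records that the forests $J^2_\kappa$ are pairwise non-isomorphic (since $C$ is injective) but then simply asserts the level-$i$ analogue inside the proof of Lemma~\ref{lem:tree_main1}, and relies on it again implicitly when deriving Lemma~\ref{lem:tree_main2} from Lemmas~\ref{lem:tree_main1} and~\ref{lem:tree_main1a}. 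Your auxiliary invariant --- that the $U^i_\kappa$ for $\kappa\in\N_+\cup\{\omega\}$ are pairwise non-isomorphic --- and your inductive verification of it are both correct and close this gap: for $\kappa<\kappa'$, the forest $J^{i+1}_{\kappa'}$ contains a component $U'_{\omega,\kappa}$, while $J^{i+1}_\kappa$ contains no component isomorphic to it, since its $U'_{\omega,x}$-components satisfy $x<\kappa$ and its $U'_{m,x}$-components from $H^{i+1}$ are ruled out by the inductive hypothesis together with the $U'$-analogue of~(\ref{eqt:tree_U'}). In short, this is the paper's proof with the tacitly assumed non-isomorphism made explicit and justified.
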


\begin{proof}
  Again, we only have to prove the first equivalence.

  First assume $H^{i+1}_{\overline c}\sim J^{i+1}_\omega$ and let
  $x\ge1$ be arbitrary. We have to exhibit some $y\ge1$ such that
  $P_i(\overline c,x,y)$ holds. Note that $U'_{\omega,x}$ belongs to
  $J^{i+1}_\omega$ and therefore to $H^{i+1}_{\overline c}$. Since
  $U'_{\omega,x}\not\cong U'_{m,x'}$ for any $m,x,x'\in \N_+$, this
  implies the existence of $x',y'\ge1$ with
  $T'_{\overline cx'y'}\cong U'_{\omega,x}$. By (\ref{eqt:tree_U'}), this is
  equivalent with $x=x'$ and $T^i_{\overline cxy'}\cong U^i_\omega$. Now the
  induction hypothesis implies that $P_i(\overline c,x,y')$ holds. Since
  $x\ge1$ was chosen arbitrarily, we can deduce $P_{i+1}(\overline c)$.

  Conversely suppose $P_{i+1}(\overline c)$. Let $T$ belong to
  $H^{i+1}_{\overline c}$. By the induction hypothesis, it is one of
  the trees $U'_{\kappa,x}$ for some $x\in \N_+$, $\kappa\in\N_+\cup\{\omega\}$. In
  any case, it also belongs to $J^{i+1}_\omega$. Hence it remains to
  show that any tree of the form $U'_{\omega,x}$ belongs to
  $H^{i+1}_{\overline c}$. So let $x\in \N_+$. Then, by $P_{i+1}(\overline
  c)$, there exists $y\in\N_+$ with $P_i(\overline c,x,y)$. By the
  induction hypothesis, we have $T^i_{\overline cxy}\cong U^i_\omega$
  and therefore $T'_{\overline cxy}\cong U'_{\omega,x}$ (which belongs
  to $H^{i+1}_{\overline c}$ by the very definition).  \qed
\end{proof}

\begin{lemma}\label{lem:tree_main1a}
  For all $\overline{c} \in \N_+^k$ there exists
  $\kappa \in \N_+\cup\{\omega\}$ such that
  $T^{i+1}_{\overline{c}} \cong  U^{i+1}_\kappa$.
\end{lemma}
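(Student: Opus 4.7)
The plan is to determine $\kappa\in\N_+\cup\{\omega\}$ so that $H^{i+1}_{\overline{c}} \sim J^{i+1}_\kappa$; the claimed isomorphism $T^{i+1}_{\overline{c}}\cong U^{i+1}_\kappa$ then follows at once from~(\ref{def-tree-from-forest-i+1}), because taking $\omega$ many copies and prepending a common new root turns $\sim$-equivalent forests into isomorphic trees.

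I first extract $\kappa$ from the predicate $P_i$. Let
\[
  X = \{x\in\N_+ \mid \exists y\in\N_+ : P_i(\overline{c},x,y) \text{ holds}\}.
\]
By property~(ii) of Lemma~\ref{lem:normalform}, whenever $x\notin X$ one has $x'\notin X$ for all $x'\ge x$. Hence $X$ is an initial segment of $\N_+$ and there is a unique $\kappa\in\N_+\cup\{\omega\}$ with $X=\{x\in\N_+\mid x<\kappa\}$ (with $X=\N_+$ corresponding to $\kappa=\omega$).

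Next, I compare the sets of isomorphism types of connected components of the two forests $H^{i+1}_{\overline{c}}$ and $J^{i+1}_\kappa$ (which, by the text's definition of $\sim$, is what we need to verify). The common sub-forest $H^{i+1}=\biguplus\{U'_{m,x}\mid m,x\in\N_+\}$ contributes exactly the iso types $\{U'_{m,x}\mid m,x\in\N_+\}$ on both sides. For the remaining components of $H^{i+1}_{\overline{c}}$, namely the trees $T'_{\overline{c}xy}$, the induction hypothesis gives $T^i_{\overline{c}xy}\cong U^i_\omega$ when $P_i(\overline{c},x,y)$ holds, and $T^i_{\overline{c}xy}\cong U^i_m$ for some $m\in\N_+$ otherwise; applying~(\ref{eqt:tree_U'}) then yields $T'_{\overline{c}xy}\cong U'_{\omega,x}$ or $T'_{\overline{c}xy}\cong U'_{m,x}$, respectively. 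Thus the only iso types contributed by the $T'_{\overline{c}xy}$ beyond those already in $H^{i+1}$ are precisely $\{U'_{\omega,x}\mid x\in X\}$, which matches exactly the contribution $\{U'_{\omega,x}\mid 1\le x<\kappa\}$ made by $J^{i+1}_\kappa$ over $H^{i+1}$.

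Putting these observations together gives $H^{i+1}_{\overline{c}}\sim J^{i+1}_\kappa$ and hence $T^{i+1}_{\overline{c}}\cong U^{i+1}_\kappa$. The main non-routine ingredient is the appeal to property~(ii) of Lemma~\ref{lem:normalform}: without its monotonicity, $X$ need not take the shape $\{1,\dots,\kappa-1\}$ and the required $\kappa$ might fail to exist. Everything else is direct bookkeeping from the induction hypothesis, the rewriting rule~(\ref{eqt:tree_U'}), and the definitions of the forests $H^{i+1}_{\overline{c}}$ and $J^{i+1}_\kappa$.
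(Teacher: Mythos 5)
Your proof is correct and follows essentially the same approach as the paper's: the paper takes $\kappa$ to be the least value with $\forall x\ge\kappa\,\forall y:\neg P_i(\overline{c},x,y)$ and uses Lemma~\ref{lem:normalform}(ii) to conclude $\forall 1\le x<\kappa\,\exists y:P_i(\overline{c},x,y)$, which is exactly your observation that the set $X$ is a downward-closed initial segment with supremum $\kappa$. The remaining bookkeeping — translating $P_i(\overline{c},x,y)$ (resp.\ its negation) via the induction hypothesis and~(\ref{eqt:tree_U'}) into $T'_{\overline{c}xy}\cong U'_{\omega,x}$ (resp.\ $\cong U'_{m,x}$), and then matching the iso-types of components of $H^{i+1}_{\overline{c}}$ and $J^{i+1}_\kappa$ — mirrors the paper's argument step for step.
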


\begin{proof}
It suffices to prove that
$H^{i+1}_{\overline{c}} \sim J^{i+1}_\kappa$ for some
$\kappa \in \N_+\cup\{\omega\}$.
Choose $\kappa$ as the smallest value in $\N_+\cup\{\omega\}$
such that
$$
\forall x \geq \kappa\, \forall y : \neg P_i(\overline{c},x,y)
$$
holds. By property (ii) from
Lemma~\ref{lem:normalform} for $P_i$, we get
$$
\forall 1 \leq x < \kappa \, \exists y : P_i(\overline{c},x,y).
$$
By the induction hypothesis, we get
$$
\forall x \geq \kappa\, \forall y : T'_{\overline{c}xy}
\not\cong U'_{\omega,x}
\ \text{ and } \
\forall 1 \leq x < \kappa \, \exists y :  T'_{\overline{c}xy} \cong
U'_{\omega,x}\ .
$$
It follows that $H^{i+1}_{\overline{c}}$ contains, apart from the
trees in $H^{i+1} = \biguplus\{ U'_{m,x} \mid x,m\in \N_+\}$, exactly
the trees from $\biguplus\{ U'_{\omega,x} \mid 1 \leq x < \kappa \}$. Hence,
$H^{i+1}_{\overline{c}} \sim J^{i+1}_\kappa$.
\qed
\end{proof}
Lemma~\ref{lem:tree_main1} and \ref{lem:tree_main1a} immediately imply:

\begin{lemma}\label{lem:tree_main2}
  For all $\overline{c} \in \N_+^k$ we have
  \[
  P_{i+1}(\overline{c}) \text{ does not hold }
   \iff \exists m \in \N_+ : T^{i+1}_{\overline{c}}\cong U^{i+1}_m.
  \]
\end{lemma}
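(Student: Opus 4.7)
The plan is to combine Lemmas~\ref{lem:tree_main1} and~\ref{lem:tree_main1a}, invoking additionally the fact that $U^{i+1}_m \not\cong U^{i+1}_\omega$ for every $m \in \N_+$, which has to be verified as a short aside.

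For the forward direction, I would assume that $P_{i+1}(\overline c)$ fails. Lemma~\ref{lem:tree_main1a} supplies some $\kappa \in \N_+ \cup \{\omega\}$ with $T^{i+1}_{\overline c} \cong U^{i+1}_\kappa$, while the contrapositive of Lemma~\ref{lem:tree_main1} forces $T^{i+1}_{\overline c} \not\cong U^{i+1}_\omega$. Therefore $\kappa \neq \omega$, i.e., $\kappa = m$ for some $m \in \N_+$, which provides the required witness.

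For the converse, suppose $T^{i+1}_{\overline c} \cong U^{i+1}_m$ with $m \in \N_+$. Coupled with the distinctness $U^{i+1}_m \not\cong U^{i+1}_\omega$, this yields $T^{i+1}_{\overline c} \not\cong U^{i+1}_\omega$, and one more application of Lemma~\ref{lem:tree_main1} concludes that $P_{i+1}(\overline c)$ does not hold. The distinctness itself I would check at the level of the defining forests: the component $U'_{\omega,m}$ belongs to $J^{i+1}_\omega$ but not to $J^{i+1}_m$, whose components are only the $U'_{m',x}$ with $m',x \in \N_+$ and the $U'_{\omega,x}$ with $1 \leq x < m$; by~(\ref{eqt:tree_U'}) together with the induction hypothesis that $U^i_\omega \not\cong U^i_{m'}$ for every $m' \in \N_+$ (otherwise the two clauses of the induction hypothesis would contradict each other on any $T^i_{\overline c x y} \cong U^i_\omega$), none of these components is isomorphic to $U'_{\omega,m}$. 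The only genuinely delicate point is this component-level separation; the rest is a short logical combination of the two preceding lemmas.
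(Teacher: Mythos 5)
Your overall decomposition is exactly the one the paper intends: Lemma~\ref{lem:tree_main1a} gives $T^{i+1}_{\overline c}\cong U^{i+1}_\kappa$ for some $\kappa$, and Lemma~\ref{lem:tree_main1} turns $P_{i+1}(\overline c)$ into $\kappa=\omega$; the paper simply writes that the two lemmas ``immediately imply'' the claim. You are right that the converse direction additionally needs $U^{i+1}_m\not\cong U^{i+1}_\omega$ (the forward direction you argue does not, since it follows from $T^{i+1}_{\overline c}\cong U^{i+1}_\kappa$ and $T^{i+1}_{\overline c}\not\cong U^{i+1}_\omega$ alone), and it is a merit of your write-up that you isolate and try to verify this. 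Your component-level separation is also the right picture: $U'_{\omega,m}$ lies in $J^{i+1}_\omega$, and none of the components $U'_{m',x}$ ($m',x\in\N_+$) or $U'_{\omega,x}$ ($1\le x<m$) of $J^{i+1}_m$ can be isomorphic to it.

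The gap is in how you justify the key ingredient $U^i_\omega\not\cong U^i_{m'}$. You argue that otherwise clauses (P1) and (P2) of the induction hypothesis ``would contradict each other on any $T^i_{\overline c xy}\cong U^i_\omega$''. But that reasoning presupposes that some tuple $(\overline c,x,y)$ with $T^i_{\overline c xy}\cong U^i_\omega$ actually exists, i.e.\ that $P_i$ is somewhere true. Nothing in the setup guarantees this: the predicate $P_i$ produced by Lemma~\ref{lem:normalform} can perfectly well be identically false (for instance if the original $P_n$ already is), and then (P1) is vacuous and no contradiction arises from $U^i_\omega\cong U^i_{m'}$. So the pairwise distinctness of the $U^i_\kappa$ cannot be recovered purely from (P1)/(P2); it must be established directly. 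The clean fix is to carry it as an additional invariant ``(P4): the trees $U^i_\kappa$, $\kappa\in\N_+\cup\{\omega\}$, are pairwise non-isomorphic.'' The base case $i=2$ is already in the paper (``the forests $J^2_\kappa$ are pairwise non-isomorphic, since $C$ is injective''), and the inductive step is precisely the component-level argument you give, now legitimately using (P4) at level $i$ in place of the flawed appeal to (P1)/(P2). A minor secondary point: the separation $U'_{\omega,m}\cong U'_{\kappa,x}\Rightarrow m=x\wedge U^i_\omega\cong U^i_\kappa$ is not literally~(\ref{eqt:tree_U'}), which is stated for $T'$ versus $U'$, but it follows by the same (P3)-based argument, so that is only a matter of citing the right justification.
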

In summary, we obtained the following:

\begin{proposition}\label{P-for-hardness}
  Let $n\ge2$ and let $P(x)$ be a $\Pi^0_{2n-3}$-predicate. Then, for
  any $c\in\N_+$, we have
  \[
    P(c)\text{ holds }\iff T^n_{c}\cong U^n_\omega\ .
  \]
\end{proposition}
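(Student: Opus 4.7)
The plan is to view Proposition~\ref{P-for-hardness} as the top-level ($i = n$) instance of property~(P1) from the inductive construction, and to close the induction by invoking the appropriate prior lemma.

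First I would align the notation: set $P_n(x_0) := P(x)$, which matches the arity convention of Lemma~\ref{lem:normalform} since $1 + 2(n-n) = 1$. Lemma~\ref{lem:normalform} then supplies the chain $P_{n-1}, P_{n-2}, \ldots, P_2$ together with the compatibility condition~(i) and the monotonicity condition~(ii). The tree construction from Sections~\ref{sss:base} and~\ref{sec-induction-trees} then runs inductively from $i = 2$ up to $i = n$, producing the trees $T^i_{\overline{c}}$ and $U^i_\kappa$ and certifying properties~(P1)--(P3) at each level via Lemmas~\ref{lem:tree_2_good}, \ref{lem:tree_2_bad}, \ref{lem:tree_main1}, and~\ref{lem:tree_main2}.

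With this machinery assembled, the proposition reduces to a single lemma application. For $n = 2$, it is exactly Lemma~\ref{lem:tree_2_good} applied to the unary predicate $P_2 = P$, where $\overline{c}$ degenerates to $c$. For $n \geq 3$, I would invoke Lemma~\ref{lem:tree_main1} at the level $i+1 = n$ (i.e., $i = n-1$) with $\overline{c} = c$, which gives directly the desired equivalence $P_n(c) \iff T^n_c \cong U^n_\omega$.

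I do not anticipate a substantive obstacle, since the heavy lifting has already been done by the preceding lemmas and the proposition merely records the top-level conclusion of the induction. The only points worth double-checking are notational: that $P = P_n$ sits naturally at the top of the chain produced by Lemma~\ref{lem:normalform}, and that the single variable $c$ plays the role of the tuple $\overline{c} \in \N_+^{1}$ in the tree construction of Section~\ref{sec-induction-trees}.
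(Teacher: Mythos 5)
Your proposal is correct and follows the paper's own (implicit) route: the proposition is exactly property~(P1) at the top level $i=n$, which is Lemma~\ref{lem:tree_2_good} when $n=2$ and Lemma~\ref{lem:tree_main1} (with $i=n-1$, where $k=1$) when $n\ge3$, with the induction closed by carrying (P1) and (P2) together via Lemmas~\ref{lem:tree_2_bad}, \ref{lem:tree_main1a}, and~\ref{lem:tree_main2}. The paper simply records this as ``In summary, we obtained the following,'' so your reconstruction matches the intended argument.
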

To infer the $\Pi^0_{2n-3}$-hardness of the isomorphism problem for
$\T_n$ from this proposition, it remains to be shown that the trees
$T^n_{c}$ and $U^n_\omega$ are effectively automatic -- this is the
topic of the next section.

\subsection{Automaticity}

For constructing automatic presentations for the trees
from the previous section, it is actually easier to work
with {\em dags} ({\em directed acyclic graphs}).
The {\em height} of a dag $D$ is the length (number of edges)
of a longest directed path in $D$. We only consider dags
of finite height. A {\em root} of a dag is a node without incoming edges. A dag $D = (V,E)$ can be unfolded into
a forest $\unfold(D)$ in the usual way:
Nodes of $\unfold(D)$ are directed paths in $D$
that cannot be extended to the left (i.e., the initial node
of the path is a root) and there is an edge
between a path $p$ and a path $p'$ if and only if $p'$ extends
$p$ by one more node. For a node $v \in V$ of $D$, we define
the tree $\unfold(D,v)$ as follows: First we restrict
$D$ to those nodes that are reachable from $v$ and then
we unfold the resulting dag.
We need the following lemma.

\begin{lemma}\label{from dags to trees}
  From given $k \in \N$ and an automatic dag $D = (V,E)$ of height at
  most $k$, one can construct effectively an automatic
  presentation~$\P$ with $\S(\P)\cong\unfold(D)$.
\end{lemma}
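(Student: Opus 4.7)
The plan is to encode each node of $\unfold(D)$---which is a directed path in $D$ of length at most $k$ starting at a root---as a $(k+1)$-tuple of words over $\Sigma$, and to use the synchronous convolution of this tuple as the word representing the node. Since $V \subseteq \Sigma^*$ is regular, I may assume without loss of generality that $\varepsilon \notin V$ (otherwise append a fresh marker symbol to every element of $V$), so that ``used'' and ``unused'' tapes can be distinguished. A path $(v_0, v_1, \ldots, v_j)$ with $0 \le j \le k$ is then encoded as the convolution $v_0 \otimes v_1 \otimes \cdots \otimes v_j \otimes \varepsilon \otimes \cdots \otimes \varepsilon$, padded with $k-j$ empty slots.

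To see that the domain is regular, I would build an NFA over $(\Sigma \cup \{\diamond\})^{k+1}$ that nondeterministically guesses $j \in \{0,1,\ldots,k\}$ in its state and checks in parallel: (i) the first $j+1$ inner tapes each carry a word in $V$; (ii) the remaining $k-j$ inner tapes are identically $\diamond$; (iii) inner tape $0$ carries a root of $D$, which is a regular set since ``having no $E$-predecessor'' is an FO-formula over the automatic presentation of $D$ and Theorem~\ref{thm:extFOaut} applies; and (iv) for each $0 \le i < j$, the pair of inner tapes $(i, i+1)$ spells out a convolution accepted by the $E$-automaton. Since $k$ is fixed, all four checks combine into a single NFA by a product construction. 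The edge relation of $\unfold(D)$ connects $(v_0, \ldots, v_j)$ to $(v_0, \ldots, v_j, w)$ precisely when $(v_j, w) \in E$, and a $2$-tape NFA reading the convolution of two such encodings can verify this by guessing $j$, comparing the first $j+1$ inner tapes of the two encodings coordinate-wise, checking that inner tape $j+1$ of the first encoding is unused while that of the second carries a word $w \in V$, and feeding the pair (inner tape $j$ of the first encoding, inner tape $j+1$ of the second) into the $E$-automaton.

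The main technical burden is the bookkeeping in the edge automaton: a single letter of the outer $2$-tape convolution already packs $2(k+1)$ inner symbols, and the $E$-automaton has to be driven by a specific pair of inner coordinates selected according to the guess of $j$. Because $k$ is fixed and every component is a finite-state device, this is a routine (though notationally cumbersome) product construction; the nondeterministic choice of $j$ isolates which pair of inner coordinates is wired into the $E$-automaton, and the remaining coordinates are handled by coordinate-wise equality or $\diamond$-checks. Equality of representations is trivial. Every step of the construction is effective in the given automata for $V$ and $E$ and in $k$, so this yields the desired automatic presentation $\P$ with $\S(\P) \cong \unfold(D)$.
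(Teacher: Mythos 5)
Your proposal is correct and follows essentially the same approach as the paper: nodes of $\unfold(D)$ are represented as convolutions of the directed paths (of bounded length) from a root in $D$, which yields a regular domain and an FA-recognizable edge relation since roots are FO-definable and the edge relation of $D$ is automatic. The paper states this quite tersely; you merely make the padded fixed-arity convolution and the resulting product automaton explicit (and, incidentally, correct a small off-by-one in the paper, since a dag of height $k$ has paths with up to $k+1$ vertices).
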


\begin{proof}
  The universe for our automatic copy of $\unfold(D)$ is the
  set $P$ of all convolutions $v_1 \otimes v_2 \otimes \cdots \otimes
  v_m$, where $v_1$ is a root and $(v_i, v_{i+1}) \in E$ for all $1
  \leq i < m$. Since $D$ has height at most $k$, we have $m \leq k$.
  Since the edge relation of $D$ is automatic and since the set of all
  roots in $D$ is first-order definable and hence regular, $P$ is
  indeed a regular set.  Moreover, the edge relation of $\unfold(D)$
  becomes clearly FA recognizable on $P$.  \qed
\end{proof}
For $2\le i\le n$, let us consider the following forest:
\begin{align*}
  F^i =
    & \biguplus \{T^i_{\overline c}\mid \overline c\in\N_+^{1+2(n-i)}\}
     \uplus \biguplus \{U^i_\kappa \mid \kappa\in\N_+ \cup \{\omega\}\}\ .
\end{align*}
Technically, this section proves by induction over $i$ the following
statement:
\begin{proposition}\label{P:forest-automatic}
  For $2\le i\le n$, there exists an automatic copy $\mathcal F^i$ of
  $F^i$ and an isomorphism $f^i:F^i\to \mathcal F^i$ that maps
  \begin{enumerate}
  \item the root of the tree $T^i_{\overline c}$ to $a^{\overline c}$
    (for all $\overline c\in\N_+^{1+2(n-i)}$),
  \item the root of the tree $U^i_\omega$ to $\varepsilon$, and
  \item the root of the tree $U^i_m$ to $b^m$ (for all
    $m\in\N_+$).
  \end{enumerate}
\end{proposition}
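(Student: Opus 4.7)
The proof proceeds by induction on $i$. Both cases are built on the philosophy that isomorphic subtrees should be shared in a dag and then separated by unfolding via Lemma~\ref{from dags to trees}.

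For the base case $i=2$, the connected components of $F^2$ are height-$2$ trees, so the only information to record about a depth-$1$ vertex is the ``leaf count'' of its $T[m,n]$-subtree, equal to $C(m,n)$. I partition the ambient alphabet into three sectors so that the roots $a^{\overline c}$, $\varepsilon$, and $b^m$ prescribed by (1)--(3) live in disjoint parts of the universe. A depth-$1$ vertex is encoded as a convolution of its parent's name, a description of which $T[m,n]$-subtree it represents, and a copy-counter over $\N$ that realises the $(\cdot)^\omega$ blow-up. The leaves below such a vertex are the accepting runs, in the sense of the automaton $\Run_\A$ introduced just before Lemma~\ref{lm:equiv-runs}, of a counting automaton whose number of runs on the encoded input is exactly the required $C(m,n)$. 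The counting automaton is produced by Lemma~\ref{lm:equiv-runs} applied to the polynomial $C(p_1(\overline c,\overline x)+x_{\ell+1},\,p_2(\overline c,\overline x)+x_{\ell+1})$ in the $T^2_{\overline c}$-sector, to $C(m,n)$ for $m\neq n$ in the ``generic'' $H^2$-part shared across all roots, and to $C(x,x)$ for $x>m$ in each $U^2_m$-sector. Regularity of the universe and FA-recognizability of the edge relation then follow from routine closure properties of regular languages.

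For the inductive step I assume an automatic copy $\mathcal F^i$ of $F^i$ whose roots obey (1)--(3), and I build an automatic dag $D^{i+1}$ as follows. Its top layer consists of the new roots $a^{\overline c}$ for $\overline c\in\N_+^{1+2(n-i-1)}$, $\varepsilon$, and $b^m$ for $m\in\N_+$. Its intermediate layer consists of convolutions of a top-root name with a description of which $T'_{\overline c x y}$ or $U'_{\kappa,x}$ the vertex realises, together with a copy-counter ranging over an infinite regular index set. Each intermediate vertex has exactly one outgoing edge to the $\mathcal F^i$-root of the intended inner subtree ($a^{\overline c x y}$, $b^m$, or $\varepsilon$) and $x$ outgoing edges to freshly-encoded ``extra leaves'' that implement the ``add $x$ leaves to the root'' operation turning $T^i_{\overline c x y}$ into $T'_{\overline c x y}$ or $U^i_\kappa$ into $U'_{\kappa,x}$. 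The universe of $\mathcal F^i$ is reused as the shared bottom layer. Unfolding $D^{i+1}$ via Lemma~\ref{from dags to trees} yields one fresh copy of $T^i_{\overline c x y}$ or $U^i_\kappa$ per intermediate vertex, decorated with its extra leaves, and the disjoint union under each new top-root reproduces $H^{i+1}_{\overline c}$ or $J^{i+1}_\kappa$ exactly.

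The main obstacle is arranging the \emph{exact} multiplicity of subtrees below each new root. The forests $H^{i+1}_{\overline c}$ and $J^{i+1}_\kappa$ demand $\omega$-many copies of each subtype, which I achieve by letting the copy-counter range over an infinite regular set. More delicately, $J^{i+1}_m$ for finite $m$ contains the trees $U'_{\omega,x}$ only for $1\leq x<m$, so the intermediate layer below $b^m$ must be cut off by the regular condition ``$x<m$''; this is expressible since $x$ and $m$ both appear in the convolution. The sector partition of the alphabet ensures, analogously, that the intermediate vertices below $a^{\overline c}$ point only to $\mathcal F^i$-roots of the form $a^{\overline c x y}$ and never to $a^{\overline c' x y}$ for $\overline c'\neq\overline c$. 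Condition (P3) is preserved automatically because no connected component of $H^{i+1}_{\overline c}$ or $J^{i+1}_\kappa$ is a singleton.
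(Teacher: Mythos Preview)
Your base case is essentially the paper's construction and is fine. The inductive step, however, contains a structural error that breaks the height bound and the isomorphism type.

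You write that each intermediate vertex ``has exactly one outgoing edge to the $\mathcal F^i$-root of the intended inner subtree ($a^{\overline c x y}$, $b^m$, or $\varepsilon$) and $x$ outgoing edges to freshly-encoded `extra leaves'.'' But then, after unfolding, the intermediate vertex becomes a node with exactly $x+1$ children: $x$ leaves and one node carrying a copy of $T^i_{\overline c xy}$ (resp.\ $U^i_\kappa$) below it. This is \emph{not} the tree $T'_{\overline c xy}$: by definition $T'_{\overline c xy}$ is $T^i_{\overline c xy}$ with $x$ extra leaves attached \emph{to the same root}, so its root has infinitely many children (those of $T^i_{\overline c xy}$, plus $x$ leaves), not $x+1$. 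Consequently your $T^{i+1}_{\overline c}$ has height $i+2$ rather than $i+1$, and its level-$1$ subtrees have the wrong isomorphism type. The invariants (\ref{eqt:tree_T'(P)}) and (\ref{eqt:tree_U'}), and hence (P1)--(P2), fail for your construction.

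The fix, which is what the paper does, is to avoid the extra edge from the intermediate vertex to the old root. For the $T'_{\overline c xy}$ part, simply keep $a^{\overline c xy}$ itself as the root and add $x$ new leaf children to it; this is automatic because $x$ is a component of $a^{\overline c xy}$. For the $U'_{\kappa,x}$ part, since the same $U^i_\kappa$ must be equipped with different numbers $x$ of extra leaves, introduce new roots (the paper uses $\sharp^x\otimes b^m$ and $\sharp^x\otimes\varepsilon$) that inherit the \emph{children} of $b^m$ resp.\ $\varepsilon$ in $\mathcal D^i$---not the roots themselves---and additionally get $x$ leaf children. Only then do you add the top layer with the $\$$-indexed copy counter. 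The key point is that the ``add $x$ leaves'' operation must be implemented at the existing root level, not by inserting a fresh level above it.
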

This will give the desired result since $T^n_c$ is then
isomorphic to the connected component of $\mathcal F^n$ that contains
the word $a^c$ (and similarly for $U^n_\kappa$).  Note
that this connected component is automatic by
Theorem~\ref{thm:extFOaut}, since the forest $\mathcal F^n$ has
bounded height. Moreover, an automatic presentation for 
the connected component containing $a^c$ can be computed from $c$.

By Lemma~\ref{from dags to trees}, it suffices to construct an
automatic dag $\mathcal{D}^i$ such that there is an isomorphism
$h:\unfold(\mathcal{D}^i) \to \mathcal F^i$ that is the identity on
the set of roots of $\mathcal D^i$.

\subsubsection{Induction base: the automatic dag $\mathcal D^2$}

Recall the definitions of
$\Sigma_\ell^a$, $a^{\overline e}$,
and $\otimes_k(L)$ from Section~\ref{sec:equiv}.

\begin{lemma}\label{L1}
  From $\ell\in\N_+$, $q_1,q_2\in\N[x_1,\dots,x_\ell]$, and
  a symbol $a$, one can compute an automatic
  forest of height $1$ over an alphabet
  $\Sigma_\ell^a \uplus\Gamma$ such that
  \begin{itemize}
  \item the roots are the words from $\otimes_\ell(a^+)$,
  \item the leaves are words from $\Gamma^+$, and
  \item the tree rooted at $a^{\overline{e}}$ is
    isomorphic to $T[q_1(\overline e),q_2(\overline e)]$.
  \end{itemize}
\end{lemma}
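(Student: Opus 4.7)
The plan is to combine Lemma~\ref{lm:equiv-runs} with the $\Run_{\A}$-construction, so that the accepting runs of a single automaton on input $a^{\overline e}$ become the leaves under the root $a^{\overline e}$. Concretely, I first form the polynomial $q(\overline x) = C(q_1(\overline x), q_2(\overline x)) = (q_1(\overline x)+q_2(\overline x))^2 + 3 q_1(\overline x) + q_2(\overline x) \in \N[\overline x]$. If $q_1 = q_2 = 0$, then $C(q_1(\overline e), q_2(\overline e)) = 0$ for every $\overline e$, so every tree is a singleton and the forest reduces to $\otimes_\ell(a^+)$ with empty edge relation, which is trivially automatic. Otherwise $q$ is a non-zero polynomial in $\N[\overline x]$ (no cancellation is possible since all coefficients are non-negative), so Lemma~\ref{lm:equiv-runs} yields an automaton $\A = \A[q]$ over $\Sigma_\ell^a$ with $L(\A) = \otimes_\ell(a^+)$ whose number of accepting runs on $a^{\overline e}$ equals $q(\overline e) = C(q_1(\overline e), q_2(\overline e))$.

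Next, let $\Gamma$ be the transition set of $\A$ and consider $\Run_{\A}$ over the alphabet $\Gamma$. I would define the forest over $\Sigma_\ell^a \uplus \Gamma$ by: the roots are the words in $\otimes_\ell(a^+) \subseteq (\Sigma_\ell^a)^+$; the leaves are the words in $L(\Run_{\A}) \setminus \{\varepsilon\} \subseteq \Gamma^+$; and there is an edge from $r$ to $u$ exactly when $r \in \otimes_\ell(a^+)$, $u \in L(\Run_{\A})$, and $\pi(u) = r$. By the characterisation that $u \in L(\Run_{\A})$ iff $u$ is an accepting run of $\A$ on $\pi(u)$, the children of $r = a^{\overline e}$ are in bijection with the accepting runs of $\A$ on $a^{\overline e}$, and therefore number exactly $C(q_1(\overline e), q_2(\overline e))$; hence the tree rooted at $a^{\overline e}$ is isomorphic to $T[q_1(\overline e), q_2(\overline e)]$. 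Distinct roots get disjoint leaf sets since $\pi(u)$ determines the root above $u$.

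Finally, I verify automaticity. The universe is a regular subset of $(\Sigma_\ell^a \uplus \Gamma)^*$ because $\otimes_\ell(a^+)$ and $L(\Run_{\A})$ are both regular. The main, but still mild, point is the FA-recognisability of the edge relation: any accepting run $u$ of $\A$ on $r$ satisfies $|u| = |r|$, so the convolution $r \otimes u$ is unpadded, and a synchronous two-tape automaton can scan it position by position while (a) simulating $\Run_{\A}$ on the $\Gamma$-track, and (b) checking that the letter $\pi$-projected from the current transition in $u$ agrees with the current letter of $r$. Since both checks are local at the read position, they combine into a single automaton. The whole construction is clearly effective in $\ell$, $q_1$, and $q_2$.
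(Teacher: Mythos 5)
Your proposal is correct and follows the same route as the paper: set $p = C(q_1,q_2)$, apply Lemma~\ref{lm:equiv-runs} to get $\A[p]$, take leaves from $L(\Run_{\A[p]})$ with the edge relation $\pi(u)=r$, and check automaticity via length-preservation of $\pi$. The extra case split for $q_1=q_2=0$ (where $C(q_1,q_2)$ is the zero polynomial, which Lemma~\ref{lm:equiv-runs} excludes) is a sound and careful addition that the paper's proof glosses over, though it never arises in the paper's applications.
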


\begin{proof}
  Set $p(x_1,\ldots,x_\ell)=C(q_1(x_1,\ldots,x_\ell),q_2(x_1,\ldots,x_\ell))$
  and recall the definition of the automata $\A[p]$ and $\Run_{\A[p]}$
  from Section~\ref{sec:equiv}.  Recall also that we let $\pi$ be the
  projection with $\pi(p,a,q)=a$ for a transition $(p,a,q)$ of $\A[p]$.
  Then let
  \begin{eqnarray*}
  L[q_1,q_2] & = & \otimes_\ell(a^+) \cup (\pi^{-1}(\otimes_\ell(a^+)) \cap L(\Run_{\A[p]}))\text{ and } \\
  E[q_1,q_2] & = & \{ (u,v) \mid u\in \otimes_\ell(a^+), v\in \pi^{-1}(u) \cap L(\Run_{\A[p]}) \} \ .
  \end{eqnarray*}
  Then $L[q_1,q_2]$ is regular and
  $E[q_1,q_2]$ is FA recognizable, i.e., the pair $(L[q_1,q_2];E[q_1,q_2])$
  is an automatic graph. It is actually a forest of height $1$, the
  words from $\otimes_\ell(a^+)$ form the roots, and the tree rooted
  at $a^{\overline{e}}$ has precisely
  $p(\overline{e})$ leaves, i.e., it is isomorphic to
  $T[q_1(\overline e),q_2(\overline e)]$.\qed
\end{proof}
{}From now on, we use the notations from Section~\ref{sss:base}.
Using Lemma~\ref{L1}, we can compute automatic forests $\mathcal F_1$
and $\mathcal F_2$ over alphabets
$\Sigma_{\ell+1}^a \uplus\Gamma_1$ and
$\Sigma_2^b \uplus\Gamma_2$, respectively, such that
\begin{enumerate}[(a)]
  \item the roots of $\mathcal F_1$ are the words from
    $\otimes_{\ell+1}(a^+)$,
  \item the roots of $\mathcal F_2$ are the words from $\otimes_2(b^+)$,
  \item the leaves of $\mathcal F_i$ are words from $\Gamma_i^+$ ($i \in \{1,2\}$),
  \item the tree rooted at $a^{\overline{e} e_{\ell+1}}$
    is isomorphic to $T[p_1(\overline e)+e_{\ell+1},p_2(\overline
    e)+e_{\ell+1}]$ for $\overline e \in\N_+^{\ell}$, $e_{\ell+1}\in\N_+$,
  \item the tree rooted at $b^{e_1 e_2}$ is isomorphic to
    $T[e_1,e_2]$ for $e_1,e_2\in\N_+$.
\end{enumerate}
We can assume that the alphabets $\Gamma_1$, $\Gamma_2$,
$\Sigma_{\ell+1}^a$, and $\Sigma_2^b$ are mutually disjoint.
Let $\mathcal F=(V_{\mathcal F},E_{\mathcal F})$ be the disjoint union of $\mathcal F_1$
and $\mathcal F_2$; it is effectively automatic.

The universe of the automatic dag $\mathcal D^2$ is the 
regular language 
$$
\otimes_k(a^+) \cup b^* \cup (\$^* \otimes V_{\mathcal F}),
$$
where $\$$ is a new symbol.
We have the following edges:
\begin{itemize}
\item For $u,v \in V_{\mathcal F}$, $\$^m \otimes u$ is connected to $\$^n
  \otimes v$ if and only if $m=n$ and $(u,v) \in E_{\mathcal F}$.
  This produces $\aleph_0$ many copies of $\mathcal{F}$.

\item $a^{\overline c}$ is connected to any word from $\$^* \otimes (\{ a^{\overline c\, \overline x} \mid \overline x \in
 \N_+^{\ell-k+1} \} \cup \{b^{e_1 e_2} \mid e_1 \neq e_2 \})$. By point (d) and (e) above, this means that
 the tree $\unfold(\mathcal D^2, a^{\overline c})$ has $\aleph_0$ many subtrees isomorphic to
 $T[p_1(\overline c \, \overline x)+x_{\ell+1},
    p_2(\overline c\, \overline x)+x_{\ell+1}]$ for $\overline x \in \N_+^{\ell-k}$, $x_{\ell+1}
    \in \N_+$ and $T[e_1,e_2]$ for $e_1,e_2 \in \N_+$, $e_1 \neq e_2$. 
 Hence, $\unfold(\mathcal D^2, a^{\overline c}) \cong T^2_{\overline{c}}$.

\item $\varepsilon$ is connected to all words from $\$^* \otimes \{b^{e_1 e_2} \mid e_1 \neq e_2 \}$. By (e) above, 
this means that the tree $\unfold(\mathcal D^2, \varepsilon)$ has $\aleph_0$ many subtrees 
isomorphic to $T[e_1,e_2]$ for $e_1,e_2 \in \N_+$, $e_1 \neq e_2$. Hence, $\unfold(\mathcal D^2, \varepsilon) \cong U^2_\omega$.

\item $b^m$ ($m \in \N_+$) is connected to all words from 
 $\$^* \otimes \{b^{e_1 e_2} \mid e_1 \neq e_2 \text { or } e_1=e_2 > m \}$. By (e) above,
 this means that the tree $\unfold(\mathcal D^2, b^m)$ has $\aleph_0$ many subtrees isomorphic to
 $T[e_1,e_2]$ for all $e_1,e_2 \in \N_+$ with $e_1 \neq e_2$ or 
 $e_1=e_2 > m$. Hence, $\unfold(\mathcal D^2, b^m) \cong U^2_m$.
\end{itemize}
Thus, $\unfold(\mathcal{D}_2) \cong F^2$ and the roots are as 
required in Proposition~\ref{P:forest-automatic}, see Figure~\ref{fig:auto_tree1}.
Moreover, it is clear that $\mathcal{D}_2$ is automatic.

\begin{figure}[t]
\begin{center}
\setlength{\unitlength}{1mm}
\begin{picture}(20,40)(0,0)
\gasset{Nadjustdist=0.8,Nadjust=wh,Nframe=n,Nfill=n,AHnb=0,ELdist=0.5,linewidth=.2}
   \drawpolygon[Nframe=y,Nfill=n](-5,23)(-15,8)(5,8)
    \drawpolygon[Nframe=y,Nfill=n](-35,23)(-45,8)(-25,8)
   \put(-34,-2){$\unfold(\mathcal{D}^2,a^{\overline{c}}) \cong T^2_{\overline{c}}$}
   \node(root)(-20,40){$a^{\overline{c}}$}
   \node[fillcolor=white](a)(-35,23){\scriptsize $a^{\overline{c}}\otimes \$^m \otimes a^{\overline{c}\overline{x}}$}
    \node[fillcolor=white](a'')(-5,23){\scriptsize $a^{\overline{c}}\otimes \$^m\otimes b^{e_1e_2}$}
   \drawedge[ELside=r](root,a){}
   \put(-41,30){\text{\scriptsize $\forall m\in \N$}}
   \put(-41,33){\text{\scriptsize $\forall \overline{x}\in \N_+^{\ell-k+1}$}}
    \drawedge[ELside=l,ELpos=60,ELdist=0](root,a''){}
    \put(-13,33){\text{\scriptsize $\forall m, e_1,e_2$}}
    \put(-11,30){\text{\scriptsize $e_1\neq e_2$}}
\gasset{Nw=1,Nh=1,Nframe=y,Nfill=y,Nadjust=n}
   \put(-45,5){\text{\scriptsize $T[p_1(\overline{c},\overline{x})+x_{\ell+1},$}}
   \put(-42.1,2){\text{\scriptsize $p_2(\overline{c},\overline{x})+x_{\ell+1}]$}}
   \put(-10,5){\text{\scriptsize $T[e_1,e_2]$}}


\gasset{Nadjustdist=0.8,Nadjust=wh,Nframe=n,Nfill=n,AHnb=0,ELdist=0.5,linewidth=.2}
   \drawpolygon[Nframe=y,Nfill=n](25,23)(15,8)(35,8)
   \put(12,-2){$\unfold(\mathcal{D}^2, \varepsilon) \cong U^2_{\omega}$}
   \node(root)(25,40){$\varepsilon$}
   \node[fillcolor=white](a)(25,23){\text{\scriptsize $\varepsilon \otimes \$^m \otimes b^{e_1e_2}$}}
   \drawedge[ELside=r](root,a){}
   \put(12,33){\text{\scriptsize $\forall m, e_1,e_2$}}
   \put(15,30){\text{\scriptsize $e_1\neq e_2$}}
\gasset{Nw=1,Nh=1,Nframe=y,Nfill=y,Nadjust=n}
   \put(20,5){\text{\scriptsize $T[e_1,e_2]$}}


\gasset{Nadjustdist=0.8,Nadjust=wh,Nframe=n,Nfill=n,AHnb=0,ELdist=0.5,linewidth=.2}
    \drawpolygon[Nframe=y,Nfill=n](60,23)(50,8)(70,8)
   \put(46,-2){$\unfold(\mathcal{D}^2,b^m)\cong U^2_{m}$}
   \node(root)(60,40){$b^m$}
    \node[fillcolor=white](a'')(60,23){\scriptsize $b^m\otimes \$^n \otimes b^{e_1e_2}$}
    \drawedge[ELside=l,ELpos=60,ELdist=0](root,a''){}
    \put(47,33){\text{\scriptsize $\forall n,e_1,e_2$}}
    \put(46,30){\text{\scriptsize $e_1\neq e_2$ or}}
    \put(44,27){\text{\scriptsize $e_1=e_2>m$}}
\gasset{Nw=1,Nh=1,Nframe=y,Nfill=y,Nadjust=n}
   \put(55,5){\text{\scriptsize $T[e_1,e_2]$}}

\end{picture}
\end{center}
\caption{\label{fig:auto_tree1} Automatic presentation of $T^2_{\overline{c}}$ and $U^2_{\kappa}$}
\end{figure}

\subsubsection{Induction step: the automatic dag $\mathcal D^{i+1}$}
\label{sec-induction-automatic-trees}

Suppose $\mathcal D^i=(V,E)$ is such that $\mathcal F^i = \unfold(\mathcal
D^i)$ is as described in Proposition~\ref{P:forest-automatic}.

We use the notations from Section~\ref{sec-induction-trees}. We first
build another automatic dag $\mathcal D'$, whose unfolding will comprise (copies
of) all the trees $U'_{\kappa,x}$ ($\kappa \in \N_+\cup\{\omega\}$, $x \in
\N_+$) and $T'_{\overline cxy}$ ($\overline c \in \N_+^k$, $x,y \in \N_+$).
Recall that the set of roots of $\mathcal D^i$ is
$\otimes_{k+2}(a^+) \cup b^* \subseteq V$.
The universe of $\mathcal D'$ consists of the regular language
$$
(V \setminus b^*) \cup  (\sharp^+ \otimes b^*) \cup \sharp_1^+ \sharp_2^*,
$$
where $\sharp, \sharp_1$, and $\sharp_2$ are new symbols.
We have the following edges in $\mathcal D'$:
\begin{itemize}
\item All edges from $E$ except those with an initial node in $b^*$
are present in $\mathcal D'$.
\item $a^{\overline c xy} \in V$ is connected to all words of the form
$\sharp_1^i \sharp_2^{x-i}$ for $\overline c \in \N_+^k,x,y \in \N_+$, and $1 \leq i \leq x$.
This ensures that the subtree rooted at $a^{\overline c xy}$ gets $x$ new leaves,
which are children of the root. Hence $\unfold(\mathcal D',a^{\overline c xy})
\cong T'_{\overline cxy}$.
\item $\sharp^x \otimes b^m$ for $x \in \N_+$ and $m \in \N$
 is connected  to (i) all nodes to which $b^m$ is connected
in $\mathcal D^i$ and  to (ii) all nodes from $\sharp_1^i \sharp_2^{x-i}$ for $1 \leq i \leq x$.
This ensures that $\unfold(\mathcal D',\sharp^x \otimes b^m)
 \cong U'_{m,x}$ in case $m \in \N_+$ and $\unfold(\mathcal D',\sharp^x \otimes \varepsilon)
 \cong U'_{\omega,x}$.
\end{itemize}
In summary, $\mathcal D'$ is a dag, whose unfolding
consists of (a copy
of) $U'_{\omega,x}$ rooted at $\sharp^x\otimes\varepsilon$, $U'_{m,x}$ ($m \in \N_+$)
rooted at $\sharp^x \otimes b^m$, and $T'_{\overline cxy}$ rooted at
$a^{\overline cxy}$.

{}From the automatic dag $\mathcal D'$, we now build in a final step
the automatic dag $\mathcal D^{i+1}$. This is very similar to the
constructions of $\mathcal D^2$ and $\mathcal D'$ above.
Let $V'$ be the universe of $\mathcal D'$.
The universe of $\mathcal D^{i+1}$ is the regular language
$$
\otimes_k(a^+) \cup b^* \cup (\$^* \otimes V')\ .
$$
The edges are as follows:
\begin{itemize}
\item For $u,v \in V'$, $\$^m \otimes u$ is connected to $\$^n \otimes v$ 
 if and only if $m=n$ and $(u,v)$ is an edge of $\mathcal D'$.
 This generates $\aleph_0$ many copies of $\mathcal D'$.
\item $a^{\overline c}$ is connected to every word from 
 $\$^* \otimes (\{ a^{\overline c x y} \mid x,y \in \N_+ \} \cup (\sharp^+ \otimes b^+))$.
Hence, the tree $\unfold(\mathcal D^{i+1}, a^{\overline c})$ 
has $\aleph_0$ many subtrees isomorphic to $T'_{\overline cxy}$ for $x,y \in \N_+$ and $U'_{m,x}$
for $x,m \in \N_+$. Thus, $\unfold(\mathcal D^{i+1}, a^{\overline c}) \cong T^{i+1}_{\overline{c}}$.

\item $\varepsilon$ is connected to all words from 
$\$^* \otimes (\sharp^+ \otimes b^*)$. Hence, the tree $\unfold(\mathcal D^{i+1}, \varepsilon)$ has
$\aleph_0$ many subtrees isomorphic to 
$U'_{\kappa,x}$ for all $x \in \N_+$ and $\kappa \in \N_+ \cup \{\omega\}$. 
Thus, $\unfold(\mathcal D^{i+1}, \varepsilon) \cong U^{i+1}_{\omega}$.

\item $b^m$ ($m \in \N_+$) is connected to all words from $\$^* \otimes
  ((\sharp^+ \otimes b^+) \cup \{\sharp^x \otimes\varepsilon \mid 1 \leq x < m\})$. 
  This means that the tree $\unfold(\mathcal D^{i+1}, b^m)$ has 
  $\aleph_0$ many subtrees isomorphic to $U'_{m,x}$ for all
  $m,x \in \N_+$ and $U'_{\omega,x}$ for all $1 \leq x < m$. Hence, 
  $\unfold(\mathcal D^{i+1}, b^m) \cong U^{i+1}_m$.
\end{itemize}
See Figure~\ref{fig:auto_tree2}, \ref{fig:auto_tree3}, and
\ref{fig:auto_tree4} for the overall construction.
This finishes the proof of Proposition~\ref{P:forest-automatic}. Hence we
obtain:

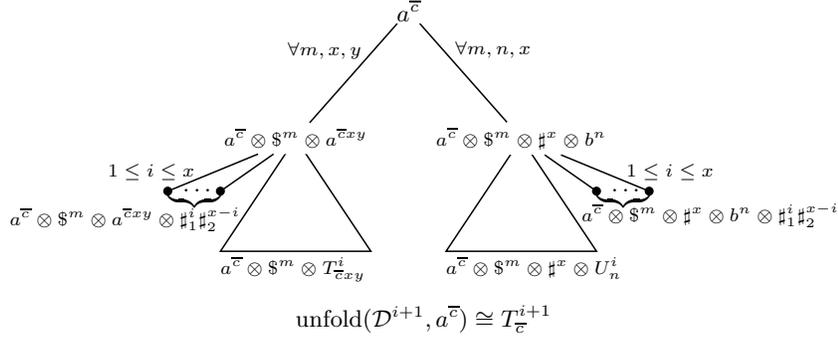
\begin{figure}[t]
\begin{center}
\setlength{\unitlength}{1mm}
\begin{picture}(0,40)(-20,0)
\gasset{Nadjustdist=0.8,Nadjust=wh,Nframe=n,Nfill=n,AHnb=0,ELdist=0.5,linewidth=.2}
  \drawpolygon[Nframe=y,Nfill=n](-40,23)(-50,8)(-30,8)
  \drawpolygon[Nframe=y,Nfill=n](-10,23)(-20,8)(0,8)
   \put(-40,-2){$\unfold(\mathcal{D}^{i+1},a^{\overline{c}}) \cong T^{i+1}_{\overline{c}}$}
   \node(root)(-25,40){$a^{\overline{c}}$}
   \node[fillcolor=white](a)(-40,23){\scriptsize $a^{\overline{c}}\otimes \$^m\otimes a^{\overline{c}xy}$}
    \node[fillcolor=white](a'')(-10,23){\scriptsize $a^{\overline{c}}\otimes \$^m \otimes \sharp^x \otimes b^n$}
   \drawedge[ELside=r](root,a){\scriptsize $\forall m,x,y$}
    \drawedge[ELside=l](root,a''){\scriptsize $\forall m,n,x$}
\gasset{Nw=1,Nh=1,Nframe=y,Nfill=y,Nadjust=n}
  
   \node(c)(-57,16){}
   \node(c')(-50,16){}
   \drawedge(a,c){}
   \put(-65,18){\scriptsize $1\leq i\leq x$}
   \drawedge(a,c'){}
   \put(-55,16){$\ldots$}
   \put(-57,16){$\underbrace{\makebox(7,0){}}_{}$}
   \put(-78,11.5){\scriptsize $a^{\overline{c}}\otimes \$^m\otimes a^{\overline{c}xy} \otimes \sharp_1^{i}\sharp_2^{x-i}$}
   \put(-50,5){\text{\scriptsize $a^{\overline{c}}\otimes \$^m\otimes T^i_{\overline{c}xy}$}}
   \node(d)(0,16){}
   \node(d')(7,16){}
   \drawedge(a'',d){}
   \drawedge(a'',d'){}
   \put(4,18){\scriptsize $1\leq i\leq x$}
   \put(2,16){$\ldots$}
   \put(0,16){$\underbrace{\makebox(7,0){}}_{}$}
   \put(-2,12){\scriptsize $a^{\overline{c}}\otimes \$^m\otimes \sharp^x \otimes b^n \otimes \sharp_1^{i}\sharp_2^{x-i}$}

   \put(-20,5){\text{\scriptsize $a^{\overline{c}}\otimes \$^m\otimes \sharp^x\otimes U^i_{n}$}}
\end{picture}
\end{center}
\caption{\label{fig:auto_tree2} Automatic presentation of $T^{i+1}_{\overline{c}}$}
\end{figure}

\begin{figure}[t]
\begin{center}
\setlength{\unitlength}{1mm}
\begin{picture}(0,40)(-20,0)
\gasset{Nadjustdist=0.8,Nadjust=wh,Nframe=n,Nfill=n,AHnb=0,ELdist=0.5,linewidth=.2}
    \drawpolygon[Nframe=y,Nfill=n](-10,23)(-20,8)(0,8)
    \drawpolygon[Nframe=y,Nfill=n](-40,23)(-50,8)(-30,8)
   \put(-40,-2){$\unfold(\mathcal{D}^{i+1},\varepsilon) \cong U^{i+1}_{\omega}$}
   \node(root)(-25,40){$\varepsilon$}
   \node[fillcolor=white](a)(-40,23){\scriptsize $\varepsilon\otimes \$^m\otimes \sharp^x \otimes \varepsilon$}
    \node[fillcolor=white](a'')(-10,23){\scriptsize $\varepsilon \otimes \$^m \otimes \sharp^x \otimes b^n$}
   \drawedge[ELside=r](root,a){\scriptsize $\forall m,x$}
    \drawedge[ELside=l](root,a''){\scriptsize $\forall m,n,x$}
\gasset{Nw=1,Nh=1,Nframe=y,Nfill=y,Nadjust=n}
   
   \node(c)(-57,16){}
   \node(c')(-50,16){}
   \drawedge(a,c){}
   \put(-65,18){\scriptsize $1\leq i\leq x$}
   \drawedge(a,c'){}
   \put(-55,16){$\ldots$}
   \put(-57,16){$\underbrace{\makebox(7,0){}}_{}$}
   \put(-78,11.5){\scriptsize $\varepsilon\otimes \$^m\otimes \sharp^x\otimes \varepsilon \otimes \sharp_1^{i}\sharp_2^{x-i}$}
   \put(-50,5){\text{\scriptsize $\varepsilon \otimes \$^m\otimes \sharp^x \otimes U^i_{\omega}$}}
   \node(d)(0,16){}
   \node(d')(7,16){}
   \drawedge(a'',d){}
   \drawedge(a'',d'){}
   \put(4,18){\scriptsize $1\leq i\leq x$}
   \put(2,16){$\ldots$}
   \put(0,16){$\underbrace{\makebox(7,0){}}_{}$}
   \put(-2,12){\scriptsize $\varepsilon \otimes \$^m\otimes \sharp^x \otimes b^n \otimes \sharp_1^{i}\sharp_2^{x-i}$}

   \put(-20,5){\text{\scriptsize $\varepsilon\otimes \$^m\otimes \sharp^x\otimes U^i_{n}$}}

\end{picture}
\end{center}
\caption{\label{fig:auto_tree3} Automatic presentation of $U^{i+1}_{\omega}$}
\end{figure}

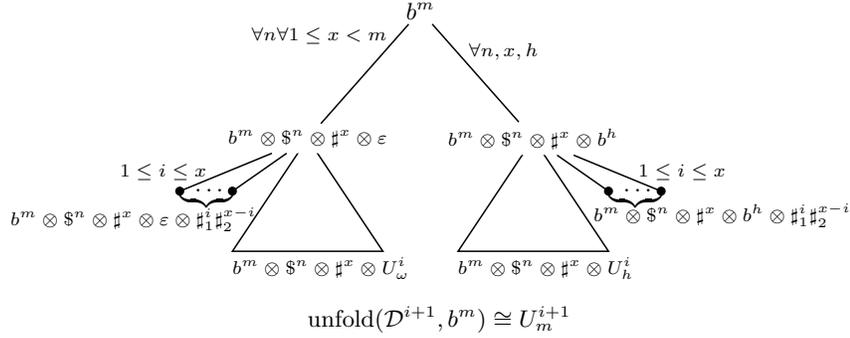
\begin{figure}[t]
\begin{center}
\setlength{\unitlength}{1mm}
\begin{picture}(0,40)(-20,0)
 \gasset{Nadjustdist=0.8,Nadjust=wh,Nframe=n,Nfill=n,AHnb=0,ELdist=0.5,linewidth=.2}
  \drawpolygon[Nframe=y,Nfill=n](-40,23)(-50,8)(-30,8)
   \drawpolygon[Nframe=y,Nfill=n](-10,23)(-20,8)(0,8)
   \put(-40,-2){$\unfold(\mathcal{D}^{i+1},b^m) \cong U^{i+1}_{m}$}
   \node(root)(-25,40){$b^m$}
   \node[fillcolor=white](a)(-40,23){\scriptsize $b^m\otimes \$^n\otimes \sharp^x \otimes \varepsilon$}
    \node[fillcolor=white](a'')(-10,23){\scriptsize $b^m \otimes \$^n \otimes \sharp^x \otimes b^h$}
   \drawedge[ELside=r](root,a){\scriptsize $\forall n \forall 1\leq x<m$}
    \drawedge[ELside=l](root,a''){\scriptsize $\forall n,x,h$}
  \gasset{Nw=1,Nh=1,Nframe=y,Nfill=y,Nadjust=n}
   
   \node(c)(-57,16){}
   \node(c')(-50,16){}
   \drawedge(a,c){}
   \put(-65,18){\scriptsize $1\leq i\leq x$}
   \drawedge(a,c'){}
   \put(-55,16){$\ldots$}
   \put(-57,16){$\underbrace{\makebox(7,0){}}_{}$}
   \put(-79.5,11.5){\scriptsize $b^m\otimes \$^n\otimes \sharp^x\otimes \varepsilon \otimes \sharp_1^{i}\sharp_2^{x-i}$}
   \put(-50,5){\text{\scriptsize $b^m \otimes \$^n\otimes \sharp^x\otimes  U^i_{\omega}$}}
   \node(d)(0,16){}
   \node(d')(7,16){}
   \drawedge(a'',d){}
   \drawedge(a'',d'){}
   \put(4,18){\scriptsize $1\leq i\leq x$}
   \put(2,16){$\ldots$}
   \put(0,16){$\underbrace{\makebox(7,0){}}_{}$}
   \put(-2,12){\scriptsize $b^m \otimes \$^n\otimes \sharp^x \otimes b^h \otimes \sharp_1^{i}\sharp_2^{x-i}$}
   \put(-20,5){\text{\scriptsize $b^m\otimes \$^n\otimes \sharp^x\otimes U^i_{h}$}}
\end{picture}
\end{center}
\caption{\label{fig:auto_tree4} Automatic presentation of $U^{i+1}_{m}$}
\end{figure}

\begin{theorem}\label{thm:tree}
  \begin{enumerate}
  \item For any $n\ge2$, the isomorphism problem for automatic trees
    of height at most $n$ is $\Pi^0_{2n-3}$-complete.
  \item The isomorphism problem for the class of automatic trees of
    finite height is recursively equivalent to $\FOTh(\N; +, \times)$.
  \end{enumerate}
\end{theorem}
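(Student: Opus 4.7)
The plan is to assemble the machinery already developed in this section. For part~(1), the upper bound $\Pi^0_{2n-3}$ is given by Proposition~\ref{prop:tree_membership}. For the matching lower bound, I take an arbitrary $\Pi^0_{2n-3}$-predicate $P(x)$, apply Lemma~\ref{lem:normalform} to rewrite it in the stratified normal form used by the induction, and then invoke the construction in Section~\ref{Construction of trees} to obtain test trees $T^n_c$ and the reference tree $U^n_\omega$ in $\T_n$ with $P(c)\iff T^n_c\cong U^n_\omega$ (Proposition~\ref{P-for-hardness}). To turn this into an actual reduction to the isomorphism problem, I appeal to Proposition~\ref{P:forest-automatic} (proved via Lemma~\ref{from dags to trees} and the inductive dag construction): from $c$ one can effectively compute an automatic presentation of $T^n_c$, while $U^n_\omega$ has a fixed automatic presentation. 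This gives a many-one reduction from any $\Pi^0_{2n-3}$-complete set to the isomorphism problem on $\T_n$.

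For part~(2) I establish recursive (Turing) equivalence in both directions. For the direction isomorphism $\leq_T \FOTh(\N;+,\times)$: given automatic presentations $\P_1,\P_2$, membership of $\S(\P_i)$ in $\bigcup_{n\ge 1}\T_n$ is $\Sigma^0_1$, since for each fixed $n$ the property ``$\S(\P_i)$ is a tree of height at most $n$'' is first-order axiomatizable and hence decidable by Theorem~\ref{thm:extFOaut}. Moreover, when $\S(\P_i)$ has finite height one can compute that height by searching. Once both heights $n_1,n_2$ are known, the trees are non-isomorphic if $n_1\neq n_2$, and otherwise one applies the $\Pi^0_{2n_1-3}$ procedure from Proposition~\ref{prop:tree_membership}. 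The overall predicate is arithmetical and therefore Turing-reducible to $\FOTh(\N;+,\times)$.

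For the converse direction $\FOTh(\N;+,\times)\leq_T$ isomorphism: given a sentence $\varphi$, compute a level $k$ such that $\varphi$ (or its negation) is a $\Pi^0_k$-sentence, view this as a $\Pi^0_k$-predicate in a dummy argument, pick any $n$ with $2n-3\ge k$, and apply the uniform reduction from part~(1) to produce automatic presentations of $T^n_c$ and $U^n_\omega$ with $\varphi\iff T^n_c\cong U^n_\omega$ (possibly after complementing). A single oracle query to the isomorphism problem then determines the truth value of $\varphi$, and complementing is harmless for a Turing reduction.

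The only point that requires attention is the uniformity of the whole construction in the height parameter $n$, but this is essentially free from the way the objects in Section~\ref{Construction of trees} and the previous subsection are produced: the normal form of Lemma~\ref{lem:normalform}, the trees $T^n_c, U^n_\omega$, and their automatic presentations from Proposition~\ref{P:forest-automatic} are all computed by effective inductions in $n$. Hence no genuinely new obstacle appears at this stage; the theorem is essentially a bookkeeping consequence of the results already established.
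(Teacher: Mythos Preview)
Your proposal is correct and follows essentially the same route as the paper: the upper bound from Proposition~\ref{prop:tree_membership}, the lower bound by combining Proposition~\ref{P-for-hardness} with Proposition~\ref{P:forest-automatic} to extract automatic presentations of $T^n_c$ and $U^n_\omega$, and part~(2) by the uniformity in~$n$ of both the membership and hardness constructions. The only cosmetic difference is that the paper phrases the extraction of $T^n_c$ and $U^n_\omega$ as taking first-order definable connected components of the automatic forest~$\mathcal F^n$, and for the reduction of $\FOTh(\N;+,\times)$ it implicitly uses $\Sigma^0_k\subseteq\Pi^0_{k+1}$ rather than your complementation step, but these amount to the same argument.
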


\begin{proof}
  We first prove the first statement. Containment in $\Pi^0_{2n-3}$
  was shown in Proposition~\ref{prop:tree_membership}. For the hardness, let
  $P_n\subseteq\N_+$ be any $\Pi^0_{2n-3}$-predicate and let
  $c\in\N_+$. Then, above, we constructed the automatic forest
  $\mathcal F^n$ of height $n$. The trees $T^n_c$ and $U^n_\omega$ are first-order
  definable in $\mathcal F^n$ since they are (isomorphic to) the trees
  rooted at $a^c$ and $\varepsilon$, resp. Hence these two
  trees are automatic. By Proposition~\ref{P-for-hardness}, they are
  isomorphic if and only if $P_n(c)$ holds.

  We now come to the second statement. Since the proof of 
  Prop.~\ref{prop:tree_membership} is uniform in the level~$n$, we can compute
  from two automatic trees $T_1, T_2$ of finite height an arithmetical
  formula, which is true if and only if $T_1 \cong T_2$. For the other
  direction, one observes that the height of an
  automatic tree of finite height can be computed. Then the result
  follows from the first statement because of the uniformity of its
  proof.\qed
\end{proof}
In fact, we proved a slightly stronger statement:
For every $n \geq 2$, there exists a fixed $\Pi^0_{2n-3}$-complete
set $P_{2n-3} \subseteq \N_+$. If we apply our construction, we obtain a {\em
  fixed automatic forest} $\F^n$
of height $n$ 
with the following properties: It is $\Pi^0_{2n-3}$-complete to determine, whether 
for given $c \in \N_+$, the tree rooted at $a^c$ in $\F^n$ is isomorphic to the 
tree rooted at $\varepsilon$ in $\F^n$.


\section{Recursive trees of finite height}

In this section, we briefly discuss the isomorphism problem for
recursive (i.e., computable) trees of finite height.

\begin{theorem}
  For every $n \geq 1$, the isomorphism problem for recursive trees of
  height at most $n$ is $\Pi^0_{2n}$-complete.
\end{theorem}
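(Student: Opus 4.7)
The plan is to mirror the strategy of Section~\ref{sec:tree}, with the only substantial change being in the base case of the induction, which then propagates two extra levels (relative to the automatic case) through the whole construction.

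\textbf{Upper bound.} I would adapt the definition of $\text{iso}_k$ from Proposition~\ref{prop:tree_membership}. The crucial difference from the automatic case is that $\exists^\infty$ is no longer decidable: already $|A|=|B|$ for recursive sets $A,B\subseteq\N$ is $\Pi^0_2$-complete, since it can be written as $\forall\ell\,(\,|A|\ge\ell\leftrightarrow|B|\ge\ell\,)$, a universal quantifier over a biconditional of $\Sigma^0_1$-statements, and the trick $(\forall x\,P)\vee(\exists y\,Q)=\forall x\,\exists y\,(P\vee Q)$ puts the biconditional in $\Pi^0_2$. Hence the base predicate $\text{iso}_{n-1}(u_1,u_2)$ (isomorphism of height-$1$ subtrees) lies in $\Pi^0_2$ rather than in $\Pi^0_1$. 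Each inductive step defines $\text{iso}_k$ as a $\forall v\,\forall\ell$ over a biconditional whose sides are $\Sigma$-quantified conjunctions of $\text{iso}_{k+1}$-instances, which raises the level by two. Thus $\text{iso}_k\in\Pi^0_{2n-2k}$, and in particular $\text{iso}_0\in\Pi^0_{2n}$.

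\textbf{Lower bound, base $n=1$.} I would reduce the $\Pi^0_2$-complete set $\{e\mid W_e\text{ is infinite}\}$ to the isomorphism problem. Given~$e$, let the domain of $T_e$ consist of a fresh root $r$ together with all pairs $(k,s)$ such that $s$ is the least stage at which $k$ is enumerated into $W_e$; this is a recursive set (simulate $e$ for $s$ steps), and the tree with $r$ as root and all $(k,s)$ as leaves is isomorphic to the fixed recursive height-$1$ tree with $\aleph_0$ leaves iff $W_e$ is infinite.

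\textbf{Lower bound, inductive step $n\to n+1$.} Given a $\Pi^0_{2n+2}$-predicate $P_{n+1}(\overline c)$, I would invoke the obvious analogue of Lemma~\ref{lem:normalform}, whose proof is a purely syntactic manipulation of arithmetical normal forms and transfers verbatim, to write $P_{n+1}(\overline c)\iff\forall x\,\exists y:P_n(\overline c,x,y)$ with $P_n\in\Pi^0_{2n}$ satisfying the monotonicity condition~(ii) of that lemma. By the induction hypothesis, I can effectively produce recursive trees $T^n_{\overline cxy}$ and $U^n_\kappa$ of height $\le n$ satisfying analogues of properties~(P1)--(P3) of Section~\ref{sec:tree}. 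I would then assemble $T^{n+1}_{\overline c}$ and $U^{n+1}_\kappa$ from these pieces by exactly the same recipe as in Section~\ref{sec-induction-trees}: attach $x$ fresh leaves under each root, collect the resulting trees into forests $H^{n+1}_{\overline c},J^{n+1}_\kappa$, take $\aleph_0$ copies, and add a new root. The purely tree-theoretic arguments of Lemmas~\ref{lem:tree_main1}--\ref{lem:tree_main2} then carry over unchanged and yield $T^{n+1}_{\overline c}\cong U^{n+1}_\omega$ iff $P_{n+1}(\overline c)$.

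The main work of the automatic case (the dag-unfolding machinery of Section~4.2 that establishes automaticity of each construction step) simply disappears here: disjoint unions of recursively indexed families, adjoining roots, and taking $\omega$-many copies all manifestly preserve recursiveness of the presentation. The one place requiring a little care is the base case, where one must produce a genuinely recursive (not merely r.e.) domain; the timestamped encoding above secures this, after which the induction is essentially mechanical.
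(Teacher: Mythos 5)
Your upper‑bound argument and the $n=1$ base case are both sound, and the overall plan does mirror the paper's own (very terse) proof. The problem is in the lower‑bound induction, precisely at the first step from $n=1$ to $n=2$: the claim that the induction hypothesis yields height‑$\le n$ trees satisfying analogues of (P1)--(P3) is false for $n=1$. Property~(P3) of Section~\ref{sec:tree} -- \emph{no leaf of any $T^i_{\overline c}$ or $U^i_\kappa$ is a child of the root} -- cannot hold for a height‑$1$ tree, since every non‑root node of a height‑$1$ tree is a leaf and a child of the root. Consequently the ``attach $x$ fresh leaves'' trick does not encode $x$ at this level: $U'_{\omega,x}$, obtained from the height‑$1$ tree $U^1_\omega$ with $\aleph_0$ leaves by adding $x$ more, has $\aleph_0$ leaves for every $x$, so $U'_{\omega,x}\cong U'_{\omega,x'}$ for all $x,x'$, and similarly $T'_{\overline c x y}\cong U'_{\omega,x'}$ whenever $T^1_{\overline c x y}$ has $\aleph_0$ leaves, regardless of whether $x=x'$. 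Equivalence~(\ref{eqt:tree_U'}) therefore fails, the forests $J^2_\kappa$ satisfy $J^2_\kappa\sim J^2_{\kappa'}$ for all $\kappa,\kappa'\ge 2$ (so the $U^2_\kappa$ collapse into two isomorphism types), and tracing Lemma~\ref{lem:tree_main1} through one finds that the naively assembled $T^2_{\overline c}$ satisfies $T^2_{\overline c}\cong U^2_\omega$ if and only if $\exists x\,\exists y: P_1(\overline c,x,y)$ -- a $\Sigma^0_3$‑statement -- not the intended $\Pi^0_4$‑statement $\forall x\,\exists y: P_1(\overline c,x,y)$.

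To be fair, the paper's one‑line remark ``we can use this construction as the base case for our construction in Section~\ref{sec-induction-trees}'' glosses over exactly this point, so you have inherited a gap rather than introduced one; but your proposal makes the gap visible by explicitly asserting (P3) at the bottom of the induction and then invoking (\ref{eqt:tree_U'}), both of which fail there. What is actually needed is a separate construction for $n=2$ (after which the inductive step you describe goes through for all $n\ge 2$, since the trees built there do satisfy (P3)). For $n=2$ one has to encode a $\Pi^0_4$‑hard predicate directly into the set of child‑counts of a height‑$2$ tree -- essentially re‑proving the known $\Pi^0_4$‑hardness of isomorphism of recursive equivalence structures -- and in particular the parameter $x$ must be injected into the \emph{finite} child‑counts, e.g.\ via $C(x,\cdot)$ as in the automatic base case of Section~\ref{sss:base}, rather than by adding $x$ leaves, since pairing $x$ with $\aleph_0$ is impossible. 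Your phrase ``the induction is essentially mechanical'' is therefore too optimistic: the $n\to n+1$ step is mechanical for $n\ge 2$, but the $n=1\to 2$ step requires a genuinely new construction.
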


\begin{proof}
  For the upper bound, let us first assume that $n=1$.  Two recursive
  trees $T_1$ and $T_2$ of height $1$ are
  isomorphic if and only if: for every $k \geq 0$, there exist at
  least $k$ nodes in $T_1$ if and only if there exist at least $k$
  nodes in $T_2$. This is a $\Pi^0_2$-statement.  For the inductive
  step, we can reuse the arguments from the proof of
  Proposition~\ref{prop:tree_membership}.

  For the lower bound, we first note that the isomorphism problem for
  recursive trees of height $1$ is $\Pi^0_2$-complete. It is known
  that the problem whether a given recursively enumerable set is
  infinite is $\Pi^0_2$-complete \cite{Rogers}. For a given
  deterministic Turing-machine $M$, we construct a recursive tree
  $T(M)$ of height $1$ as follows: the set of leaves of $T(M)$ is the
  set of all accepting computations of $M$. We add a root to the tree
  and connect the root to all leaves. If $L(M)$ is infinite, then  
  $T(M)$ is isomorphic to the height-$1$ tree with infinitely many
  leaves. If $L(M)$ is finite, then there exists $m \in \N$ such that
  $T(M)$ is isomorphic to the height-$1$ tree with $m$ leaves.  We can
  use this construction as the base case for our construction in
  Section~\ref{sec-induction-trees}. This yields the lower bound for
  all $n \geq 1$.  \qed
\end{proof}

\section{Automatic Linear Orders} \label{sec-lin-order}

We use $\omega$ to denote the linear order (type of) $(\N;\leq)$ of the natural
numbers and $\mathbf{n}$ to denote the finite linear order (type) of
size $n$. Let $I = (D_I;\leq_I)$ be a linear order and let
$\L=\{L_i\mid i\in D_I\}$ be a class of linear orders, where
$L_i=(D_i;\leq_i)$ for $i\in D_I$. The {\em sum} $\sum \L$ is the
linear order $( \{ (x,i) \mid i\in D_I, x \in D_i\};\leq)$ where
for all $i,j\in D_I$, $x\in D_i$, and $y\in D_j$,
\[
    (x,i)\leq (y,j) \ \Longleftrightarrow \ i <_I j\vee (i=j \wedge x\leq_i y)\ .
\]
We use $L_1+L_2$ to denote $\sum \{L_i\mid i\in \mathbf{2}\}$. We denote with $L_1\cdot L_2$ 
the sum $\sum \{L^i_1\mid i\in L_2\}$ where $L^i_1\cong L_1$ for every $i\in L_2$. 
An {\em interval} of a linear order $L= (D;\leq)$ is a subset 
$I\subseteq D$ such that $x,y\in I$ and $x< z< y$ imply $z\in I$.

A well-known example of an automatic linear order is the {\em
  lexicographic order} $\leq_{\lex}$ on a regular language $D$. To
define $\leq_{\lex}$, we first need a fixed linear order $<$ on the
alphabet $\Sigma$ of $D$. For $w,w'\in D$, we say that $w$ is {\em
  lexicographically less} than $w'$, denoted by $w<_{\lex} w'$, if
either $w$ is a proper prefix of $w'$ or there exist $x,y,z\in
\Sigma^*$ and $\sigma, \tau\in \Sigma$ such that $w=x\sigma y$,
$w'=x\tau z$ and $\sigma< \tau$. We write $w\leq_{\lex} w'$ if either
$w=w'$ or $w<_{\lex} w'$.  For convenience, in this paper, we use
$\leq_{\lex}$ to denote the lexicographic order regardless of the
corresponding alphabets and orders on the alphabets.  The precise
definition of $\leq_{\lex}$ in different occurrences will be clear
from the context.

This section is devoted to proving that the isomorphism problem on the
class of automatic linear orders is at least as hard as $\FOTh(\N;+,\times)$.
To this end, it suffices to prove (uniformly in $n$) $\Sigma^0_n$-hardness for
every even $n$. The general plan for this is similar to the proof for trees
of finite height: we use Hilbert's $10^{th}$ problem to handle
$\Pi^0_1$-predicates in several variables and an inductive
construction of more complicated linear orders to handle quantifiers,
i.e., to proceed from a $\Pi^0_{2i-1}$- to a $\Sigma^0_{2i}$-predicate
(and from a $\Sigma^0_{2i}$- to a $\Pi^0_{2i+1}$-predicate).

So let $n\ge1$ be even and let $P_n(x_0)$ be a
$\Sigma^0_n$-predicate. For every odd (even) number
$1\leq i< n$, let $P_i(x_0, \ldots, x_{n-i})$ be the
$\Pi^0_i$-predicate ($\Sigma^0_i$-predicate) such that $P_{i+1}(x_0,
\ldots, x_{n-i-1})$ is logically equivalent to $Q x_{n-i}:
P_i(x_0,\ldots, x_{n-i})$ where $Q=\exists$ if $i$ is odd and
$Q=\forall$ if $i$ is even.  We fix these predicates for the rest of
Section~\ref{sec-lin-order}.

By induction on $1\leq i\leq n$, we will construct
from $\overline{c}\in \N_+^{n-i+1}$ the following linear orders:
\begin{itemize}
\item a test linear order $L^i_{\overline{c}}$,
\item a linear order $K^i$, and
\item a set of linear orders $\M^i$ such that $\M^1=\{M^1_m\mid m\in \N_+\}$ and $\M^i$ is the singleton $\{M^i\}$ if $i>1$.
\end{itemize}
These linear orders will have the following properties:
\begin{description}
\item[(P1)] $P_i(\overline c)$ holds if and only if
  $L^i_{\overline c} \cong K^i$.
\item[(P2)] $P_i(\overline c)$ does not hold if and only if $L^i_{\overline{c}}\cong M$ for some $M\in \M^i$.
\item[(P3)] The linear order $\omega\cdot \mathbf{i}$ is not isomorphic to any interval of $L^i_{\overline{c}}, K^i, M$ where $M\in \M^i$.
\end{description}
In the rest of the section, we will  inductively construct $L^i_{\overline{c}}$, $K^i$, and $\M^i$  
and prove (P1), (P2), and (P3). The subsequent section is devoted to proving the effective automaticity of these linear orders.

\subsection{Construction of linear orders}

Our construction of linear orders is quite similar to the construction
for trees from Section~\ref{Construction of trees}.  One of the main
differences is that in the inductive step for trees, we went from a
$\Pi^0_i$-predicate directly to a $\Pi^0_{i+2}$-predicate. Thereby the
height of the trees only increased by one.  This was crucial in order
to get $\Pi^0_{2n-3}$-completeness for the isomorphism problem for
automatic trees of height $n \geq 2$. For automatic linear orders, we
split the construction into two inductive steps: in the first step, we
go from a $\Pi^0_i$-predicate ($i$ odd) to a
$\Sigma^0_{i+1}$-predicate, whereas in the second step, we go from a
$\Sigma^0_{i+1}$-predicate to a $\Pi^0_{i+2}$-predicate.

A key technique used in the construction is the shuffle sum of a class of linear orders. 
Let $I$ be a countable set. A {\em dense $I$-coloring} of $\Q$ is a mapping $c:\Q\rightarrow I$ 
such that for all $x,y\in \Q$ with $x<y$ and all $i\in I$ there exists $x<z<y$ with $c(z)=i$.

\begin{definition}
Let $\L=\{L_i\mid i\in I\}$ be a set of linear orders with $I$ countable and let $c:\Q\rightarrow I$ 
be a dense $I$-coloring of $\Q$. The shuffle sum of $\L$, denoted $\Shuf(\L)$, is the linear order $\sum_{x\in \Q} L_{c(x)}$.
\end{definition}
In the above definition, the isomorphism type of $\sum_{x\in \Q}
L_{c(x)}$ does not depend on the choice of the dense $I$-coloring $c$,
see e.g. \cite{Ros82}. Hence $\Shuf(\L)$ is indeed uniquely defined.

In this section, we will consider classes $\L_1$ and $\L_2$ of linear
orders that we consider as classes of isomorphism types. Therefore, we
use the following abbreviations:
\begin{itemize}
\item ``$L\in\L_1$'' denotes that $\L_1$ contains a linear order
  isomorphic to $L$,
\item ``$\L_1\subseteq\L_2$'' denotes $\forall L_1\in\L_1\;\exists
  L_2\in\L_2:L_1\cong L_2$, and
\item ``$\L_1=\L_2$'' abbreviates $\L_1\subseteq\L_2\subseteq\L_1$.
\end{itemize}

\subsubsection{Induction base: construction of $L^1_{\overline{c}}$, $K^1$, and $M^1_m$}
\label{construction-lo-ind-base}

Recall from Section~\ref{sec:equiv} that the polynomial function $C(x,y)=(x+y)^2+3x+y$
is injective. 
For $n_1, n_2 \in \N_+$, let $L[n_1, n_2]$ be the finite linear order of length $C(n_1, n_2)$. 

By applying Matiyasevich's theorem, we obtain two polynomials 
$p_1(\overline{x}), p_2(\overline{x})\in \N[\overline{x}]$ in $\ell$ variables, 
$\ell>n$, such that for all $\overline{c}\in \N_+^{n}$, the $\Pi^0_1$-predicate $P_1(\overline{c})$ holds if and only if
\[
\forall \overline{x}\in \N^{\ell-n}: \ p_1(\overline{c},\overline{x}) \neq p_2(\overline{c},\overline{x})\ .
\]
Fix $\overline{c}\in \N^n_+$ and $m\in \N_+$. We define the following four classes of finite linear orders:
\begin{align}
    \L^1_1(\overline{c}) &= \{L[p_1(\overline{c}, \overline{x})+x_{\ell+1}, 
   p_2(\overline{c},\overline{x})+x_{\ell+1}]\mid \overline{x}\in
   \N^{\ell-n}_+, x_{\ell+1}\in \N_+\} \label{class-L_1}\\
    \L^1_2(m) &= \{L[x+m,x+m]\mid x\in \N_+\} \label{class-L_2} \\
    \L^1_3 &= \{L[x+y,x]\mid x,y\in \N_+\} \label{class-L_3}\\
    \L^1_4 &= \{L[x,x+y]\mid x,y\in \N_+\} \label{class-L_4}
\end{align}
The linear orders $L^1_{\overline{c}}$, $K^1$, and $M^1_m$ are 
obtained by taking the shuffle sums of unions of the above classes of linear orders:
\begin{align*}
    L^1_{\overline{c}} = \Shuf (\L^1_1(\overline{c}) \cup \L^1_3 \cup \L^1_4), \qquad
    K^1 = \Shuf (\L^1_3 \cup \L^1_4), \qquad M^1_m = \Shuf (\L^1_2(m) \cup \L^1_3 \cup \L^1_4).
\end{align*}
The next lemma is needed to prove (P1) and (P2) for the
$\Pi^0_1$-predicate $P_1$.

\begin{lemma}\label{lem:shuf_finite}
  Suppose $\L_1$ and $\L_2$ are two countable sets of finite linear
  orders. Then
  \[
      \L_1=\L_2 \ \iff \ \Shuf(\L_1) \cong \Shuf(\L_2)
  \]
  and no interval of $\Shuf(\L_1)$ is isomorphic to $\omega$.
\end{lemma}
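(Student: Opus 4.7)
The plan is to prove the two assertions separately, both resting on the same structural observation about shuffle sums of finite linear orders. First, I would record the key general fact: in $\Shuf(\L) = \sum_{q\in\Q} L_{c(q)}$, two elements $x \in L_{c(q_1)}$ and $y \in L_{c(q_2)}$ with $q_1 \neq q_2$ are never immediate successors, because by density of $\Q$ and of the coloring $c$ there exists a rational $q_1 < q' < q_2$ and the entire block $L_{c(q')}$ lies strictly between $x$ and $y$. Equivalently, any pair of immediate successors in $\Shuf(\L)$ must be contained in a single block $L_{c(q)}$.

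From this it follows at once that no interval $I$ of $\Shuf(\L_1)$ is isomorphic to $\omega$: if it were, then $I$ would have a least element $m_0$ together with an infinite chain $m_0 < m_1 < m_2 < \cdots$ of successive elements in $I$. Since $I$ is convex, each $m_{k+1}$ is also the immediate successor of $m_k$ in $\Shuf(\L_1)$ itself. By the observation above, all the $m_k$ then lie in a single block $L_{c(q)}$, contradicting the finiteness of that block (here is the only place we use that $\L_1$ consists of finite orders).

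For the equivalence $\L_1 = \L_2 \iff \Shuf(\L_1) \cong \Shuf(\L_2)$, the forward direction follows from the standard fact (cited from Rosenstein) that $\Shuf(\L)$ depends only on $\L$ up to isomorphism. For the converse, the plan is to recover the multiset of block types purely from the order structure. Define
\[
  x \sim y \iff \text{the interval } [\min(x,y),\max(x,y)] \text{ is finite}.
\]
This is an order-theoretic invariant, so it is preserved by any isomorphism $f:\Shuf(\L_1)\to\Shuf(\L_2)$. I would then verify that the $\sim$-classes of $\Shuf(\L_1)$ are exactly the blocks $L_{c(q)}$: inside one block the interval is obviously finite, whereas for $x \in L_{c(q_1)}$ and $y \in L_{c(q_2)}$ with $q_1 < q_2$ the interval $[x,y]$ contains all blocks $L_{c(q')}$ for the infinitely many rationals $q' \in (q_1, q_2)$ and is therefore infinite. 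Since the coloring $c$ is surjective, the set of isomorphism types of $\sim$-classes in $\Shuf(\L_i)$ is exactly $\L_i$, and $f$ carries $\sim$-classes to isomorphic $\sim$-classes, giving $\L_1 \subseteq \L_2$; symmetry yields equality.

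The main technical obstacle is really packaged in the block-characterization step: one must be sure that the order-definable relation $\sim$ agrees with the a priori external decomposition $\sum_{q\in\Q} L_{c(q)}$. Everything else, including the ``no $\omega$-interval'' statement, is then a direct consequence of the same successor-pair observation, so the argument stays modular.
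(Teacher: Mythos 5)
Your proof is correct and rests on exactly the same underlying idea as the paper's: that any interval of $\Shuf(\L)$ reaching from one block $L_{c(q)}$ into a different block $L_{c(q')}$ is infinite (because it swallows all intermediate blocks), while each block itself is finite. The paper phrases this by directly chasing the isomorphism $f$ and its inverse to show that $f$ maps each block onto a block; you instead package the same observation as the order-theoretic invariant $\sim$ (the finite-condensation relation) and identify its classes with the blocks, which is a slightly more modular but essentially equivalent route. Both treatments of the ``no $\omega$-interval'' claim are likewise the same argument in different clothing.
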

\begin{proof}
  If $\L_1=\L_2$, then it is clear that $\Shuf(\L_1) \cong
  \Shuf(\L_2)$. Conversely, suppose there exists an isomorphism $f$
  from $\Shuf(\L_1)$ to $\Shuf(\L_2)$. We prove below that
  $\L_1=\L_2$. By symmetry we only need to prove $\L_1 \subseteq
  \L_2$.

  Note that for $i\in \{1,2\}$, $\Shuf(\L_i)$ is obtained by replacing
  each $q\in \Q$ with some linear order $L^i_q$ (whose type is)
  contained in $\L_i$. For every $q\in \Q$, if $f(L^1_q)$ contains
  elements from $L^2_p$ and $L^2_{p'}$ for some $p<p'$, then
  $f(L^1_q)$ is infinite which is impossible. Therefore $f$ maps
  $L^1_q$ into $L^2_p$ for some $p\in \Q$.  Using the same argument
  with $f$ replaced by $f^{-1}$, we can also prove that $f^{-1}$ maps
  $L^2_p$ into $L^1_q$. Hence $L^1_q \cong L^2_p$. This means that for
  all $L\in \L_1$, there is $L'\in \L_2$ such that $L\cong
  L'$. Therefore $\L_1 \subseteq \L_2$. 

  If $x_1<x_2<\cdots$ in $\Shuf(\L_1)$, then there are $p<p'$ in $\Q$
  and $k<\ell$ in $\N_+$ such that $x_k\in L^1_p$ and $x_\ell\in
  L^1_{p'}$. But then the interval $[x_k,x_\ell]$ is infinite. Hence
  no interval in $\Shuf(\L_1)$ is isomorphic to $\omega$.\qed
\end{proof}
The next lemma states (P1) and (P2) for $i=1$:

\begin{lemma}
For any $\overline{c}\in \N^n_+$, we have:
\begin{enumerate}[(1)]
\item $P_1(\overline{c}) \text{ holds } \ \Longleftrightarrow \ L^1_{\overline{c}} \cong K^1$.
\item $P_1(\overline{c}) \text{ does not hold } \ 
\Longleftrightarrow \ \exists m\in \N_+:  L^1_{\overline{c}} \cong M^1_m.$
\end{enumerate}
\end{lemma}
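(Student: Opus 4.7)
By Lemma~\ref{lem:shuf_finite}, for any countable sets $\L_1,\L_2$ of finite linear orders, $\Shuf(\L_1)\cong\Shuf(\L_2)$ if and only if $\L_1=\L_2$ as collections of isomorphism types. Since the pairing function $C$ is injective, we have $L[a,b]\cong L[a',b']$ if and only if $(a,b)=(a',b')$, so each collection of the form $\{L[a,b]\mid(a,b)\in S\}$ is determined by the underlying subset $S\subseteq\N_+\times\N_+$. Thus both equivalences reduce to equalities of explicit subsets of $\N_+^2$. Writing
\[
   S_1(\overline c)=\{(p_1(\overline c,\overline x)+x_{\ell+1},\,p_2(\overline c,\overline x)+x_{\ell+1})\mid \overline x\in\N_+^{\ell-n},\,x_{\ell+1}\in\N_+\},
\]
and observing that $\L^1_3\cup\L^1_4$ corresponds to $S_3\cup S_4=\{(a,b)\in\N_+^2:a\neq b\}$ and $\L^1_2(m)$ to $S_2(m)=\{(a,a):a>m\}$, the task is to compare $S_1(\overline c)\cup S_3\cup S_4$ against $S_3\cup S_4$ (for~(1)) and against $S_2(m)\cup S_3\cup S_4$ (for~(2)).

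For~(1), the equality $S_1(\overline c)\cup S_3\cup S_4=S_3\cup S_4$ is equivalent to $S_1(\overline c)\subseteq\{(a,b):a\neq b\}$. Cancelling the common offset $x_{\ell+1}$, this reads $\forall\overline x\in\N_+^{\ell-n}:p_1(\overline c,\overline x)\neq p_2(\overline c,\overline x)$, i.e.\ exactly $P_1(\overline c)$.

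For~(2), suppose first that $P_1(\overline c)$ fails and set $Y=\{y\in\N_+\mid\exists\overline x:p_1(\overline c,\overline x)=p_2(\overline c,\overline x)=y\}$, which is nonempty by assumption. Take $m=\min Y$. Then the diagonal pairs occurring in $S_1(\overline c)$ are precisely $\{(y+x_{\ell+1},y+x_{\ell+1}):y\in Y,\,x_{\ell+1}\in\N_+\}$; since this set is upward closed and has minimum $m+1$, it equals $\{(a,a):a>m\}=S_2(m)$. The off-diagonal pairs of $S_1(\overline c)$ lie in $S_3\cup S_4$, so $S_1(\overline c)\cup S_3\cup S_4=S_2(m)\cup S_3\cup S_4$, giving $L^1_{\overline c}\cong M^1_m$. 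Conversely, if $L^1_{\overline c}\cong M^1_m$ for some $m\in\N_+$, then $S_2(m)\subseteq S_1(\overline c)\cup S_3\cup S_4$; the pairs in $S_2(m)$ are diagonal, whereas $S_3\cup S_4$ contains no diagonal pair, so $S_1(\overline c)$ must itself contain a diagonal pair. This exhibits some $\overline x$ with $p_1(\overline c,\overline x)=p_2(\overline c,\overline x)$, so $P_1(\overline c)$ fails.

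The one delicate point is the role of the auxiliary parameter $x_{\ell+1}$ in the definition of $\L^1_1(\overline c)$: it forces the diagonal part of $S_1(\overline c)$ to be an upward-closed tail $\{(a,a):a>m\}$ rather than an arbitrary subset of the diagonal. Without this shift-parameter one would only obtain an inclusion between $S_2(m)$ and the diagonal of $S_1(\overline c)$, and the single linear order $M^1_m$ would not suffice to witness failure of $P_1(\overline c)$; with it, the diagonal matches $S_2(m)$ exactly for $m=\min Y$, which is what makes the biconditional in~(2) go through.
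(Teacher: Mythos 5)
Your proposal is correct and follows essentially the same approach as the paper: both rely on Lemma~\ref{lem:shuf_finite} to reduce the isomorphism question to equality of the underlying collections of finite orders, then use injectivity of $C$ to identify each collection with a subset of $\N_+^2$ and compare diagonal versus off-diagonal pairs. Your explicit notation $S_1(\overline c), S_2(m), S_3\cup S_4$ and the observation that the shift parameter $x_{\ell+1}$ makes the diagonal part of $S_1(\overline c)$ an upward-closed tail is a clean way to present the same chain of equivalences that the paper writes out; one tacit assumption you share with the paper is that $p_1$ and $p_2$ are never simultaneously $0$ on $\N_+^\ell$ (so that $\min Y$ is well-defined in $\N_+$ and the tail starts at some $m+1\ge 2$), which can always be arranged by replacing $p_i$ with $p_i+1$.
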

\begin{proof}
For (1), we have 
\begin{eqnarray*}
P_1(\overline{c}) \ & \Longleftrightarrow & \  
\forall \overline{x}\in \N^{\ell-n}_+ : p_1(\overline{c},\overline{x}) \neq p_2(\overline{c},\overline{x}) \\
&  \Longleftrightarrow & \  \forall \overline{x}\in \N^{\ell-n}_+, x_{\ell+1}\in \N_+ :  p_1(\overline{c},\overline{x})+x_{\ell+1}\neq p_2(\overline{c},\overline{x})+x_{\ell+1} \\
 & \Longleftrightarrow & \  \forall \overline{x}\in \N^{\ell-n}_+, x_{\ell+1}\in \N_+ :  L[p_1(\overline{c},\overline{x})+x_{\ell+1}, p_2(\overline{c},\overline{x})+x_{\ell+1}] \in \L^1_3\cup \L^1_4 \\
 & \Longleftrightarrow & \   \L^1_1(\overline{c})\cup \L^1_3 \cup \L^1_4 =
 \L^1_3\cup \L^1_4\\
& \stackrel{\text{Lemma~\ref{lem:shuf_finite}}}{\Longleftrightarrow} & \ L^1_{\overline{c}} \cong K^1 .
\end{eqnarray*}
For (2), we get
\begin{eqnarray*}
\neg P_1(\overline{c}) \ & \Longleftrightarrow & \  \exists \overline{x}\in \N^{\ell-n}_+ : p_1(\overline{c},\overline{x}) = p_2(\overline{c},\overline{x}) \\
&  \Longleftrightarrow & \  \exists m \in \N_+ :  L[m+1,m+1] \in  \L^1_1(\overline{c}) \\
 & \Longleftrightarrow & \  \exists m \in \N_+ : (\forall k > m:  L[k,k] \in
 \L^1_1(\overline{c}) \wedge \forall 1 \leq k \leq m : L[k,k] \not\in
 \L^1_1(\overline{c})) \\
 & \Longleftrightarrow & \  \exists m\in\N_+: \L^1_1(\overline{c}) \cup \L^1_3 \cup \L^1_4 = \L^1_2(m) \cup \L^1_3\cup \L^1_4 \\
 & \stackrel{\text{Lemma~\ref{lem:shuf_finite}}}{\Longleftrightarrow} & \   \exists m\in \N_+:  L^1_{\overline{c}} \cong M^1_m\ .
\end{eqnarray*}
\qed
\end{proof}
Since $L_{\overline c}^1$, $K^1$, and $M^1_m$ are shuffle sums, they satisfy
(P3) by Lemma~\ref{lem:shuf_finite}.
This finishes the construction for the base case.

\subsubsection{First induction step: from $P_i$ to $P_{i+1}$ for $i$ odd} 
\label{first-induction-step}

Suppose $i\geq 1$ is an odd number.
For notational simplicity, we write $k$ for $n-i$. Thus, $P_{i+1}$ is a $k$-ary predicate and $P_i$ is a $(k+1)$-ary one. 
For all $\overline{c}\in \N^k_+$, $P_{i+1}(\overline{c})$ is logically equivalent to $\exists x: P_i(\overline{c},x)$.
Applying the inductive hypothesis, for any $\overline{c}\in \N^{k}_+$ and $x\in \N_+$, 
we obtain linear orders $L^i_{\overline{c}x}$, $K^i$, and the set $\M^i$ such that
\begin{itemize}
\item $P_i(\overline{c},x)$ holds if and only if $L^i_{\overline{c}x}\cong K^i$,
\item $P_i(\overline{c},x)$ does not hold if and only if $L^i_{\overline{c}x}
  \cong M$ for some $M\in \M^i$, and
\item $\omega\cdot \mathbf{i}$ is not isomorphic to any interval of $L^i_{\overline{c}x}$, $K^i$, or $M$ where $M\in \M^i$.
\end{itemize}
Fix $\overline{c}\in \N^k_+$. We define the following classes of linear orders:
\begin{equation}\label{eqt:lo_L_i+1}
    \L^{i+1}_1(\overline{c}) = \{ \omega\cdot \mathbf{i} + L^i_{\overline{c}x}\mid x\in \N_+ \}, \qquad
    \L^{i+1}_2 = \{ \omega\cdot \mathbf{i} + M\mid M\in \M^i\},  \qquad
    \L^{i+1}_3 = \{\omega\cdot \mathbf{i} + K^i\} .
\end{equation}
The linear orders $L^{i+1}_{\overline{c}}$, $K^{i+1}$, and $M^{i+1}$ 
are defined as shuffle sums of unions of the above classes of linear orders:
\begin{equation} \label{order-for-i+1}
    L^{i+1}_{\overline{c}} = \Shuf(\L^{i+1}_1(\overline{c})\cup \L^{i+1}_2), 
\qquad  K^{i+1} = \Shuf( \L^{i+1}_2 \cup \L^{i+1}_3), \qquad M^{i+1} = \Shuf(\L^{i+1}_2).
\end{equation}
Recall that the set $\M^i$ is a singleton for $i > 1$, consisting of
$M^i$.  The next lemma can be proved similarly as
Lemma~\ref{lem:shuf_finite}.

\begin{lemma}\label{lem:shuf_inf}
  Suppose $\L_1$ and $\L_2$ are two countable classes of linear orders
  such that each $L\in \L_1\cup \L_2$ is isomorphic to a linear order
  of the form $\omega\cdot \mathbf{i} + K$, where $\omega\cdot
  \mathbf{i}$ is not isomorphic to any interval of $K$. Then
  \[
      \L_1=\L_2 \ \Longleftrightarrow \ \Shuf(\L_1)\cong \Shuf(\L_2)\ .
  \]
  If $\Shuf(\L_1)$ contains an interval isomorphic to
  $\omega\cdot(\mathbf{i+1})$, then there is a linear order $K$ with
  $\omega\cdot(\mathbf{i+1})+K\in\L_1$.
\end{lemma}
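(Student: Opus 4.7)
The plan is to mirror the proof of Lemma~\ref{lem:shuf_finite}, exploiting the structural hypothesis on the pieces to recover their isomorphism types from the shuffle sum. The direction $\L_1=\L_2\Rightarrow\Shuf(\L_1)\cong\Shuf(\L_2)$ is immediate from the definition, so I focus on the converse and on the second assertion. Throughout, let $f\colon\Shuf(\L_1)\to\Shuf(\L_2)$ be an isomorphism and, for $q\in\Q$, write $L^1_q$ and $L^2_q$ for the pieces of the two shuffle sums at coordinate~$q$.

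The technical heart of the argument is the following observation: for every $L=\omega\cdot\mathbf{i}+K\in\L_1\cup\L_2$, any interval $I$ of $L$ isomorphic to $\omega\cdot\mathbf{i}$ contains the last $\omega$-block of the initial summand $\omega\cdot\mathbf{i}$. Indeed, $I$ cannot be contained in $K$ by hypothesis, so $I$ meets the initial $\omega\cdot\mathbf{i}$; convexity forces $I\cap(\omega\cdot\mathbf{i})$ to be a final segment, starting in some block $k\in\{1,\dots,i\}$. If $I\subseteq\omega\cdot\mathbf{i}$ then $k=1$ and $I$ is the whole initial $\omega\cdot\mathbf{i}$. Otherwise $I=I_1+I_2$ with $I_1\cong\omega\cdot(\mathbf{i-k+1})$ and $I_2$ a non-empty initial segment of $K$; the order-type identity $\omega\cdot(\mathbf{i-k+1})+I_2\cong\omega\cdot\mathbf{i}$ forces $I_2\cong\omega\cdot(\mathbf{k-1})$, so $K$ has $\omega\cdot(\mathbf{k-1})$ as initial segment, which is compatible with the hypothesis only when $k\le i$. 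In either case $I$ contains all of blocks $k,\dots,i$, and in particular the last one. Consequently, no $L\in\L_1\cup\L_2$ contains two disjoint intervals of type $\omega\cdot\mathbf{i}$.

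Now let $L^1_q$ be any piece of $\Shuf(\L_1)$. If $f(L^1_q)$ met two distinct pieces of $\Shuf(\L_2)$, density of $\Q$ would force it to contain infinitely many entire pieces of $\Shuf(\L_2)$, each contributing its initial $\omega\cdot\mathbf{i}$ as a disjoint copy. Pulling back along $f^{-1}$ then produces infinitely many pairwise disjoint $\omega\cdot\mathbf{i}$-intervals inside $L^1_q$, contradicting the previous paragraph. Hence $f(L^1_q)$ lies inside a single piece $L^2_{q'}$; by symmetry $f^{-1}(L^2_{q'})$ lies in a single piece, forcing equality and an induced isomorphism $L^1_q\cong L^2_{q'}$. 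This gives $\L_1\subseteq\L_2$, and symmetry yields $\L_1=\L_2$.

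For the final assertion, suppose $I\subseteq\Shuf(\L_1)$ is an interval isomorphic to $\omega\cdot(\mathbf{i+1})$. The intervals of $\omega\cdot(\mathbf{i+1})$ isomorphic to $\omega\cdot\mathbf{i}$ are precisely the final segments starting in its second $\omega$-block and therefore pairwise meet, so the same density argument confines $I$ to a single piece $L^1_q=\omega\cdot\mathbf{i}+K_q$. Decomposing $I=I_1+I_2$ as above, the equation $I_1+I_2\cong\omega\cdot(\mathbf{i+1})$ together with the hypothesis that $\omega\cdot\mathbf{i}$ is not an interval of $K_q$ forces $I_2$ to have $\omega$ as initial segment; hence $K_q$ itself begins with $\omega$ and $L^1_q\cong\omega\cdot(\mathbf{i+1})+K'$ for some $K'$, as required. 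The main technical obstacle throughout is getting this interval-counting right, in particular the straddling case, where an interval of $\omega\cdot\mathbf{i}+K$ of type $\omega\cdot\mathbf{i}$ (or $\omega\cdot(\mathbf{i+1})$) crosses the boundary between $\omega\cdot\mathbf{i}$ and $K$.
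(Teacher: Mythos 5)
Your proof is correct and follows essentially the same strategy as the paper: confine the image of each piece to a single piece via a density argument keyed on (non-)disjointness of $\omega\cdot\mathbf{i}$-intervals, then decompose intervals straddling the boundary between $\omega\cdot\mathbf{i}$ and $K$. One small improvement over the paper's wording: you correctly observe that $\omega\cdot\mathbf{i}+K$ may contain \emph{many} intervals of type $\omega\cdot\mathbf{i}$ but that they all contain a cofinal tail of the initial summand and hence pairwise meet, whereas the paper loosely asserts there is ``exactly one'' such interval; the net contradiction with infinitely many disjoint ones is of course the same. For the final assertion the paper instead argues that $(\Q,\le)$ cannot embed into $\omega\cdot(\mathbf{i+1})$ to get confinement to a single piece, while you reuse the pairwise-intersection count; both are fine.
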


\begin{proof} If $\L_1=\L_2$, then it is clear that $\Shuf(\L_1)\cong
  \Shuf(\L_2)$. Conversely, suppose $f$ is an isomorphism from
  $\Shuf(\L_1)$ to $\Shuf(\L_2)$. We prove that $\L_1=\L_2$. By
  symmetry we only need to prove that $\L_1\subseteq \L_2$.

  Say $\L_j=\{L_{j,s}\mid s\in \N \}$ for $j\in \{1,2\}$. Intuitively,
  for $j\in \{1,2\}$, $\Shuf(\L_j)$ can be viewed as obtained by
  replacing each $q\in \Q$ with a linear order $L(j,q)\cong
  L_{j,c(q)}$, where $c$ is a dense $\N$-coloring.  Fix $q\in
  \Q$. Suppose $f(L(1,q))$ contains elements in $L(2,p)$ and $L(2,p')$
  for $p,p'\in \Q$ with $p<p'$. Then in $f(L(1,q))$ there are
  infinitely many disjoint intervals that are isomorphic to $\omega\cdot
  \mathbf{i}$, while in $L(1,q)$ there is exactly one such interval, a
  contradiction. Therefore $f$ maps $L(1,q)$ into $L(2,p)$ for some
  $p\in \Q$.

  If $f(L(1,q)) \varsubsetneq L(2,p)$, then $f^{-1}(L(2,p))$ contains
  an element $x\notin L(1,q)$.  The argument from the previous
  paragraph with $f$ replaced by $f^{-1}$ again leads to a
  contradiction.  Therefore $f(L(1,q))=L(2,p)$. This means that for
  all $L\in \L_1$, there is $L'\in \L_2$ such that $L\cong L'$ and the
  lemma is proved.

  Let $I\cong\omega\cdot(\mathbf{i+1})$ be some interval in
  $\Shuf(\L_1)$. First suppose there are $p<r$ in $\Q$ such that $I$
  intersects $L(1,p)$ and $L(1,r)$. But then $L(1,q)\subseteq I$ for
  all $p<q<r$, implying that $(\Q,\le)$ embeds into
  $I\cong\omega\cdot(\mathbf{i+1})$ which is impossible.  Hence there
  is some $q\in\Q$ with $I\subseteq L(1,q)\in\L_1$. Then there is a
  linear order $K$ such that $L(1,q)=\omega\cdot\mathbf{i}+K$. Since
  $\omega\cdot\mathbf{i}$ (let alone $\omega\cdot(\mathbf{i+1})$) is
  no interval in $K$, the interval $I$ has to intersect the initial
  segment $\omega\cdot\mathbf{i}$ of $L(1,q)$. But then $\omega$ has
  to be an initial segment of $K$, i.e.,
  $L(1,q)=\omega\cdot(\mathbf{i+1})+K'$ for some linear order
  $K'$.\qed
\end{proof}
Now notice that $\omega\cdot \mathbf{(i+1)}$ is not isomorphic to any interval of 
$L^{i+1}_{\overline{c}}$, $K^{i+1}$, or $M^{i+1}$ (each of the orders $L^i_{\overline{c}x}$, $K^i$, and $M \in \M^i$ is a shuffle sum
and therefore does not start with $\omega$). 
Hence (P3) holds for $i+1$. 
Furthermore, the following holds:
\begin{eqnarray*}
P_{i+1}(\overline{c}) \ & \Longleftrightarrow & \ \exists x \in \N_+ : P_{i+1}(\overline{c},x) \\
 & \Longleftrightarrow & \ \exists x \in \N_+ : L^i_{\overline{c}x} \cong K^i \\
 & \Longleftrightarrow & \  \L^{i+1}_3\subseteq \L^{i+1}_1(\overline{c}) \\
 & \Longleftrightarrow & \  \L^{i+1}_1(\overline{c})\cup \L^{i+1}_2 = \L^{i+1}_2\cup \L^{i+1}_3 \\
 & \stackrel{\text{Lemma~\ref{lem:shuf_inf}}}{\Longleftrightarrow} & \ L^{i+1}_{\overline{c}}\cong K^{i+1} \\
\neg P_{i+1}(\overline{c}) \ & \Longleftrightarrow & \ \forall x \in \N_+ : \neg P_{i+1}(\overline{c},x) \\
 & \Longleftrightarrow & \ \forall x \in \N_+  \; \exists M \in \M^i : L^i_{\overline{c}x} \cong M \\
 & \Longleftrightarrow & \  \L^{i+1}_1(\overline{c})\cup \L^{i+1}_2 = \L^{i+1}_2 \\
 & \stackrel{\text{Lemma~\ref{lem:shuf_inf}}}{\Longleftrightarrow} & \ L^{i+1}_{\overline{c}}\cong M^{i+1}  
 \end{eqnarray*}
 We have shown (P1) and (P2) for $i+1$ in case $i$ is odd.

\subsubsection{Second induction step: from $P_i$ to $P_{i+1}$ for $i$ even} 
\label{second-induction-step}

Let $i \geq 1$ be even and
consider the $\Pi^0_{i+1}$-predicate $P_{i+1}$. 
Again, we write $k$ for $n-i$. For all 
$\overline{c}\in \N^{k}_+$, $P_{i+1}(\overline{c})$ is 
logically equivalent to $\forall x: P_i(\overline{c},x)$. 
Since $i$ is even, we must have $i \geq 2$. Therefore 
the set $\M^i$ is a singleton, consisting of the linear order $M^i$.

Fix $\overline{c}\in \N^{k}_+$. Define the classes of 
linear orders $\L^{i+1}_1(\overline{c})$, $\L^{i+1}_2$, and $\L^{i+1}_3$ using the same definition as in (\ref{eqt:lo_L_i+1}). 
The linear orders $L^{i+1}_{\overline{c}}$, $K^{i+1}$, and $M^{i+1}$ are defined as follows:
\[
    L^{i+1}_{\overline{c}} = \Shuf(\L^{i+1}_1(\overline{c}) \cup \L_3^{i+1}), 
\qquad K^{i+1} = \Shuf( \L_3^{i+1}), \qquad M^{i+1} = \Shuf(\L_2^{i+1}\cup
\L_3^{i+1}) .
\]
Again, $\omega\cdot \mathbf{(i+1)}$ is not isomorphic to any interval of $L^{i+1}_{\overline{c}}$, $K^{i+1}$, or $M^{i+1}$. 
Hence (P3) holds for $i+1$. Furthermore, the following holds:
\begin{eqnarray*}
P_{i+1}(\overline{c}) \ & \Longleftrightarrow & \ \forall x \in \N_+ : P_i(\overline{c},x) \\
 & \Longleftrightarrow & \ \forall x \in \N_+ : L^i_{\overline{c}x} \cong K^i \\
 & \Longleftrightarrow & \  \L^{i+1}_1(\overline{c})\cup \L^{i+1}_3 = \L^{i+1}_3 \\
 & \stackrel{\text{Lemma~\ref{lem:shuf_inf}}}{\Longleftrightarrow} & \ L^{i+1}_{\overline{c}}\cong K^{i+1} \\
\neg P_{i+1}(\overline{c}) \ & \Longleftrightarrow & \ \exists x \in \N_+ : \neg P_i(\overline{c},x) \\
 & \Longleftrightarrow & \ \exists x \in \N_+ : L^i_{\overline{c}x} \cong M^i \\
 & \Longleftrightarrow & \  \L^{i+1}_1(\overline{c})\cup \L^{i+1}_3 = \L^{i+1}_2\cup \L^{i+1}_3 \\
 & \stackrel{\text{Lemma~\ref{lem:shuf_inf}}}{\Longleftrightarrow} & \ L^{i+1}_{\overline{c}}\cong M^{i+1}
\end{eqnarray*}
We have shown (P1) and (P2) for $i+1$ in case $i$ is even.
This finishes the construction and proof for (P1), (P2), and (P3) in the inductive step.

\subsection{Automaticity}

To construct automatic presentations of the linear orders from the
previous section, we first fix some notations.
For $\overline{c} = (c_1,\ldots,c_k) \in \N_+^k$ and a symbol $a$, we re-define $a^{\overline{c}}$ as the word
\[
    a^{c_1}\sharp \cdots a^{c_k} \sharp \in \{a,\sharp\}^*.
\]
Recall that Lemma~\ref{lm:equiv-runs} described a way to represent a polynomial 
$p(\overline{x})\in \N[\overline{x}]$ in $k$ variables using the number of accepting runs of an 
automaton $\A[p(\overline{x})]$. The next lemma re-states Lemma~\ref{lm:equiv-runs} 
with respect to the new definition of $a^{\overline{c}}$.

\begin{lemma}\label{lem:lo_runs}
  {}From a polynomial $p(\overline{x})\in \N[\overline{x}]$ in $k$
  variables, one can effectively construct a non-deterministic
  automaton $\A[p(\overline{x})]$ on alphabet $\{a,\sharp\}$ such that
  $L(\A[p(\overline{x})]) = (a^+\sharp)^k$ and for all
  $\overline{c}\in \N^k_+:$ $\A[p(\overline{x})]$ has exactly
  $p(\overline{c})$ accepting runs on input $a^{\overline{c}}$.
\end{lemma}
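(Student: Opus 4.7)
The plan is to mimic the proof of Lemma~\ref{lm:equiv-runs} essentially verbatim, replacing the convolution alphabet $\Sigma_k^a$ by the single-tape alphabet $\{a,\sharp\}$ and adjusting the base cases so that $a^{\overline c} = a^{c_1}\sharp\cdots a^{c_k}\sharp$ plays the role that was previously played by $a^{c_1}\otimes\cdots\otimes a^{c_k}$. We proceed by induction on the construction of the polynomial $p$ from $1$, the variables $x_i$, and the operations $+$ and~$\cdot$.

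For the base case $p=1$, we let $\A[1]$ be a deterministic automaton accepting $(a^+\sharp)^k$; it has exactly one accepting run on every $a^{\overline c}$, so the count equals $1=p(\overline c)$. For $p=x_i$, I would build a deterministic automaton that reads $a^{c_1}\sharp\cdots a^{c_{i-1}}\sharp$ through a chain of control states that enforce the shape of the input up to that point, then enters a distinguished state $q_1$ at the start of the $i^{\text{th}}$ block of $a$'s. Inside the $i^{\text{th}}$ block, $a$-transitions from $q_1$ nondeterministically either stay in $q_1$ or move to a new state $q_2$, while $a$-transitions from $q_2$ stay in $q_2$; the terminating $\sharp$ after the $i^{\text{th}}$ block is only enabled from $q_2$. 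The remainder of the word is read deterministically. On input $a^{\overline c}$ the automaton has $c_i$ many choices of where inside the $i^{\text{th}}$ block to switch from $q_1$ to $q_2$, and all these choices extend uniquely to an accepting run, giving exactly $c_i=p(\overline c)$ accepting runs.

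For the induction step, I take $\A[p_1+p_2] = \A[p_1]\uplus\A[p_2]$ and $\A[p_1\cdot p_2] = \A[p_1]\times\A[p_2]$, both on alphabet $\{a,\sharp\}$. By the observations in Section~\ref{sec:prelim} about $\uplus$ and $\times$, the number of accepting runs of $\A[p]$ on any word $u\in L(\A[p_1])\cap L(\A[p_2])$ is the sum (resp.\ product) of the numbers of accepting runs of $\A[p_1]$ and $\A[p_2]$ on $u$; since all $\A[p_j]$ constructed so far accept exactly $(a^+\sharp)^k$, this is guaranteed for every $u=a^{\overline c}$. Hence the run counts obey the recursion on~$p$ and equal $p(\overline c)$.

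I do not expect any real obstacle here; the content is the same as in Lemma~\ref{lm:equiv-runs}, and the only subtlety worth being careful about is the base case for $p=x_i$, where one must make sure that the nondeterminism is confined to the $i^{\text{th}}$ block of $a$'s and that the $\sharp$ separators are read deterministically, so that the counting really isolates the contribution of $c_i$ and does not pick up spurious runs from the structure imposed by $(a^+\sharp)^k$.
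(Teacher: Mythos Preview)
Your proposal is correct and follows essentially the same approach as the paper: both reduce to Lemma~\ref{lm:equiv-runs}, treating $+$ and $\cdot$ via $\uplus$ and $\times$ and adjusting only the base case $p=x_i$ so that the single nondeterministic switch happens inside the $i^{\text{th}}$ $a$-block, with the $\sharp$-separators driving a deterministic chain of control states. The paper's concrete automaton for $x_i$ (states $q_0,\dots,q_k,q'_i$ with $a$-self-loops everywhere and the unique $a$-transition $q_{i-1}\to q'_i$) is exactly a formalization of the mechanism you describe.
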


\begin{proof}
We use the same proof as for Lemma~\ref{lm:equiv-runs}. The only difference is when the polynomial 
$p(x_1,\ldots,x_k)$ is of the form $x_i$ for some $i\in \{1,\ldots,k\}$.
In this case, the automaton $\A[x_i]$ is $(S,I,\Delta,F)$ where 
$S=\{q_0,q_1,\ldots q_k,q'_i\}$, $I=\{q_0\}$, $F=\{q_k\}$ and the transition relation $\Delta$ is
\[
    \Delta =  \{(q_{j-1},\sharp,q_j)\mid 1 \leq j \leq k, j\neq i\} \cup
             \{(q,a,q)\mid q\in S\} \cup \{(q_{i-1}, a, q'_i), (q'_i,\sharp,q_{i})\} .
\]
It is easy to see that $L(\A[x_i]) = (a^+\sharp)^k$ and $\A[x_i]$ has exactly $c_i$ 
accepting runs on input $a^{\overline{c}}$ where $\overline{c}\in \N^k_+$.\qed
\end{proof}
{}From now on, when referring to $\A[p(\overline{x})]$, we always
assume it is defined in the sense of Lemma~\ref{lem:lo_runs} (as
opposed to Lemma~\ref{lm:equiv-runs}). Let $\A$ be a
non-deterministic finite automaton over the alphabet $\Sigma$ and let
$\Delta$ be the transition relation of $\A$.  Recall the definition of
the automaton $\Run_\A$ and the projection morphism $\pi : \Delta^*
\to \Sigma^*$ from Section~\ref{sec:equiv}.  Then, $\Run_\A$ is an
automaton over the alphabet $\Delta$.  Assume that a lexicographic
order $\leq_{\lex}$ has been defined on each of $\Sigma^*$ and
$\Delta^*$. Define the automatic linear order $\sqsubseteq$ on
$L(\Run_\A)$ such that for all $w,w'\in L(\Run_\A)$:
\begin{equation}\label{eqt:lo_sqsubseteq}
   w\sqsubseteq w' \ \Longleftrightarrow \ \pi(w) <_{\lex} \pi(w') \vee (\pi(w)=\pi(w') \wedge w\leq_{\lex} w').
\end{equation}
Let $\Sigma_i$ be the alphabet $\{\sharp, \$_1,\ldots, \$_{i-1}, \$, 0,1,a,b_1,b_2,b_3\}$. 
Fix the order $<$ on $\Sigma_i$ such that
\begin{equation} \label{def-alph-order}
   \$ < \$_1 < \cdots < \$_{i-1} < 0 < \sharp < a < b_1 < b_2 < b_3 < 1.
\end{equation} 
For any automaton $\A$ over $\Sigma_i$, fix an arbitrary order on the transition relation
$\Delta$ of $\A$. Let $\leq_{\lex}$ be the lexicographic orders on $\Sigma_i^*$ and $\Delta^*$ defined with respect to these orders, 
respectively. From now on, we will always let 
$\sqsubseteq$ be the linear order as defined in (\ref{eqt:lo_sqsubseteq}) with respect to $\leq_{\lex}$.
For a regular language $L \subseteq \Sigma^*$ let $\first(L) = \{ a \in
\Sigma \mid \exists w \in \Sigma^* : aw \in L \}$. 
For $u\in \Sigma^*$, we use $L[u]$ to
denote the language $u\Sigma^* \cap L$. 
Technically, in this section we prove by induction on $i$ the following statement:

\begin{proposition}\label{prop:lo_automaticity} We can compute automata $\A^i$ 
over $\Sigma_i$ such that:
\begin{enumerate}[(1)]
\item $L(\A^1) = ( (a^+\sharp)^n \cup b_1^+\sharp \cup b_2\sharp)\$ R$ for
  some regular language $R \subseteq \Sigma_1^+$
\item  If $i>1$, then $L(\A^i) = ( (a^+\sharp)^{n-i+1} \cup b_1 \sharp \cup b_2\sharp)\$ R$ 
for some regular language $R \subseteq \Sigma_i^+$
\item $L^i_{\overline{c}}\cong (\pi^{-1}(L(\A^i)[\overline{a}^{\overline{c}}])
     \cap L(\Run_{\A^i}); \sqsubseteq)$ for $\overline{c}\in \N^{n-i+1}_+$
\item $M^1_m\cong (\pi^{-1}(L(\A^1)[b_1^m \sharp])\cap L(\Run_{\A^1});\sqsubseteq)$ for $m\in \N_+$
\item  $M^i \cong (\pi^{-1}(L(\A^i)[b_1\sharp]) \cap L(\Run_{\A^i});\sqsubseteq)$ for $i>1$
\item $K^i \cong (\pi^{-1}(L(\A^i)[b_2\sharp])\cap L(\Run_{\A^i});\sqsubseteq)$
\end{enumerate}
Moreover, in (1) and (2) we have $\first(R) \subseteq \{0,1\}$. 
\end{proposition}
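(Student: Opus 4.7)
The plan is to prove Proposition~\ref{prop:lo_automaticity} by induction on $i$, arranging in each step that on each fixed projection $\pi(w)$ the accepting runs of $\A^i$ form a finite $\sqsubseteq$-chain whose length matches the size of the intended block in the corresponding shuffle sum, while the $\leq_{\lex}$-order on projections realises the dense coloring of $\Q$ underlying that shuffle.

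For the base case ($i=1$), I would apply Lemma~\ref{lem:lo_runs} once per class $\L^1_j$, $j\in\{1,2,3,4\}$, in~(\ref{class-L_1})--(\ref{class-L_4}) to the polynomial $C$ composed with the appropriate argument polynomials, obtaining sub-automata $\B_1,\B_2,\B_3,\B_4$ whose number of accepting runs on a parameter word of the form $a^{\overline x}\sharp a^{x_{\ell+1}}\sharp$ (resp.\ $a^x\sharp$, resp.\ $a^x\sharp a^y\sharp$) equals the cardinality of the intended $L[n_1,n_2]$. I would then set $\A^1$ to be the disjoint union of four automata, each obtained from some $\B_j$ by prepending a ``prefix-gate'' that scans one of the admitted openings ($\overline a^{\overline c}\$$, $b_1^m\sharp\$$, or $b_2\sharp\$$) and then a common dense rational-encoding factor $\rho\in\{0,1\}^*1$. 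The remainder $R$ after $\$$ has the shape $\rho\cdot\gamma_j$ with $\gamma_j\in a\cdot\{a,\sharp\}^+$ encoding the parameters of class $j$, so that $\first(R)\subseteq\{0,1\}$ as required. Since $\{0,1\}^*1$ under $\leq_{\lex}$ is a countable dense linear order without endpoints, the projections $\pi(w)$ inherit (for each fixed prefix) a shuffle structure over $\Q$, and within each fixed projection the accepting runs form a chain of the intended cardinality. Items~(1), (3), (4) and (6) of the proposition for $i=1$ then follow from Lemma~\ref{lem:shuf_finite}.

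For the induction step from $i$ to $i+1$ I would extend the alphabet to $\Sigma_{i+1}$ by the fresh symbol $\$_i$, and build $\A^{i+1}$ so that each ``slot'' of the new shuffle in~(\ref{order-for-i+1}) consists of an initial $\omega\cdot\mathbf i$-block followed by a copy of one of the linear orders $L^i_{\overline c x}$, $M^i$ or $K^i$ drawn from $\A^i$. A single $\omega$ is encoded by $b_3^+$ (length gives the position), and the $i$ copies making up $\omega\cdot\mathbf i$ are stacked using prefixes $\$_i^j$ for $1\le j\le i$; the alphabet order~(\ref{def-alph-order}) makes $\$_i$ strictly smaller than $a,b_1,b_2,b_3$, so the whole $\omega\cdot\mathbf i$-block $\sqsubseteq$-precedes any continuation into the $\A^i$-component of the slot. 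The $\A^i$-component is plugged in after a rational-encoding factor $\rho'\in\{0,1\}^*1$, with the inductive hypothesis items~(3), (5), (6) used to identify the correct linear order; the assumption $\first(R)\subseteq\{0,1\}$ applied to $\A^i$ ensures that nesting $\A^i$ inside $\A^{i+1}$ does not disturb the prefix/tail lex split. Gating the three shuffle constituents exactly as in~(\ref{order-for-i+1}) (and the analogous formulas in Section~\ref{second-induction-step}) yields items~(2), (3), (5), (6) for $i+1$ via Lemma~\ref{lem:shuf_inf}.

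The main obstacle, and the reason for both the specific alphabet order~(\ref{def-alph-order}) and the condition $\first(R)\subseteq\{0,1\}$, is the three-tier bookkeeping of the lex order on projections: (a) different prefixes must control disjoint top-level blocks, (b) different $\rho$'s within a fixed prefix must densely shuffle the slots, and (c) inside a fixed slot the $\omega\cdot\mathbf i$-part must lex-precede the $L^i$/$K^i$/$M^i$-part while the same-projection runs form the intended finite chain. All three are forced by the symbol ordering $\$<\$_1<\cdots<\$_{i-1}<0<\sharp<a<b_1<b_2<b_3<1$ together with the structural shape of $R$; once the automata are assembled as above, the rest of the proof is a levelwise application of Lemmas~\ref{lem:shuf_finite} and~\ref{lem:shuf_inf} together with routine verifications of accepting-run counts.
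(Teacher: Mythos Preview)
Your overall architecture---induction on $i$, polynomial run-counting via Lemma~\ref{lem:lo_runs} for the base, prepending an $\omega\cdot\mathbf i$-block and re-shuffling for the step---is exactly the paper's. The paper isolates the shuffle construction as a standalone Lemma~\ref{lem:lo_aut_shuf} and encodes $\omega\cdot\mathbf i$ by $S_i=\$_1^+\cup\cdots\cup\$_i^+$ rather than your $\$_i^j b_3^+$, but these are cosmetic differences.

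There is, however, a real gap in your shuffle encoding. You place a single factor $\rho\in\{0,1\}^*1$ before the parameter block $\gamma\in D$ and argue that, since $(\{0,1\}^*1,\le_{\lex})\cong\Q$, the projections $\rho\gamma$ inherit a shuffle structure with coloring $\rho\gamma\mapsto\gamma$. But this coloring is not dense. Fix any $\rho$ and any $d_1<_{\lex}d_2$ in $D$; a short case analysis (using $0<\sharp,a,b_1,b_2,b_3<1$ from (\ref{def-alph-order})) shows that every word $\rho' d\in\{0,1\}^*1\cdot D$ lying strictly between $\rho d_1$ and $\rho d_2$ must have $\rho'=\rho$ and $d_1<_{\lex}d<_{\lex}d_2$. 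In particular the colour $d_1$ itself never reappears in the open interval $(\rho d_1,\rho d_2)$, so your order is not $\Shuf(\{\cdots\})$ as required. (Lemmas~\ref{lem:shuf_finite} and~\ref{lem:shuf_inf} do not help here: they compare two shuffles, they do not certify that a given construction \emph{is} one.)

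The paper's fix is to use $\sigma(D)=(\{0,1\}^*1\,D)^+$ instead of $\{0,1\}^*1\,D$; the Kleene plus is essential. With it, given $w_1<_{\lex}w_2$ in $\sigma(D)$ and any target colour $d\in D$, one can insert $w_1\,1\,d$ (or $w_1\,0^{j+1}1\,d$ when $w_1$ is a prefix of $w_2$); the new word lies in $\sigma(D)$, sits strictly between $w_1$ and $w_2$, and has last $D$-block equal to $d$. This is precisely Claim~2 in the proof of Lemma~\ref{lem:lo_aut_shuf}. Once you replace your $\{0,1\}^*1\cdot D$ by $\sigma(D)$ (in both the base case and the inductive step), the rest of your plan goes through and coincides with the paper's argument.
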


\subsubsection{Effective automaticity of shuffle sums} 

This section shows that we can construct an automatic presentation of
the shuffle sum of a class of automatic linear orders that are
presented in some specific way. For a regular language $D$ over an
alphabet, which does neither contain $0$ nor $1$, let $\sigma(D) =
(\{0,1\}^* 1 D)^+$.

\begin{lemma}\label{lem:lo_aut_shuf}
Let $\A$ be an automaton such that $L(\A) = E D \$ F$ for
regular languages $E,D \subseteq \{a,b_1,b_2,b_3,\sharp\}^*$ and 
$F \subseteq \Sigma^*_i$ (for some $1\leq i\leq n$).
We can effectively compute an automaton 
$\sigma(\A, E)$ such that $L(\sigma(\A,E)) = E\$ \sigma(D) \$ F$ and for all $u\in E$:
\[
    (\pi^{-1}(u\$ \sigma(D)\$ F) \cap L(\Run_{\sigma(\A,E)}) ; \sqsubseteq)
    \cong \Shuf(\{(\pi^{-1}(u v \$ F) \cap L(\Run_{\A}); \sqsubseteq) \mid v\in D\}).
\]
\end{lemma}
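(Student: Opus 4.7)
The plan is to build $\sigma(\A,E)$ by augmenting $\A$ with two deterministic buffer phases placed around a single, nondeterministically selected simulation of $\A$ on one $D$-factor of the input. Concretely, after $\A$ reads $u\in E$ reaching some state $p_1$, the automaton transitions via $\$$ into a pre-selection buffer attached to $p_1$ which deterministically absorbs $(\{0,1\}^*1D)^*\{0,1\}^*1$; immediately after each $1$ terminating a binary prefix it nondeterministically either continues absorbing the next $D$-block or switches into $\A$-state $p_1$ and simulates $\A$ on the upcoming $D$-word $v$. The simulation ends in some state $p_2$, from which the automaton enters a post-selection buffer that deterministically absorbs the remaining $(\{0,1\}^*1D)^*$, and on the second $\$$ returns to $\A$-state $p_2$ to finish the $\A$-run on $w$. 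By construction $L(\sigma(\A,E)) = E\$\sigma(D)\$F$, and each accepting run on $u\$s\$w$ is in bijection with a quadruple $(\alpha,v,\beta,r)$, where $s=\alpha v\beta$ is the factorization induced by the selection point (so $\alpha\in(\{0,1\}^*1D)^*\{0,1\}^*1$, $v\in D$, $\beta\in(\{0,1\}^*1D)^*$) and $r$ is the underlying accepting $\A$-run on $uv\$w$.

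For the shuffle-sum isomorphism I would fix $u\in E$ and identify the set of selection triples $(\alpha,v,\beta)$ (ranging over all $s\in\sigma(D)$ and all selections within $s$) with $\Q$ as follows: order the triples by $\leq_{\lex}$ on the string $\alpha v\beta$, with ties (arising from multiple selection points within a single $s$) broken by the selection index. Since $D$ is disjoint from $\{0,1\}$, each $s\in\sigma(D)$ has a unique factorization $p_1 1 v_1\cdots p_k 1 v_k$, so the tiebreaker is well defined. The resulting linear order is countable, dense, and has no endpoints — density follows because between any two distinct triples one can insert a suitable binary prefix followed by an arbitrary $D$-word — and for each $v\in D$ the triples with middle piece $v$ are themselves dense, giving a dense $D$-coloring $c$. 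Under this identification the run $(\alpha,v,\beta,r)$ corresponds to the element of $\sum_{x\in\Q}L_{c(x)}$ at position $(\alpha,v,\beta)$ whose value in $L_v$ is the pair $(w,r)$, with $w\in F$ read off the $F$-tail of the input.

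The main obstacle is to verify that this bijection is order-preserving with respect to $\sqsubseteq$. The verification splits into three cases: (i) runs with projections $u\$s\$w$ and $u\$s'\$w'$ with $s\neq s'$ are $\sqsubseteq$-ordered by $<_{\lex}$ on the $s$-components (since $u$ and the $\$$-separators agree), which matches the order on triples with distinct first strings $\alpha v\beta$; (ii) runs sharing the same projection but using different selection points $j<j'$ within the same $s$ must agree with the tiebreaker, which I would arrange by fixing the order on the transition alphabet of $\sigma(\A,E)$ so that the ``switch-to-simulation'' transition is $<_{\lex}$-smaller than its sibling ``continue-absorbing'' transition, making earlier selections produce lex-smaller runs; and (iii) runs sharing a full triple $(\alpha,v,\beta)$ differ only in the embedded $\A$-simulation segment on $uv\$w$, and provided the embedding preserves the transition alphabet order, the lex comparison on $\sigma(\A,E)$-runs restricts exactly to the $\sqsubseteq$-order on $L(\Run_\A)$ that defines $L_v$. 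Combining these cases gives the desired shuffle-sum isomorphism.
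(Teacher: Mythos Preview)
Your case analysis has a genuine gap: you handle (i) $s\neq s'$, (ii) same projection with different selections, and (iii) same triple, but you omit the case of runs with the \emph{same} $s$, \emph{different} selections, and \emph{different} $F$-tails $w$. This case breaks your bijection. Concretely, let $s$ have two $D$-blocks $v_1,v_2$ and take $w'<_{\lex} w$ in $F$. Consider a run $A$ using selection~$1$ and tail~$w$, and a run $B$ using selection~$2$ and tail~$w'$. Their projections are $u\$s\$w$ and $u\$s\$w'$, so $\pi(B)<_{\lex}\pi(A)$ and hence $B\sqsubset A$ in your automaton. But in your shuffle-sum model $A$ sits at $\Q$-position $(s,1)$ and $B$ at $(s,2)$, so $A<B$. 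The map is not order-preserving.

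The root cause is the extra nondeterminism: your selection choice affects only the \emph{run}, not the \emph{projection}, so under $\sqsubseteq$ it is compared \emph{after} the $F$-tail $w$, whereas in your $\Q$-indexing you placed it \emph{before} $w$. In fact your automaton realises, for each fixed $s$ with blocks $v_1,\dots,v_k$, the order $\sum_{w\in F}\bigl(R(v_1,w)+\cdots+R(v_k,w)\bigr)$ rather than $\sum_j\sum_{w} R(v_j,w)=\sum_j L_{v_j}$. The resulting condensation classes can have sizes that do not occur among the $|L_v|$, so your linear order is in general not even abstractly isomorphic to the shuffle sum. The paper avoids this entirely by \emph{not} selecting: it always uses the \emph{last} $D$-block of $s$ (the looping state guesses the final~$1$ before the second~$\$$, and wrong guesses die). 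Then each $s\in\sigma(D)$ contributes exactly one copy of $L_{c(s)}$, the map $c\colon s\mapsto\text{last block}$ is a dense $D$-coloring of $(\sigma(D);\leq_{\lex})\cong\Q$, and since $\$$ is minimal the $\sqsubseteq$-order decomposes cleanly as $\sum_{s\in\sigma(D)} L_{c(s)}$.
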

\begin{proof}
Suppose $\A=(S,I,\Delta, S_f)$.  Let $\Gamma = \{a,b_1,b_2,b_3,\sharp\}$.
We first define the automaton 
\[
\A' = (S\times\{1,2,\Loop\}, I\times \{1\},\Delta', S_f\times \{2\}).
\]
The transition function $\Delta'$ of $\A'$ is defined as follows:
\begin{eqnarray*}
    \Delta'  &=& \{((q,1),\alpha,(p,1)) \mid (q,\alpha,p)\in \Delta, \alpha\in\Gamma \} \cup \\
             & & \{((q,1),\$, (q,\Loop)) \mid q \in S  \} \cup \\
             & & \{((q,\Loop), \alpha, (q,\Loop)) \mid \alpha\in \Gamma\cup \{0,1\} \} \cup \\
             & & \{((q,\Loop), 1, (q,2)) \mid q\in S\} \cup \\
             & & \{((q,2),\alpha,(p,2)) \mid (q,\alpha,p)\in \Delta \}
\end{eqnarray*}
Intuitively, $\A'$ consists of two copies of $\A$ whose state spaces
are $S\times \{1\}$ and $S\times \{2\}$.  The automaton $\A'$ runs by
starting simulating $\A$ on the first copy. When the first $\$$ is
read, it stops the simulation. For this, the automaton stores the
state $q$ by moving to the ``looping state'' $(q, \Loop)$. The
automaton will stay in $(q,\Loop)$ unless 1 is read, in which case, it
may ``guess'' that it reads the last $1$ before the second $\$$ in the
input.  If so, it goes out of $(q,\Loop)$ and continues the simulation
in the second copy of $\A$ and accepts the input word if the run stops
at a final state.  If the guess was not correct and there is another
$1$ before the second $\$$ in the input, then the run will necessarily
reject.
 
It is easy to see that for all $u_1,u_2\in \Gamma^*$, 
$v\in (\Gamma\cup\{0,1\})^* 1$ and 
$u_3\in F$, the number of accepting runs of $\A'$ on 
$u_1\$ v u_2 \$ u_3$ is the same as the number of accepting runs of $\A$ on $u_1 u_2 \$ u_3$, i.e.,
\begin{equation}\label{eqt:lo_same run}
    |L(\Run_{\A'}) \cap \pi^{-1}(u_1\$v u_2 \$ u_3) |  
      = |L(\Run_{\A}) \cap \pi^{-1}(u_1 u_2\$ u_3)|.
\end{equation}
Let 
\[
\sigma(\A,E) \ = \ E\$ \sigma(D) \$ F \ \cap \ \A'.\]
Note that $L(\sigma(\A,E))=E\$ \sigma(D) \$ F$. Also, for any $u_1\in E$, 
$v\in (\{0,1\}^* 1 D)^* \{0,1\}^* 1$, $u_2\in D$, and $u_3\in F$, the number of accepting runs of 
$\sigma(\A,E)$ on $u_1\$ vu_2\$ u_3$ equals the number of accepting runs of $\A'$ on 
$u_1\$ v u_2 \$ u_3$, which is, by (\ref{eqt:lo_same run}), equal to the number of accepting runs of $\A$ on $u_1 u_2\$ u_3$.
Hence, we have
\begin{equation}\label{eqt:lo_same run2}
    |L(\Run_{\sigma(\A,E)}) \cap \pi^{-1}(u_1\$vu_2 \$ u_3) |  
        = |L(\Run_{\A})\cap \pi^{-1}(u_1 u_2\$ u_3)|.
\end{equation}
We prove the following claim.

\medskip

\noindent {\em Claim 1.}  For all $u_1\in E$, $v\in (\{0,1\}^* 1 D)^*\{0,1\}^*1$ and $u_2\in D$, 
\begin{equation}\label{eqt:lo_shuf}
    (\pi^{-1}(u_1\$ v u_2 \$ F)\cap L(\Run_{\sigma(\A,E)}) ; \sqsubseteq) \cong (\pi^{-1}(u_1 u_2 \$ F)\cap L(\Run_\A) ; \sqsubseteq).
\end{equation}
For $u\in F$, let $L(u)=(\pi^{-1}(u_1 u_2 \$ u)\cap L(\Run_\A); \sqsubseteq)$. 
Note that this is a finite linear order.
Consider the linear order $(F;\leq_{\lex})$.
By definition of $\sqsubseteq$,
\[
    (\pi^{-1}(u_1 u_2 \$ F)\cap L(\Run_\A) ; \sqsubseteq) \cong  \sum_{u\in F} L(u).
\]
By (\ref{eqt:lo_same run2}), $L(u)\cong  (\pi^{-1}(u_1\$ v u_2 \$ u)\cap L(\Run_{\sigma(\A,E)}) ; \sqsubseteq)$. By definition of $\sqsubseteq$ again,
\begin{align*}
    (\pi^{-1}(u_1\$ v u_2 \$ F)\cap L(\Run_{\sigma(\A,E)}) ; \sqsubseteq)  
   & \cong \sum_{u\in F} (\pi^{-1}(u_1\$ v u_2 \$ u)\cap L(\Run_{\sigma(\A,E)}) ; \sqsubseteq) \\ 
   &\cong  \sum_{u\in F} L(u) \\ 
   &\cong (\pi^{-1}(u_1 u_2 \$ F)\cap L(\Run_\A) ; \sqsubseteq) .
\end{align*}
This proves Claim~1.

\medskip

\noindent
Let $c:\sigma(D)\rightarrow D$ be the function such that
\[
    \forall x\in (\{0,1\}^*1D)^*\{0,1\}^* 1\ \forall u\in D:\ c(xu)=u.
\] 
{\em Claim 2.} $(\sigma(D);\leq_{\lex})\cong (\Q;\leq)$ and the function $c$ is a dense $D$-coloring of $(\sigma(D);\leq_{\lex})$.

\medskip

\noindent
First, for every $w=x1u\in \sigma(D)$ with $x\in (\{0,1\}^*1D)\{0,1\}^*$ and $u\in D$, we have
\[
    x01u <_{\lex} w <_{\lex} x11u.
\]
Hence, $(\sigma(D); \leq_{\lex})$ does not have a smallest or largest element. 
It remains to show that the linear order $(\sigma(D); \leq_{\lex})$ is densely 
$D$-colored by $c$ (this implies that $(\sigma(D); \leq_{\lex})$ is dense and 
hence, by Cantor's theorem, isomorphic to $(\Q;\leq)$). Consider two words 
$w_1,w_2\in \sigma(D)$ such that $w_1<_{\lex} w_2$. There are two cases.

\medskip

\noindent Case 1. $w_1 = x\alpha y$, $w_2=x\beta z$ for 
$x,y,z\in (\Gamma \cup\{0,1\})^*$ and $\alpha,\beta\in \Gamma\cup\{0,1\}$ 
such that $\alpha<\beta$. In this case, for all $u\in D$, we have
\[
    w_1 <_{\lex} w_1 1 u <_{\lex} w_2 \quad\text{and}\quad w_1 1 u \in
    \sigma(D) .
\]

\noindent Case 2. $w_2=w_1 x$ for some $x\in (\Gamma \cup\{0,1\})^+$. Since 
$w_2\in \sigma(D)$, we have $x\notin 0^*$. Say $x=0^j\alpha y$ for some $j\geq 0$, 
$\alpha \neq 0$ and $y\in (\Gamma \cup\{0,1\})^*$. 
We must have $\alpha \in \{1, a,b_1,b_2,b_3,\sharp\}$. Since every symbol
from this set is larger than $0$ (see (\ref{def-alph-order})) we must have $\alpha > 0$.
Then for all $u\in D$, we have
\[
    w_1 <_{\lex} w_1 0^{j+1} 1 u <_{\lex} w_2 \quad\text{and}\quad w_1 0^{j+1} 1 u \in
    \sigma(D).
\]
Hence $(\sigma(D);\leq_{\lex})$ is indeed densely colored by $c$. This proves Claim~2.

\medskip

\noindent
Since $\$$ is the minimum in the order $<$ on $\Sigma_i$, for any $u\in E$, $v,v'\in \sigma(D)$ and $w,w'\in F$, we have
\[
   v <_{\lex} v' \ \Longrightarrow \ u\$ v \$ w <_{\lex} u \$ v'\$ w'.
\]
Therefore,
\begin{eqnarray*}
    (\pi^{-1}(u\$ \sigma(D) \$ F) \cap L(\Run_{\sigma(\A,E)}) ; \sqsubseteq) 
     &\cong & \sum_{v\in \sigma(D)}(\pi^{-1}(u\$ v \$ F) \cap L(\Run_{\sigma(\A,E)}); \sqsubseteq) \\
     & \stackrel{\text{Claim 1}}{\cong} & \sum_{v\in \sigma(D)}(\pi^{-1}(u c(v) \$ F) \cap L(\Run_{\A}); \sqsubseteq) \\
     & \stackrel{\text{Claim 2}}{\cong} & \Shuf(\{(\pi^{-1}(u v\$ F)\cap L(\Run_\A);  \sqsubseteq) \mid v\in D\}) .
\end{eqnarray*}
\qed
\end{proof}

\subsubsection{Base case: automatic presentations for  $L^1_{\overline{c}},
  K^1$, and $M^1_m$}

Recall the notations from Section~\ref{construction-lo-ind-base}.
In the following, if $D$ is a regular language and
$\A$ is a finite non-deterministic automaton then we denote by $D \A$ a finite
automaton that results from the disjoint union of 
a deterministic automaton $\A_D$ for $D$ and the automaton 
$\A$ by adding all transitions $(q,a,p)$ where:
(i) $q$ is a state of $\A_D$, (ii) there is a transition 
$(q,a,q')$ in $\A_D$, where $q'$ is a final state of $\A_D$, 
and (iii) $p$ is an initial state of $\A$. 
Clearly, $L (D \A) = D L(\A)$. We will only apply this
definition in case the product $D L(A)$ is unambiguous.
This means that if $u \in D L(A)$ then there exists a unique
factorization $u = u_1 u_2$ with $u_1 \in D$ and $u_2 \in L(A)$.
The following lemma is easy to prove:

\begin{lemma} \label{lemma-unambiguous}
Let $\A$ be a finite non-deterministic automaton
and let $D$ be a regular language such that the 
product $D L(A)$ is unambiguous. Let $u_1 \in D$ 
and $u_2 \in L(\A)$. Then, the
number of accepting runs of $D \A$ on $u_1u_2$ 
equals the number of accepting runs of $\A$ on $u_2$.
\end{lemma}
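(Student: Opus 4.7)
The plan is to analyze the structure of an accepting run of $D\A$ on the concatenation $u_1 u_2$ and show that such a run is forced to split according to the factorization guaranteed by unambiguity.

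First I would describe what an accepting run of $D\A$ on an arbitrary word $w=w_1\cdots w_n$ must look like. By construction, the states of $\A_D$ and the states of $\A$ are disjoint, the unique initial state of $D\A$ lies in $\A_D$, and the accepting states of $D\A$ are exactly the accepting states of $\A$. The only transitions leaving the state space of $\A_D$ are the bridge transitions $(q,a,p)$ constructed from a $\A_D$-transition $(q,a,q')$ with $q'\in F_D$, landing in an initial state $p$ of $\A$; once the run has entered $\A$, it stays there. Hence an accepting run of $D\A$ on $w$ decomposes uniquely as: a run of $\A_D$ on some prefix $w_1\cdots w_{k-1}$ reaching a state $q$, followed by a bridge transition on $w_k$ to some initial state $p$ of $\A$ (witnessing $w_1\cdots w_k\in D$), followed by an accepting run of $\A$ on $w_{k+1}\cdots w_n$ starting in $p$.

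Second I would read off from this the corresponding factorization $w=u'_1u'_2$ with $u'_1=w_1\cdots w_k\in D$ and $u'_2=w_{k+1}\cdots w_n\in L(\A)$. Applied to $w=u_1u_2$, unambiguity of the product $DL(\A)$ then forces $u'_1=u_1$ and $u'_2=u_2$, so every accepting run of $D\A$ on $u_1u_2$ uses exactly the split dictated by the hypothesis.

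Finally I would count. Because $\A_D$ is deterministic and $u_1\in D$, there is a unique run of $\A_D$ on $u_1$, and it ends in a state $q$ from which, for each initial state $p$ of $\A$, there is exactly one bridge transition on the last letter of $u_1$ into $p$. Thus an accepting run of $D\A$ on $u_1u_2$ is determined by the choice of initial state $p$ together with an accepting run of $\A$ on $u_2$ starting in $p$, and each such pair yields exactly one accepting run of $D\A$. Summing over initial states, this count is precisely the number of accepting runs of $\A$ on $u_2$, as required.

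There is essentially no obstacle here beyond bookkeeping; the only subtle point is to justify that a $D\A$-run cannot wander between $\A_D$ and $\A$ more than once, which follows directly from the fact that the construction adds no transitions out of $\A$'s state space back into $\A_D$.
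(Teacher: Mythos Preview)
Your proof is correct and is exactly the routine verification the paper has in mind; the paper itself omits the proof entirely, simply declaring the lemma ``easy to prove.'' One small wording slip in your counting paragraph: the state $q$ from which the bridge transition fires is the state reached by $\A_D$ after reading $u_1$ \emph{minus its last letter} (as in your earlier decomposition), not the terminal state of the full $\A_D$-run on $u_1$; with that adjusted, your bijection between accepting runs of $D\A$ on $u_1u_2$ and accepting runs of $\A$ on $u_2$ goes through verbatim.
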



\begin{lemma}\label{lem:lo_poly}
{}From two given polynomials $q_1(\overline{x}),q_2(\overline{x})\in
\N[\overline{x}]$ in $k$ variables, one can effectively construct 
an automaton $\A[q_1,q_2]$ over the alphabet $\{a,\#,\$\}$ such that
\begin{itemize}
\item $L(\A[q_1,q_2]) = (a^+\sharp)^k \$$ and
\item For all $\overline{c}\in \N_+^k$,  
 $(\pi^{-1}(a^{\overline{c}}\$) \cap L(\Run_{\A[q_1,q_2]}); \sqsubseteq) \cong L[q_1(\overline{c}), q_2(\overline{c})]$.
\end{itemize}
\end{lemma}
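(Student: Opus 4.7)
The plan is to reduce Lemma~\ref{lem:lo_poly} to a direct application of Lemma~\ref{lem:lo_runs}. Since the pairing polynomial $C(x,y) = (x+y)^2 + 3x + y$ lies in $\N[x,y]$, the composition
$$
r(\overline x) \;:=\; C(q_1(\overline x), q_2(\overline x)) \;=\; (q_1(\overline x)+q_2(\overline x))^2 + 3\,q_1(\overline x) + q_2(\overline x)
$$
is again a polynomial in $\N[\overline x]$ in $k$ variables. I would first feed $r$ into Lemma~\ref{lem:lo_runs} to obtain a nondeterministic automaton $\A[r]$ over $\{a,\sharp\}$ with $L(\A[r]) = (a^+\sharp)^k$ and having exactly $r(\overline c) = C(q_1(\overline c),q_2(\overline c))$ accepting runs on $a^{\overline c}$ for every $\overline c \in \N_+^k$.

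Next I would obtain $\A[q_1,q_2]$ by simply appending a ``$\$$-tail'' to $\A[r]$: add one fresh state $q_\star$, add a single $\$$-transition from every former final state into $q_\star$, and declare $q_\star$ to be the unique new final state. The resulting language is $(a^+\sharp)^k \$$, as required. Because the appended $\$$-step is deterministic and funnels into a single target, each accepting run of $\A[r]$ on $a^{\overline c}$ extends in a unique way to an accepting run of $\A[q_1,q_2]$ on $a^{\overline c}\$$, and every accepting run of $\A[q_1,q_2]$ on $a^{\overline c}\$$ arises in this fashion. Consequently $\A[q_1,q_2]$ has exactly $C(q_1(\overline c), q_2(\overline c))$ accepting runs on $a^{\overline c}\$$.

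It remains to identify the linear order on the set $\pi^{-1}(a^{\overline c}\$) \cap L(\Run_{\A[q_1,q_2]})$. All words in this set share the common projection $a^{\overline c}\$$, so by (\ref{eqt:lo_sqsubseteq}) the restriction of $\sqsubseteq$ to this set coincides with the lexicographic order $\leq_{\lex}$ on $\Delta^*$, where $\Delta$ is the transition alphabet of $\A[q_1,q_2]$. But any total order on a finite set of cardinality $m$ is the finite linear order of length $m$; picking $m = C(q_1(\overline c), q_2(\overline c))$, this is precisely $L[q_1(\overline c), q_2(\overline c)]$ by definition, completing the verification.

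I do not anticipate a genuine obstacle here: the lemma is essentially a repackaging of Lemma~\ref{lem:lo_runs}, with the injective pairing polynomial $C$ used to encode the pair $(q_1(\overline c),q_2(\overline c))$ as a single run count, and a trivial $\$$-suffix to match the required language shape. The only minor bookkeeping point is the degenerate case $q_1 \equiv q_2 \equiv 0$, where $r$ is the zero polynomial and $L[0,0]$ is the empty linear order; here one takes $\A[q_1,q_2]$ to be any deterministic automaton accepting $(a^+\sharp)^k\$$ with no accepting states in its counting component, so that $\pi^{-1}(a^{\overline c}\$) \cap L(\Run_{\A[q_1,q_2]})$ is empty and trivially matches $L[0,0]$.
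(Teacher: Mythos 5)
Your proposal follows exactly the paper's construction: form $r = C(q_1,q_2) \in \N[\overline x]$, invoke Lemma~\ref{lem:lo_runs} to get $\A[r]$ with $r(\overline c)$ accepting runs on $a^{\overline c}$, append a fresh state $q_\$$ reached by $\$$-transitions from each former final state to make $q_\$$ the unique final state, and observe that $\sqsubseteq$ collapses to $\leq_{\lex}$ on a fiber of $\pi$ of fixed projection, so a finite fiber of size $C(q_1(\overline c),q_2(\overline c))$ is exactly $L[q_1(\overline c),q_2(\overline c)]$. The argument is correct and matches the paper's proof step for step; the brief remark on the degenerate zero-polynomial case is a harmless extra (the paper tacitly works with polynomials taking positive values on $\N_+^k$, which is all that is needed downstream).
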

\begin{proof}
We construct $\A[q_1,q_2]$ by taking a copy of
$\A[C(q_1(\overline{x}),q_2(\overline{x}))]$ (see Lemma~\ref{lem:lo_runs}), 
adding a new state $q_{\$}$ and transitions $(q_f, \$, q_{\$})$ for each accepting state 
$q_f$ in $\A[C(q_1(\overline{x}),q_2(\overline{x}))]$ and making $q_{\$}$ the only accepting state of $\A[q_1,q_2]$.
Note that for any $\overline{c}\in \N_+^k$, the number of accepting runs of $\A[q_1,q_2]$ 
on $a^{\overline{c}}\$$ is the same as the number of accepting 
runs of $\A[C(q_1(\overline{x}),q_2(\overline{x}))]$ on $a^{\overline{c}}$, which is equal to $C(q_1(\overline{c}), q_2(\overline{c}))$.
Hence, $(\pi^{-1}(a^{\overline{c}}\$) \cap L(\Run_{\A[q_1,q_2]}); \sqsubseteq)$ 
forms a copy of $L[q_1(\overline{c}), q_2(\overline{c})]$ and the lemma is proved.
\qed
\end{proof}
By Lemma~\ref{lem:lo_poly}, we can construct automata
$\A_1=\A[p_1(\overline{x})+x_{\ell+1}, p_2(\overline{x})+x_{\ell+1}]$,
where $\overline{x}\in \N_+^\ell$, over the alphabet
$\{a,\sharp,\$\}$, $\A_2=\A[x_1+x_2,x_1+x_2]$ over the alphabet $\{b_1,
\sharp, \$\}$, $\A_3=\A[x_1+x_2,x_1]$ over the alphabet $\{b_2,\sharp, \$\}$
and $\A_4=\A[x_1,x_1+x_2]$ over the alphabet $\{b_3,\sharp, \$\}$ such that:
\begin{eqnarray}
\forall \overline{c}\in \N_+^\ell \ \forall c_{\ell+1}\in \N_+ :
(\pi^{-1}(a^{\overline{c}c_{\ell+1}}\$) \cap L(\Run_{\A_1});\sqsubseteq)
 & \cong & L[p_1(\overline{c})+c_{\ell+1}, p_2(\overline{c})+c_{\ell+1}] \label{eq-A1} \\
\forall e_1,e_2\in \N_+ :
  (\pi^{-1}(b_1^{e_1}\sharp b_1^{e_2}\sharp\$) \cap L(\Run_{\A_2});
  \sqsubseteq) & \cong & L[e_1+e_2,e_1+e_2] \label{eq-A2} \\
\forall e_1,e_2\in \N_+ : 
 (\pi^{-1}(b_2^{e_1}\sharp b_2^{e_2}\sharp\$) \cap L(\Run_{\A_3});
 \sqsubseteq) & \cong & L[e_1+e_2,e_1]  \label{eq-A3} \\
\forall e_1,e_2\in \N_+ :
(\pi^{-1}(b_3^{e_1}\sharp b_3^{e_2}\sharp\$) \cap L(\Run_{\A_4}); \sqsubseteq)
 & \cong & L[e_1,e_1+e_2]   \label{eq-A4}
\end{eqnarray}
Define the following automata:
\[
    \A^0_1 = \A_1 \uplus ( (a^+\sharp)^n ( \A_3 \uplus \A_4 )), \qquad
    \A^0_2 = \A_2 \uplus (b_1^+ \sharp ( \A_3 \uplus \A_4 )), \qquad
    \A^0_3 = b_2 \sharp ( \A_3 \uplus \A_4).
\]
Note that
\begin{align*}
    L(\A^0_1) &= (a^+\sharp)^n \biggl( (a^+\sharp)^{\ell-n+1} \cup
    (b_2^+\sharp)^2  \cup (b_3^+\sharp)^2 \biggr) \$ , \\
    L(\A^0_2) &= b_1^+ \sharp \biggl( b_1^+ \sharp \cup (b_2^+\sharp)^2 \cup
    (b_3^+\sharp)^2\biggr)  \$ ,\\
    L(\A^0_3) &= b_2 \sharp \biggl((b_2^+\sharp)^2 \cup (b_3^+\sharp)^2 \biggr) \$ .
\end{align*}
Hence, applying Lemma~\ref{lem:lo_aut_shuf} (with $F=\{\varepsilon\}$), 
we can effectively construct automata $\A^1_j$ ($j\in \{1,2,3\}$) as follows:
\[
    \A^1_1 = \sigma( \A^0_1, (a^+\sharp)^n), \qquad \A^1_2 = \sigma(\A^0_2, b_1^+\sharp), \qquad \A^1_3 = \sigma(\A^0_3, b_2\sharp).
\]
For all $\overline{c}\in \N^n_+$ we get:
\begin{align*}
& (\pi^{-1}(L(\A^1_1)[a^{\overline{c}}]) \cap L(\Run_{\A^1_1}); \sqsubseteq) 
\stackrel{\text{Lemma~\ref{lem:lo_aut_shuf}}}{\cong} \\[2mm]
& \qquad \Shuf(\{(\pi^{-1}(a^{\overline{c}} v \$) \cap L(\Run_{\A^0_1}); \sqsubseteq) \mid v\in (a^+\sharp)^{\ell-n+1} \cup
    (b_2^+\sharp)^2  \cup (b_3^+\sharp)^2 \}) = \\[2mm]
& \qquad \Shuf(\{(\pi^{-1}(a^{\overline{c}\,\overline{e}} \$) \cap
                  L(\Run_{\A^0_1}); \sqsubseteq) \mid \overline{e} \in
                  \N_+^{\ell-n+1} \} \; \cup \\
& \qquad \phantom{\Shuf(} \{(\pi^{-1}(a^{\overline{c}} b_2^{e_1}\sharp b_2^{e_2}\sharp \$) \cap
                  L(\Run_{\A^0_1}); \sqsubseteq) \mid e_1,e_2 \in \N_+ \} \; \cup \\
& \qquad \phantom{\Shuf(} \{(\pi^{-1}(a^{\overline{c}} b_3^{e_1}\sharp b_3^{e_2}\sharp \$) \cap
                  L(\Run_{\A^0_1}); \sqsubseteq) \mid e_1,e_2 \in \N_+ \})\stackrel{\text{Lemma~\ref{lemma-unambiguous}}}{\cong}  \\[2mm]
& \qquad \Shuf(\{(\pi^{-1}(a^{\overline{c}\,\overline{e}} \$) \cap
                  L(\Run_{\A_1}); \sqsubseteq) \mid \overline{e} \in
                  \N_+^{\ell-n+1} \} \; \cup \\
& \qquad \phantom{\Shuf(} \{(\pi^{-1}(b_2^{e_1}\sharp b_2^{e_2}\sharp \$) \cap
                  L(\Run_{\A_3}); \sqsubseteq) \mid e_1,e_2 \in \N_+ \} \; \cup \\
& \qquad \phantom{\Shuf(} \{(\pi^{-1}(b_3^{e_1}\sharp b_3^{e_2}\sharp \$) \cap
                  L(\Run_{\A_4}); \sqsubseteq) \mid e_1,e_2 \in \N_+ \})
                  \stackrel{\text{(\ref{eq-A1})--(\ref{eq-A4})}}{=} \\[2mm]
& \qquad \Shuf(\{ L[p_1(\overline{c}, \overline{e})+e_{\ell+1}, 
   p_2(\overline{c},\overline{e})+e_{\ell+1}] \mid \overline{e} \in
                  \N_+^{\ell-n}, e_{\ell+1} \in \N_+ \} \; \cup \\
& \qquad \phantom{\Shuf(} \{ L[e_1+e_2,e_1]  \mid e_1,e_2 \in \N_+ \} \; \cup
\; \{ L[e_1, e_1+e_2]  \mid e_1,e_2 \in \N_+ \}) \stackrel{\text{(\ref{class-L_1})--(\ref{class-L_4})}}{=} \\[2mm]
& \qquad \Shuf( \L^1_1(\overline{c})\cup \L^1_3 \cup \L^1_4) \cong L^1_{\overline{c}}
\end{align*}
Similar calculations yield:
\begin{eqnarray*} 
\forall m\in \N_+ : 
(\pi^{-1}(L(\A^1_2)[b_1^m \sharp]) \cap L(\Run_{\A^1_2}); \sqsubseteq) 
& \cong & \Shuf(\L^1_2(m) \cup \L^1_3 \cup \L^1_4) \cong M^1_m \\ 
(\pi^{-1}(L(\A^1_3)[b_2\sharp])\cap L(\Run_{\A^1_3}); \sqsubseteq) & \cong & \Shuf(\L^1_3 \cup \L^1_4) \cong K^1
\end{eqnarray*}
Let $\A^1 =  \A^1_1\uplus \A^1_2 \uplus \A^1_3$. It is easy to see that  
$L(\A^1) = ( (a^+\sharp)^n \cup b_1^+ \sharp \cup b_2 \sharp) \$ R$ for some 
regular language $R \subseteq \Sigma_1^+$ with $\first(R)
\subseteq \{0,1\}$. Hence $\A^1$ satisfies the statement 
in Proposition~\ref{prop:lo_automaticity}.

\subsubsection{First inductive step: automatic presentations for
  $L^{i+1}_{\overline{c}}$, $K^{i+1}$, $M^{i+1}$ for $i$ odd}

Let $i\geq 1$ be an odd number. 
Recall the notations from Section~\ref{first-induction-step}.
We write $k$ for $n-i$. By applying the inductive
assumption, we obtain an automaton $\A^i$ such that  $L(\A^i) =
((a^+\sharp)^{k+1} \cup \beta\sharp \cup b_2 \sharp) \$ R$ for some regular 
language $R \subseteq \Sigma_i^*$ where $\beta= b_1^+$ if $i=1$, 
and $\beta=b_1$ otherwise. Furthermore, $\first(R)
\subseteq \{0,1\}$ and the following hold for $\A^i$:
\begin{eqnarray}
\forall \overline{c}\in \N^{k+1}_+:\ L^i_{\overline{c}} & \cong &
(\pi^{-1}(a^{\overline{c}} \$ R)\cap L(\Run_{\A^i}); \sqsubseteq) \label{eq-L^i} \\
\M^i &   \cong & \{(\pi^{-1}(u \sharp \$ R) \cap L(\Run_{\A^i});
\sqsubseteq) \mid u \in \beta\} \label{eq-M^i} \\
K^i & \cong & (\pi^{-1}(b_2\sharp\$ R)\cap L(\Run_{\A^i});\sqsubseteq) \label{eq-K^i}
\end{eqnarray}
For any $1\leq j\leq n$, let $S_j = \$_1^+\cup \cdots \cup \$_j^+$.
It is easy to see that
\begin{equation} \label{order on S_j}
    (S_j ;\leq_{\lex}) \cong \omega\cdot \mathbf{j}.
\end{equation}
Define the automata $\B^{i}_1$, $\B^{i}_2$, and $\B^{i}_3$ as 
\begin{eqnarray}
    \B^i_1 &=&  ((a^+\sharp)^{k+1}\$ R \ \cap \ \A^i) \ \uplus \
    (a^+\sharp)^{k+1}\$ S_i, \label{B^i_1} \\
    \B^i_2 &=&  (\beta\sharp\$ R \ \cap \ \A^i) \ \uplus\ \beta\sharp\$ S_i,
    \label{B^i_2} \\
    \B^i_3 &=&  (b_2\sharp\$ R \ \cap \ \A^i) \uplus \ b_2\sharp \$ S_i . \label{B^i_3}
\end{eqnarray}
By (\ref{def-alph-order}), (\ref{eq-L^i})--(\ref{order on S_j}), 
and the fact that $\first(R)\subseteq \{0,1\}$, we have
\begin{eqnarray}
\forall \overline{c}\in \N^{k+1}_+: 
( \pi^{-1}(a^{\overline{c}}\$ (S_i\cup R))\cap L(\Run_{\B^i_1}); \sqsubseteq) 
           & \cong & \omega\cdot \mathbf{i} + L^i_{\overline{c}}, \label{eq-B^i_1} \\
\{(\pi^{-1}(u \sharp\$ (S_i\cup R)) \cap L(\Run_{\B^i_2}); \sqsubseteq)\mid
           u \in \beta\} 
           & \cong & \{\omega\cdot \mathbf{i}+M\mid M\in \M^i\}, \label{eq-B^i_2} \\
    (\pi^{-1}(b_2\sharp\$ (S_i\cup R)) \cap L(\Run_{\B^i_3}); \sqsubseteq) 
           & \cong & \omega\cdot \mathbf{i} + K^i . \label{eq-B^i_3}
\end{eqnarray}
Now construct the automata $\C^{i}_1, \C^{i}_2$, and $\C^{i}_3$ as follows:
\[
\C^{i}_1 =  \B^i_1 \uplus (a^+\sharp)^{k}\B^i_2, \qquad 
\C^{i}_2 = b_1\sharp \B^i_2, \qquad 
\C^{i}_3 = b_2\sharp (\B^i_2\uplus \B^i_3).
\]
We have
\begin{eqnarray*}
    L(\C^{i}_1) & = & (a^+\sharp)^{k} (a^+\sharp \cup \beta\sharp) \$ (S_i\cup R), \\
    L(\C^{i}_2) & = & b_1 \sharp \beta \sharp \$ (S_i\cup R),\\
    L(\C^{i}_3) & = & b_2 \sharp (\beta \sharp \cup b_2 \sharp) \$ (S_i\cup R).
\end{eqnarray*}
Hence, we can apply Lemma~\ref{lem:lo_aut_shuf} to $\C^{i}_1$, $\C^{i}_2$, and
$\C^{i}_3$ (with $F = S_i \cup R$) to define the following automata:
\[
  \A^{i+1}_1 =  \sigma(\C^i_1, (a^+\sharp)^{k}), \qquad 
  \A^{i+1}_2 = \sigma(\C^i_2,b_1\sharp), \qquad \A^{i+1}_3 = \sigma(\C^i_3, b_2\sharp).
\]
For all $\overline{c}\in \N^k_+$ we get:
\begin{align*}
& (\pi^{-1}(L(\A^{i+1}_1)[a^{\overline{c}}]) \cap L(\Run_{\A^{i+1}_1}); \sqsubseteq) 
\stackrel{\text{Lemma~\ref{lem:lo_aut_shuf}}}{\cong} \\[2mm]
& \qquad \Shuf(\{(\pi^{-1}(a^{\overline{c}} v \$ (S_i \cup R)) \cap
L(\Run_{\C^i_1}); \sqsubseteq) \mid v\in a^+\sharp \cup \beta \sharp \}) = \\[2mm]
& \qquad \Shuf(\{(\pi^{-1}(a^{\overline{c}\,e} \$ (S_i \cup R)) \cap
                  L(\Run_{\C^i_1}); \sqsubseteq) \mid e \in \N_+ \} \; \cup \\
& \qquad \phantom{\Shuf(} \{(\pi^{-1}(a^{\overline{c}} u \sharp \$ (S_i \cup R)) \cap
                  L(\Run_{\C^i_1}); \sqsubseteq) \mid u \in \beta \}) \stackrel{\text{Lemma~\ref{lemma-unambiguous}}}{\cong}  \\[2mm]
& \qquad \Shuf(\{(\pi^{-1}(a^{\overline{c}\,e} \$ (S_i \cup R)) \cap
                  L(\Run_{\B^i_1}); \sqsubseteq) \mid e \in \N_+ \} \; \cup \\
& \qquad \phantom{\Shuf(} \{(\pi^{-1}(u \sharp \$ (S_i \cup R)) \cap
                  L(\Run_{\B^i_2}); \sqsubseteq) \mid u \in \beta \})
                           \stackrel{\text{(\ref{eq-B^i_1}), (\ref{eq-B^i_2})}}{=}
                           \\[2mm]
& \qquad \Shuf(\{\omega\cdot \mathbf{i}+\L^i_{\overline{c} e} \mid e\in \N_+\} \cup
    \{\omega\cdot \mathbf{i}+M \mid M\in \M^i\}) 
   \stackrel{\text{(\ref{eqt:lo_L_i+1}), (\ref{order-for-i+1})}}{\cong}  L^{i+1}_{\overline{c}} 
\end{align*}
Similarly, we can show:
\begin{eqnarray*}
(\pi^{-1}(L(\A^{i+1}_2)[b_1\sharp]) \cap L(\Run_{\A^{i+1}_2}); \sqsubseteq)
  & \cong & \Shuf(\{\omega\cdot \mathbf{i}+M\mid M\in \M^i\}) \cong  M^{i+1}, \\
(\pi^{-1}(L(\A^{i+1}_3)[b_2\sharp])\cap L(\Run_{\A^{i+1}_3}); \sqsubseteq) 
 & \cong & \Shuf(\{\omega\cdot \mathbf{i}+M\mid M\in \M^i\} \cup \{\omega\cdot
 \mathbf{i}+K^i\}) \cong  K^{i+1}.
\end{eqnarray*}
Let $\A^{i+1} =  \A^{i+1}_1\uplus \A^{i+1}_2 \uplus \A^{i+1}_3$. 
It is easy to see that  $L(\A^{i+1}) = ((a^+\sharp)^{k} \cup
b_1\sharp \cup b_2\sharp) \$ R'$ 
for some regular language $R' \subseteq \Sigma_{i+1}^+$ with $\first(R')
\subseteq \{0,1\}$. 
Hence $\A^{i+1}$ satisfies the statement in Proposition~\ref{prop:lo_automaticity}.

\subsubsection{Second inductive step: automatic presentations for
  $L^{i+1}_{\overline{c}}$, $K^{i+1}$, $M^{i+1}$ for $i$ even}

Using the same technique,  we can construct 
automatic presentations for  $L^{i+1}_{\overline{c}}$ ($\overline{c}\in
\N^{k}_+$), $M^{i+1}$, and $K^{i+1}$ in case $i$ is even.
We first define the automata $\B^{i}_1$, $\B^{i}_2$, and $\B^{i}_3$
as in (\ref{B^i_1})--(\ref{B^i_3}), with $\beta = b_1$ this time.
Then we construct
\[
    \C^{i}_1 =  \B^{i}_1 \uplus (a^+\sharp)^{k} \B^{i}_3, \qquad 
    \C^{i}_2 = b_1\sharp (\B^{i}_2 \uplus \B^{i}_3), \qquad 
    \C^{i}_3 = b_2\sharp \B^i_3.
\]
We define the following automata by applying Lemma~\ref{lem:lo_aut_shuf}:
\[
    \A^{i+1}_1 =  \sigma(\C^{i}_1, (a^+\sharp)^{k}), \qquad 
    \A^{i+1}_2 = \sigma(\C^{i}_2,b_1\sharp), \qquad 
    \A^{i+1}_3 = \sigma(\C^{i}_3,b_2\sharp).
\]
By Lemma~\ref{lem:lo_aut_shuf}, it is easy to check the following:
\begin{eqnarray*}
\forall \overline{c}\in \N^{k}_+: (\pi^{-1}(L(\A^{i+1}_1)[a^{\overline{c}}])
\cap L(\Run_{\A^{i+1}_1}); \sqsubseteq) & \cong & 
\Shuf(\{\omega\cdot \mathbf{i}+\L^{i}_{\overline{c}x} \mid x\in \N_+\} \cup
\{\omega\cdot \mathbf{i} + K^{i}\})\\
& \cong & L^{i+1}_{\overline{c}}, \\
(\pi^{-1}(L(\A^{i+1}_2)[b_1\sharp]) \cap L(\Run_{\A^{i+1}_2}); \sqsubseteq) &
\cong & \Shuf(\{\omega\cdot \mathbf{i}+M^{i}\} \cup \{\omega\cdot \mathbf{i}+K^{i}\} ) \\
& \cong & M^{i+1}, \\
(\pi^{-1}(L(\A^{i+1}_3)[b_2\sharp]) \cap L(\Run_{\A^{i+1}_3}); \sqsubseteq) &
\cong & \Shuf(\{\omega\cdot \mathbf{i}+K^{i}\}) \\
& \cong & K^{i+1} .
\end{eqnarray*}
Let $\A^{i+1}=\A^{i+1}_1\uplus \A^{i+1}_2\uplus \A^{i+1}_3$. 
It is easy to see that $L(\A^{i+1})\subseteq ((a^+\sharp)^{k}\cup b_1\sharp
\cup b_2\sharp) \$ R'$ for some regular language $R' \subseteq \Sigma_{i+1}^+$
with $\first(R') \subseteq \{0,1\}$. 
Hence $\A^{i+1}$ satisfies the statement in
Proposition~\ref{prop:lo_automaticity}. 
This finishes the construction in the inductive step and hence 
the proof of Proposition~\ref{prop:lo_automaticity}.   Hence we obtain:

\begin{theorem}\label{thm:lo}
  The isomorphism problem for the class of automatic linear orders is at least as hard as 
  $\FOTh(\N;+,\times)$.
\end{theorem}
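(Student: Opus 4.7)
The plan is to set up a uniform many-one reduction from $\FOTh(\N;+,\times)$ to the isomorphism problem for automatic linear orders; Theorem~\ref{thm:lo} follows at once. Given a sentence $\varphi$ over the signature $(+,\times)$, I would first bring $\varphi$ into prenex normal form and, if necessary, pad the quantifier prefix with dummy quantifiers so that it has even length~$n$ and begins with~$\exists$. This presents $\varphi$ as a $\Sigma^0_n$-sentence with decidable matrix, and $n$ is computable from $\varphi$.

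Next, I would regard $\varphi$ as the unary $\Sigma^0_n$-predicate $P_n(x_0):=\varphi$ (which simply ignores its argument) and define $P_{n-1},P_{n-2},\ldots,P_1$ by successively stripping outer quantifiers, exactly as in the setup at the beginning of Section~\ref{sec-lin-order}. Feeding this sequence of predicates into the inductive construction of Sections~\ref{construction-lo-ind-base}, \ref{first-induction-step}, and~\ref{second-induction-step}, and invoking Proposition~\ref{prop:lo_automaticity}, I would effectively produce from $\varphi$ two automatic presentations $\mathcal{P}_1$ and $\mathcal{P}_2$ of the linear orders $L^n_1$ and $K^n$, respectively (using $c_0=1\in\N_+$ as the dummy argument of~$P_n$).

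By property~(P1) applied at level $i=n$, the sentence $\varphi$ holds if and only if $P_n(1)$ holds if and only if $L^n_1\cong K^n$. Hence the map $\varphi\mapsto(\mathcal{P}_1,\mathcal{P}_2)$ is a computable many-one reduction from $\FOTh(\N;+,\times)$ to the isomorphism problem for automatic linear orders, which is precisely the content of the theorem.

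The only real obstacle is uniformity of the pipeline in the level~$n$, and this has already been absorbed into Proposition~\ref{prop:lo_automaticity}: each of its $n$ inductive stages consists of a finite, effective sequence of automaton transformations (the $\sigma(\cdot,\cdot)$ operation, intersections and products with regular languages, and disjoint unions) applied to the automaton produced at the previous stage, so automatic presentations of $L^n_1$ and $K^n$ can be computed recursively in~$n$. A final sanity check---that $\mathcal{P}_1$ and $\mathcal{P}_2$ actually describe linear orders---is immediate from the definition of~$\sqsubseteq$ in~(\ref{eqt:lo_sqsubseteq}).
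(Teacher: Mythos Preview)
Your proposal is correct and is essentially the paper's own argument made explicit: the paper does not write out a proof of Theorem~\ref{thm:lo} but simply says ``Hence we obtain'' after Proposition~\ref{prop:lo_automaticity}, relying on the fact that the construction in Section~\ref{sec-lin-order} shows $\Sigma^0_n$-hardness uniformly in even~$n$. Your step of viewing an arbitrary arithmetic sentence $\varphi$ as a constant $\Sigma^0_n$-predicate and plugging it into the pipeline is exactly the intended reading of that uniformity claim.
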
 
In \cite{KhoRS05}, it is shown that every linear order has finite
FC-rank. We do not define the FC-rank of a linear order in general, see e.g.\ 
\cite{KhoRS05}. A linear order $(L,\leq)$ has FC-rank 1, if after identifying all
$x,y \in L$ such that the interval $[x,y]$ is finite, one obtains a dense
ordering or the singleton linear order.
The result of \cite{KhoRS05} mentioned above 
suggests that the isomorphism problem might be
simpler for linear orders of low FC-rank. We now prove that this is
not the case:

\begin{corollary}
  The isomorphism problem for automatic linear orders of FC-rank 1 is
  at least as hard as $\FOTh(\N;+,\times)$.
\end{corollary}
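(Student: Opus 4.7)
The plan is to revisit the construction from the proof of Theorem~\ref{thm:lo} and adapt it so that every linear order produced by the reduction has FC-rank~$1$, while preserving the uniformity in~$n$ that makes the reduction $\Sigma^0_n$-hard for every~$n$. The starting observation is that the base-case orders $L^1_{\overline c}$, $K^1$, and $M^1_m$ already have FC-rank~$1$: by Lemma~\ref{lem:shuf_finite} their one-step finite-interval condensations are isomorphic to~$\Q$, since they are shuffle sums of finite linear orders. Thus, only the inductive step needs modification. The problem lies in the marker $\omega\cdot\mathbf{i}$: it contributes $i$ separate classes to the one-step condensation, so for $i\ge 2$ the summand $\omega\cdot\mathbf{i} + L^i$ condenses to $\mathbf{i}+\Q$, which is not dense, and the corresponding shuffle inherits that defect.

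My proposal is to replace each marker $\omega\cdot\mathbf{i}$ by a linear order $\mu_i$ whose own one-step condensation is a singleton, chosen from $\{\omega,\omega^*,\Z\}$ (the only infinite types with this property). For any tail $T$ condensing to~$\Q$, the sum $\mu_i + T$ then condenses to $\mathbf{1}+\Q$, which is dense, so a shuffle of such summands condenses to~$\Q$ and stays within FC-rank~$1$. The three singleton-condensing infinite types can be rotated across the levels of the induction (and, where useful, combined with finite markers) so as to maintain the inductive invariant that the marker $\mu_{i+1}$ is not isomorphic to any interval of any linear order constructed at levels $\le i$, which is precisely what is needed to retain the distinguishing power of Lemma~\ref{lem:shuf_inf} in the revised construction.

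The main obstacle will be re-proving the analogue of Lemma~\ref{lem:shuf_inf} with these weaker markers. In the original proof, $\omega\cdot\mathbf{i}$ appears in each summand of the shuffle as its unique $\omega\cdot\mathbf{i}$-shaped convex subset, and this unicity is what pins down summand boundaries under an isomorphism between two shuffle sums. With a marker $\mu_{i+1}$ condensing only to a singleton, the required unicity has to be recovered from the combined pattern ``marker $\mu_{i+1}$ followed by a level-$i$ tail'', using the inductive invariant that $\mu_{i+1}$ does not occur as an interval inside any $L^j$, $K^j$, or $M^j$ for $j\le i$. Once this strengthened invariant is verified, the modified construction satisfies analogues of properties (P1)--(P3) of Section~\ref{sec-lin-order}, the automaticity transfer from Section~5.2 goes through with only notational adjustments (since the markers $\omega$, $\omega^*$, and $\Z$ are themselves automatic), and the conclusion of Theorem~\ref{thm:lo} is obtained within the class of automatic linear orders of FC-rank~$1$, yielding the corollary.
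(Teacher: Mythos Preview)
Your proposal takes a far more laborious route than the paper and, as written, has a real gap. The paper does not revisit the construction behind Theorem~\ref{thm:lo} at all; it gives a direct reduction from the isomorphism problem for \emph{arbitrary} automatic linear orders. For any automatic $(L,\le)$ one forms $K=((-1,0]+[1,2))\cdot L$, i.e.\ each point of $L$ is replaced by a rational copy of $(-1,0]\cup[1,2)$. Then $K$ is automatic, has FC-rank~$1$ (its finite-condensation is isomorphic to $(\Q,\le)$), and $L$ is first-order definable inside $K$ as the set of points possessing an immediate successor. Hence $L\cong L'$ iff the corresponding $K$'s are isomorphic, and the corollary follows immediately from Theorem~\ref{thm:lo}. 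No level-by-level reconstruction is needed.

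The gap in your plan is the marker supply. You correctly note that $\omega$, $\omega^*$, and $\Z$ are the only infinite order types whose finite-condensation is a singleton. Your inductive invariant demands that the level-$(i{+}1)$ marker $\mu_{i+1}$ not occur as an interval of any order built at levels $\le i$. But once a marker $\mu_j$ is used at level~$j$, it sits as the initial segment of every summand of the level-$j$ shuffle and hence as an interval of every level-$j$ order; since level-$j$ orders are nested as tails inside all orders at higher levels, $\mu_j$ then occurs as an interval at every level $\ge j$. With only three candidates, from level~$5$ onward no fresh marker is available and your invariant fails. The aside ``combined with finite markers'' does not help as stated: a finite block adjacent to $\omega$, $\omega^*$, or $\Z$ is absorbed into the same order type, and any more elaborate composite marker (say, a finite block separated by dense pieces) would require a genuinely new analogue of Lemma~\ref{lem:shuf_inf} together with a level-by-level analysis of which such patterns can arise where---none of which you have supplied. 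Without that, the induction stalls after three steps.
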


\begin{proof}
  We provide a reduction from the isomorphism problem for automatic
  linear orders (of arbitrary rank): if $(L,\le)$ is an automatic
  linear order, then so is $(K,\le)= ((-1,0]+[1,2)) \cdot (L,\le)$ (this
  linear order is obtained from $L$ by replacing each point with a
  copy of the rational numbers in $(-1,0]\cup[1,2)$). Then $(K,\le)$
  has FC-rank 1: Only the copies of $0$ and $1$ will be identified,
  and the resulting order is isomorphic to $(\mathbb{Q},\leq)$.
  Moreover, $(L,\le)$ is isomorphic to the set of all $x\in
  K$ satisfying $\exists z > x \; \forall y : (x<y\le z \to
  y=z)$. Hence $(L,\le)\cong(L',\le')$ if and only if
  $((-1,0]+[1,2)) \cdot (L,\le) \cong ((-1,0]+[1,2)) \cdot (L',\le')$, which
  completes the reduction.\qed
\end{proof}

\section{Conclusion}

This paper looks at the isomorphism problem of some typical classes of
automatic structures. Such classes include equivalence structures, successor
trees of height at most $n\in \N$, and linear orders. In particular, we
demonstrate, respectively, $\Pi^0_1$-completeness and $\Pi^0_{2n-3}$-completeness for the
isomorphism problem of the first two classes. The uniformity in our proof
shows that the isomorphism problem of automatic trees of finite height is
recursively equivalent to $\FOTh(\N;+,\times)$.
Similarly, we  prove that the isomorphism problem
of automatic linear orders is  at least as hard as $\FOTh(\N;+,\times)$.
The same technique is also used to proved that the isomorphism problem 
of recursive trees of height at most $n$ is $\Pi^0_{2n}$-complete. 

We conclude with an application of Theorems~\ref{thm:tree} and
\ref{thm:lo}. The following corollary shows that although automatic
structures look simple (especially for automatic trees), there may be
no ``simple'' isomorphism between two automatic copies of the same
structure. An isomorphism $f$ between two automatic structures
with domains $L_1$ and $L_2$, respectively, is 
a $\Sigma^0_k$-isomorphism, if the set $\{ (x,f(x)) \mid x \in L_1\}$
belongs to $\Sigma^0_k$.

\begin{corollary}
  For any $k\in \N$, there
  exist two isomorphic automatic trees of finite height (and two automatic
  linear orders) without any $\Sigma^0_k$-isomorphism.
\end{corollary}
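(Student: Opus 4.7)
The plan is a proof by contradiction exploiting Theorems~\ref{thm:tree}(2) and~\ref{thm:lo}. I sketch the tree case; the linear-order case is identical after swapping the two theorems.

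Fix $k\in\N$ and suppose, towards a contradiction, that every pair of isomorphic automatic trees of finite height admits a $\Sigma^0_k$-isomorphism. The key observation is that quantifying over $\Sigma^0_k$-functions can be encoded at a fixed level of the arithmetical hierarchy. Concretely, I would fix an effective enumeration $(\varphi_e)_{e\in\N}$ of all binary $\Sigma^0_k$-predicates on $\N\times\N$ (identifying finite words with naturals via some effective bijection). Then, for automatic presentations $\P_1,\P_2$ of trees of finite height with regular domains $L_1,L_2$ and recognizable edge relations $E_1,E_2$, the hypothesis rewrites $\S(\P_1)\cong\S(\P_2)$ as the existence of some $e\in\N$ such that $\varphi_e\cap(L_1\times L_2)$ is the graph of an isomorphism from $\S(\P_1)$ to $\S(\P_2)$.

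The condition on $e$ is a conjunction of three assertions: (i) $\varphi_e$ defines a total function $L_1\to L_2$; (ii) this function is a bijection onto $L_2$; and (iii) it preserves and reflects the edge relation. Each conjunct combines $\varphi_e\in\Sigma^0_k$ with the recursive sets $L_i,E_i$ via a bounded block of arithmetical quantifiers, so lies in $\Sigma^0_{k+c}$ for some absolute constant $c$ depending only on the (fixed, one-binary-relation) signature. The outermost $\exists e$ preserves this bound, so under our hypothesis the isomorphism problem for automatic trees of finite height would lie in $\Sigma^0_{k+c}$. This contradicts Theorem~\ref{thm:tree}(2), which asserts recursive equivalence with $\FOTh(\N;+,\times)$, a set outside every level of the arithmetical hierarchy.

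The only point of care is the quantifier counting that bounds $c$ uniformly in the presentations; this is routine because the signature is fixed and $L_i,E_i$ are recursive, but it is the one place where a careless formulation could blow $c$ up and has to be checked. No deeper obstacle is expected. For the linear-order case the identical reduction combined with Theorem~\ref{thm:lo} (hardness at least $\FOTh(\N;+,\times)$) yields the analogous contradiction.
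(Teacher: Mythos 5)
Your proposal is correct and follows essentially the same route as the paper's proof: enumerate the binary $\Sigma^0_k$-predicates $P_e$ effectively, express ``$\exists e : P_e$ is the graph of an isomorphism'' as an arithmetical statement (the paper pins the constant $c$ down to $2$, so the statement is $\Sigma^0_{k+2}$), and derive a contradiction with Theorem~\ref{thm:tree}. The only cosmetic difference is that the paper invokes part (1) of Theorem~\ref{thm:tree} by choosing a height $n$ with $2n-3\ge k+4$ so that the isomorphism problem on $\T_n$ is $\Sigma^0_{k+3}$-hard, whereas you appeal to part (2) (recursive equivalence with $\FOTh(\N;+,\times)$, hence non-arithmetical), which avoids picking $n$ but yields the same contradiction.
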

\begin{proof} 
Let $T_1=(D_1;E_1)$ and $T_2=(D_2;E_2)$ be two automatic trees.
Let $P_1(x,y), P_2(x,y),\ldots$ be an effective enumeration of all
binary $\Sigma^0_k$-predicates. This means that from given $e \geq 1$ we can 
effectively compute a description (e.g. a $\Sigma_k$-formula over
$(\N;+,\times)$) of the predicate $P_e(x,y)$.
We define the statement $\text{iso}(T_1,T_2,k)$ as follows:
\begin{align*}
\exists e\ &
\begin{array}[t]{ll}
\forall x_1,x_2\in D_1 \; \exists y_1,y_2\in D_2: \ 
  &  P_e(x_1,y_1)
     \; \wedge \; P_e(x_2,y_2) \; \wedge \\
  &   (x_1 = x_2 \leftrightarrow y_1 = y_2) \; \wedge \;
   ((x_1,x_2)\in E_1 \leftrightarrow (y_1,y_2)\in E_2)
 \end{array}\\
&\land\forall y\in D_2 \; \exists x\in D_1: \ 
    P_e(x,y)
\end{align*}
Since $P_e$ is a $\Sigma^0_k$-predicate, this is a
$\Sigma^0_{k+2}$-statement, which expresses the existence of a
$\Sigma_k^0$-isomorphism from $T_1$ to $T_2$.

By Theorem~\ref{thm:tree}, there is a natural number $n$ such that the
isomorphism problem on the class $\T_n$ of automatic trees of height at most
$n$ is $\Sigma_{k+3}$-hard. 
If for all $T_1,T_2\in \T_n$ with $T_1\cong T_2$ there exists 
a $\Sigma^0_k$-isomorphism from $T_1$ to $T_2$, then the isomorphism problem on $\T_n$ reduces to checking
existence of a $\Sigma^0_k$-isomorphism, which is in $\Sigma^0_{k+2}$ by the
above consideration. Hence, there must be $T_1,T_2\in \T_n$ with 
$T_1\cong T_2$ but there is no $\Sigma^0_k$-isomorphism between them. 

The corollary for linear orders can be proved in 
the same way, where in the definition of 
$\text{iso}(T_1,T_2,k)$ we replace $(x_1,x_2)\in E_1 \leftrightarrow (y_1,y_2)\in E_2$ with 
$x_1 <_1 x_2 \leftrightarrow y_1 <_2 y_2$,
where $<_1$ and $<_2$ are the linear orders of $T_1$ and $T_2$, respectively. 
\qed
\end{proof}


\def\cprime{$'$}

\end{document}